\newcommand{\R}{\mathbb{R}}
\newcommand{\PR}{\mathbb{P}}
\newcommand{\N}{\mathbb{N}}
\newcommand{\lc}{\left[}
\newcommand{\rc}{\right]}
\newcommand{\lb}{\left\{}
\newcommand{\rb}{\right\}}
\newcommand{\prob}[1]{\PR\lb#1\rb}
\providecommand{\abs}[1]{\left|#1\right|}
\newcommand{\lrp}[1]{\left(#1\right)}
\newcommand{\lrc}[1]{\left[#1\right]}
\newcommand{\lrb}[1]{\left\{#1\right\}}
\newcommand{\E}[1]{\mathbb{E}\lc #1\rc}
\newcommand{\Ebig}[1]{\mathbb{E}\big[ #1\big]}
\newcommand{\V}[1]{\mathbb{V}\mathrm{ar}\lc #1\rc}
\newcommand{\norm}[1]{\left|\left| #1\right|\right|}
\newcommand{\OP}{\mathcal{O}_\mathbb{P}}
\newcommand{\op}{o_{\mathbb{P}}}
\newcommand{\Oh}{\mathcal{O}}
\DeclareFontFamily{OT1}{pzc}{}
\DeclareFontShape{OT1}{pzc}{m}{it}%
              {<-> s * [0.900] pzcmi7t}{}
\DeclareMathAlphabet{\mathpzc}{OT1}{pzc}%
                                 {m}{it}
\newcommand{\inprod}[2]{\left<#1,#2\right>}
\newcommand{\bX}{{\bf X}}
\newcommand{\bY}{{\bf Y}}
\newcommand{\bZ}{{\bf Z}}
\newcommand{\bL}{{\bf L}}
\newcommand{\bU}{{\bf U}}
\newcommand{\bV}{{\bf V}}
\newcommand{\bS}{{\bf S}}
\newcommand{\bR}{{\bf R}}
\newcommand{\bx}{{\bf x}}
\newcommand{\bz}{{\bf z}}
\newcommand{\by}{{\bf y}}
\newcommand{\be}{{\bf e}}
\newcommand{\bu}{{\bf u}}
\newcommand{\bt}{{\bf t}}
\newcommand{\bh}{{\bf h}}
\newcommand{\bXh}{\bX^\bh}
\newcommand{\brho}{\boldsymbol\rho}
\newcommand{\bpsi}{\boldsymbol\psi}
\newcommand{\bphi}{\boldsymbol\phi}
\newcommand{\btheta}{\boldsymbol\theta}
\newcommand{\Rho}{\mathcal P}
\newcommand{\inlaw}{\stackrel{\mathcal{L}}{\rightsquigarrow}}
\newcommand{\inprob}{\stackrel{p}{\rightarrow}}
\newcommand{\Medh}{\mbox{$\overline{\bX}_{x,\bh}$}}
\newcommand{\Hil}{\mbox{${\mathcal{H}}$}}
\newcommand{\conv}{\rightarrow}
\newtheorem{Prop}{Proposition} [section]
\newtheorem{Lemm}[Prop] {Lemma}
\newtheorem{Theo}[Prop]{Theorem}
\newtheorem{Coro}[Prop] {Corollary}
\newtheorem{Rem}{Remark}[Prop]
\newtheorem{Exam}[Prop] {Example}
\newtheorem{Algo}[Prop] {Algorithm}
\newif\iffigures
\newif\ifmain
\newif\ifsupplement
\begin{document}

\ifmain

\title{Goodness-of-fit tests for the functional linear model based on randomly projected empirical processes}
\setlength{\droptitle}{-1cm}
\predate{}%
\postdate{}%

\date{}

\author{
Juan A. Cuesta-Albertos$^{1}$, Eduardo Garc\'ia-Portugu\'es$^{2,3,5}$,\\
Manuel Febrero-Bande$^{4}$, and Wenceslao Gonz\'alez-Manteiga$^{4}$}

\footnotetext[1]{
Department of Mathematics, Statistics and Computer Science, University of Cantabria (Spain).}
\footnotetext[2]{
Department of Statistics, Carlos III University of Madrid (Spain).}
\footnotetext[3]{
UC3M-BS Institute of Financial Big Data, Carlos III University of Madrid (Spain).}
\footnotetext[4]{
Department of Statistics, Mathematical Analysis and Optimization, University of Santiago de Compostela (Spain).}
\footnotetext[5]{Corresponding author. e-mail: \href{mailto:edgarcia@est-econ.uc3m.es}{edgarcia@est-econ.uc3m.es}.}
\maketitle

\begin{abstract}
We consider marked empirical processes indexed by a randomly projected functional covariate to construct goodness-of-fit tests for the functional linear model with scalar response. The test statistics are built from continuous functionals over the projected process, resulting in computationally efficient tests that exhibit root-$n$ convergence rates and circumvent the curse of dimensionality. The weak convergence of the empirical process is obtained conditionally on a random direction, whilst the almost surely equivalence between the testing for significance expressed on the original and on the projected functional covariate is proved. The computation of the test in practice involves calibration by wild bootstrap resampling and the combination of several $p$-values, arising from different projections, by means of the false discovery rate method. The finite sample properties of the tests are illustrated in a simulation study for a variety of linear models, underlying processes, and alternatives. The software provided implements the tests and allows the replication of simulations and data applications. 
\end{abstract}
\begin{flushleft}
\small
	\textbf{Keywords:} Empirical process; Functional data; Functional linear model; Functional principal components; Goodness-of-fit; Random projections.
\end{flushleft}

\section{Introduction}
\label{sec:intro}

The term ``goodness-of-fit'' was introduced at the beginning of the twentieth century by Karl Pearson, and, since then, there have been an enormous amount of papers devoted to this topic: first, concentrated on fitting a model for one distribution function, and, later, especially after the papers of \cite{Bickel1973} and \cite{Durbin1973}, on more general models related with the regression function. Considering a regression model with random design $Y=m(X)+\varepsilon$, the goal is to test the goodness-of-fit of a class of parametric regression functions $\mathcal{M}_\Theta:=\lrb{m_{\btheta}:\btheta\in\Theta\subset\R^q}$ to the data. This is the testing of 
\[ 
H_0: m\in \mathcal{M}_\Theta \quad\text{vs.}\quad H_1: m\notin \mathcal{M}_\Theta 
\] 
in an omnibus way from a sample $\lrb{\lrp{X_i,Y_i}}_{i=1}^n$ from $(X,Y)$. Here, $m(x)=\E{Y\vert X=x}$ is the regression function of $Y$ over $X$, and $\varepsilon$ is a random error centred such that $\E{\varepsilon|\bX}=0$. The literature of goodness-of-fit tests for the regression function is vast, and we refer to \cite{Gonzalez-Manteiga2013} for an updated review of the topic.\\ 

Following the ideas on smoothing for testing the density function \citep{Bickel1973}, the pilot estimators usually considered for $m$ were nonparametric, for example, the Nadaraya--Watson estimator (\cite{Nadaraya1964}, \cite{Watson1964}): $\hat m_h(x):=\sum_{i=1}^n W_{ni}(x)Y_i$, with $W_{ni}(x):=K\big((x-X_i)/h\big)\big/\allowbreak\sum_{j=1}^n K\big((x-X_j)/h\big)$, where $K$ is a kernel function and $h$ is a bandwidth parameter. Using these kinds of pilot estimators, statistical tests were given by $T_{n}=d\lrp{\hat m,m_{\hat\btheta}}$, with $d$ some functional distance and $\hat \btheta$ an estimator of $\btheta$ such that $\sqrt{n}(\hat \btheta-\btheta)=\OP(1)$ under $H_0$. Alternatively, following the paper by \cite{Durbin1973} for testing about the distribution, the pilot estimator in the regression case was given by $I_n(x):=n^{-1}\sum_{i=1}^n \mathds{1}_{\lrb{X_i\leq x}}Y_i $, and the empirical estimation of the integrated regression function was then $I(x):=\E{\mathds{1}_{\lrb{X\leq x}}Y}$. \cite{Hardle1993}, using $\hat m_h$, and \cite{Stute1997}, using $I_n$, are key references for these two approximations in the literature, and were only the beginning of more than two hundred papers published in the last two decades \citep{Gonzalez-Manteiga2013}.\\ 

More recently, there has been a growing interest in testing possible structures in a regression setting in the presence of functional covariates: 
\begin{align} 
Y=m(\bX)+\varepsilon,\label{funcmod} 
\end{align} 
with $\bX$ a random element in a functional space, for example, in the Hilbert space $\Hil=L^2[0,1]$, and $Y$ a scalar response. This is the context of ``Functional Data Analysis'', which has received increasing attention in the last decade (see, e.g., \cite{Ramsay2005}, \cite{Ferraty2006}, and \cite{Horvath2012}) due to the practical need to analyse data generated by high-resolution measuring devices. \\ 

A simple null hypothesis $H_0$ considered in the literature for model \eqref{funcmod} is $H_0: m(\bX)=c$, where $c\in\R$ is a fixed constant, that is, the testing of significance of the covariate $\bX$ over $Y$. Following some of the ideas from \cite{Ferraty2006} on considering pseudometrics for performing smoothing with functional data, the test by \cite{Hardle1993} was adapted by \cite{Delsol2011}\nolinebreak[4] as 
\begin{align*} 
T_{n,h}^\mathrm{D}:=&\;\int \lrp{\hat m_h(\bx)-\bar Y}^2\omega(\bx)\,\mathrm{d}P_{\bX}(\bx)=d(\hat m_h,\bar Y),\\ 
\hat m_h(\bx):=&\;\sum_{i=1}^n\bigg[K\lrp{\frac{\bar d(\bx,\bX_i)}{h}}Y_i\bigg/\sum_{j=1}^n K\lrp{\frac{\bar d(\bX_i,\bX_j)}{h}}\bigg], 
\end{align*} 
with $\bar d$ a functional pseudometric, $K$ a kernel function adapted to this situation, $h$ a bandwidth parameter, $\omega$ a weight function, and $P_{\bX}$ the probability measure induced by ${\bX}$ in $\Hil$. Testing $H_0$ has also been considered by \cite{Cardot2003} and \cite{Hilgert2014}, not in an omnibus way, but inside a Functional Linear Model (FLM): $m(\bX)=\inprod{\bX}{\brho}$, where $\langle \cdot,\cdot \rangle$ represents the inner product in $\Hil$ and $\brho\in\Hil$ is the FLM parameter. For both approximations, omnibus or not, there have also been other papers which consider the functional response case; see, for example, \cite{Chiou2007}, \cite{Kokoszka2008}, and \cite{Bucher2011}. \\ 

The generalization of the hypothesis $H_0: m(\bX)=c$ to the general case 
\begin{align} 
H_0: m\in\mathcal{M}_\Rho=\lrb{m_{\brho}:\brho\in\Rho}\quad\text{vs.}\quad H_1:m\notin\mathcal{M}_\Rho,\label{null} 
\end{align} 
where $\Rho$ can be of either finite or infinite dimension, has been the focus of very few papers, particularly in the context of omnibus goodness-of-fit tests. In \cite{Delsol2011a}, a discussion is given, without theoretical results, for the extension of the testing of a more complex null hypothesis, such as an FLM. Only one paper is known to us in which the FLM hypothesis is analysed with theoretical results. In \cite{Patilea2012}, motivated by the smoothing test statistic considered by \cite{Zheng1996} for finite dimensional covariates, a test based on 
\begin{align*} 
T_{n,h}^\mathrm{P}:=\frac{1}{n(n-1)}\sum_{1\leq i\neq j\leq n}&\lrp{Y_i-\hat m_{H_0}(\bX_i)}\lrp{Y_j-\hat m_{H_0}(\bX_j)}\\ 
\times&\frac{1}{h}K\lrp{\frac{F_{n,\bh}(\inprod{\bX_i}{\bh})-F_{n,\bh}(\inprod{\bX_j}{\bh})}{h}}, 
\end{align*} 
is employed for checking the null hypothesis of linearity with $\hat m_{H_0}(\bX):=\inprod{\bX}{\hat\brho}$, $\hat \brho$ a suitable estimator of $\brho$, and $F_{n,\bh}$ the empirical distribution function of $\{\inprod{\bX_i}{\bh}\}_{i=1}^n$. In the same spirit, \cite{Lavergne2008} developed a test for the finite dimensional context, and \cite{Patilea2014} provided a test for functional response. From a different perspective, and motivated by the test given by \cite{Escanciano2006} for finite dimensional predictors, \cite{Garcia-Portugues:flm} constructed a test from the marked empirical process $I_{n,\bh}(x):=\frac{1}{n}\sum_{i=1}^n\mathds{1}_{\lrb{\inprod{\bX_i}{\bh}\leq x}}Y_i$, with $x\in\R$, and $\bh \in \mathcal{H}$. The test statistic averages the Cram\'er--von Mises norm of $I_{n,\bh}$ over a finite-dimensional, estimation-driven space of random directions $\bh$. Although this approach circumvents the technical difficulties that a marked empirical process indexed by $\bx\in\Hil$ would represent (a possible functional extension of the process given in \cite{Stute1997}), no results on the convergence of the statistic are available. \\ 

In this paper, we consider marked empirical processes indexed by random projections of the functional covariate. The motivation stems from the almost surely characterization of the null hypothesis \eqref{null} via a \textit{projected hypothesis} that arises from the conditional expectation on the projected functional covariate. This allows, conditionally on a randomly chosen $\bh$, the study of the weak convergence of the process $I_{n,\bh}(x)$ for hypothesis testing with infinite-dimensional covariates and parameters. As a by-product, we obtain root-$n$ goodness-of-fit tests that evade the curse of dimensionality and, contrary to smoothing-based tests, do not rely on a tuning parameter. In particular, we focus on the testing of the aforementioned hypothesis of functional linearity where, contrary to the finite dimensional situation, the functional estimator has a nontrivial effect on the limiting process and requires careful regularization. The test statistics are built by a continuous functional (Kolmogorov--Smirnov or Cram\'er--von Mises) over the empirical process and are effectively calibrated by a wild bootstrap on the residuals. To account for a higher power and less influence from $\bh$, we consider a number $K$ (not to be confused with a kernel function) of different random directions and merge the resulting $p$-values into a final $p$-value by means of the False Discovery Rate (FDR) of \cite{Benjamini2001}. The empirical analysis reports a competitive performance of the test in practice, with a low impact of the choice of $K$ above a certain bound, and an expedient computational complexity of $\Oh(n)$ that yields notable speed improvements over \cite{Garcia-Portugues:flm}.\\ 

The rest of the paper is organized as follows. The characterization of the null hypothesis through the projected predictor is addressed in Section \ref{sec:hyproj}, together with an application for the testing of the null hypothesis $H_0: m=m_0$ (Subsection \ref{subsec:simple}). Section \ref{Sec.lin} is devoted to testing the composite hypothesis $H_0: m\in\{\inprod{\cdot}{\brho}:\brho\in\Hil\}$. To that aim, the regularized estimator for $\brho$ of \cite{Cardot2007}, $\hat{\brho}$, is reviewed in Subsection \ref{subsec:rho}. The pointwise asymptotic distribution of the projected process is studied in Subsection \ref{subsec:point}, whereas Subsection \ref{subsec:weak} gives its weak convergence. Section \ref{sec:testing} describes the implementation of the test and other practicalities. Section \ref{sec:simu} illustrates the finite sample properties of the test through a simulation study and with some real data applications. Some final comments and possible extensions are given in Section \ref{sec:final}. Appendix \ref{ap:proofs} presents the main proofs, whereas the supplementary material contains the auxiliary lemmas and further results from the simulation study.  

\subsection{General setting and notation}

Some of the general setting and notation considered in the paper are introduced now, while more specific notation will be introduced when required. The random variable (r.v.) $\bX$ belongs to a separable Hilbert space $\Hil$ endowed with the inner product $\langle \cdot,\cdot \rangle$ and associated norm $\|\cdot\|$. The space $\Hil$ is a general real Hilbert space, but, for simplicity, it can be regarded as $\Hil=L^2[0,1]$. $Y$ and $\bX$ are assumed to be centred r.v.'s providing an independent and identically distributed (i.i.d.) sample $\{(\bX_i,Y_i)\}_{i=1}^n\subset\mathcal{H}\times\mathbb{R}$. $\varepsilon$ is a centred r.v. with variance $\sigma^2_\varepsilon$ that is independent from $\bX$ (the independence between $\varepsilon$ and $\bX$ is a technical assumption required for proving Lemmas \ref{LemmTn2.Rn} and \ref{LemmTn2.Sn}, while for the rest of the paper it suffices that $\E{\varepsilon|\bX}=0$). Given the $\mathcal{H}$-valued r.v. $\bX$ and $\bh \in \Hil$, we denote by $\bX^\bh:= \langle \bX,\bh\rangle$ the projected $\bX$ in the direction $\bh$,  by $F_{\bh}$ the distribution function of $\bX^\bh$, and by $P_\bX$ the probability measure of $\mathbf{X}$ in $\Hil$. Bold letters are used for vectors in $\mathcal{H}$ (mainly) or column vectors in $\R^p$ (whose transposition is denoted by $'$), and the type is clearly determined by the context. Capital letters represent r.v.'s defined on the same probability space $(\Omega, \sigma, \nu)$ and $\sim$ denotes equality in distribution. Weak convergence is denoted by $\inlaw$ and $D(\R)$ represents the Skorohod's space of \textit{c\`adl\`ag} functions defined on $\R$. Finally, we shall implicitly assume that the null hypotheses stated hold almost surely (a.s.). 

\section{Hypothesis projection}
\label{sec:hyproj}

The pillar of the goodness-of-fit tests we present is the a.s. characterization of the null hypothesis \eqref{null}, re-expressed as $H_0: \mathbb{E}[Y-m_{\brho}(\bX) | \bX]=0$ for some $\brho\in\Rho$, by means of the associated \textit{projected hypothesis on $\bh\in\Hil$}, defined as $H_0^\bh: \E{Y-m_{\brho}(\bX) | \bX^{\bh}}=0$. In the following, we identify $Y-m_{\brho}(\bX)$ by $Y$ for the sake of simplicity in notation. In this section, we give two necessary and sufficient conditions based on the projections of $\bX$ such that $\E{Y | \bX}=0$ holds a.s. \\ 

The first condition only requires the integrability of $Y$, but the condition needs to be satisfied for every direction $\bh$. 
\begin{Prop}
\label{PropNoRandom}
Assume that $\E{|Y|}<\infty$. Then
\[
\E{Y|\bX}=0\text{ a.s.} \iff \mathbb{E}\big[Y | \bXh\big]=0\text{ a.s. for every }\bh \in \Hil.
\]
\end{Prop}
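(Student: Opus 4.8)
The ($\Leftarrow$) direction is immediate. Assume $\E{Y\mid\bXh}=0$ a.s.\ for every $\bh\in\Hil$; in particular this holds for each coordinate direction of a fixed orthonormal basis $\{\be_k\}_{k\geq1}$ of $\Hil$. My goal is to show $\E{Y\mid\bX}=0$ a.s. I would recover the full conditioning from the one-dimensional projections by a standard measure-theoretic argument: the $\sigma$-algebra generated by $\bX$ coincides with the one generated by the countable family $\{\langle\bX,\be_k\rangle\}_{k\geq1}$, because $\bX$ is an $\Hil$-valued r.v.\ and $\Hil$ is separable. So it suffices to prove that $\E{Y\mid\langle\bX,\be_{k_1}\rangle,\ldots,\langle\bX,\be_{k_m}\rangle}=0$ a.s.\ for every finite subcollection, and then pass to the limit by the martingale/tower convergence argument (L\'evy's upward theorem), since these finite-dimensional conditional expectations form a martingale converging to $\E{Y\mid\bX}$ in $L^1$.

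The crux, therefore, is the ($\Rightarrow$) direction together with the step from one projection to a finite family. For the forward direction, if $\E{Y\mid\bX}=0$ a.s., then for any $\bh$ the tower property gives $\E{Y\mid\bXh}=\E{\E{Y\mid\bX}\mid\bXh}=0$ a.s., since $\sigma(\bXh)\subseteq\sigma(\bX)$; this is trivial. The substantive content is the converse, so let me refocus: assuming the hypothesis for all $\bh$, I want the finite-dimensional statement. Fix directions $\bh_1,\ldots,\bh_m$ and consider the $\R^m$-valued variable $\bV=(\langle\bX,\bh_1\rangle,\ldots,\langle\bX,\bh_m\rangle)$. The assumption that $\E{Y\mid\langle\bX,\bh\rangle}=0$ for \emph{every} $\bh$ applies in particular to every linear combination $\bh=\sum_j t_j\bh_j$, which gives $\E{Y\mid\langle\bX,\sum_j t_j\bh_j\rangle}=0$, i.e.\ $\E{Y\mid\langle\bV,\bt\rangle}=0$ for all $\bt\in\R^m$. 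So the problem reduces to the finite-dimensional statement: if a random vector $\bV\in\R^m$ and integrable $Y$ satisfy $\E{Y\mid\langle\bV,\bt\rangle}=0$ a.s.\ for every $\bt$, then $\E{Y\mid\bV}=0$ a.s.

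For this finite-dimensional reduction I would argue via characteristic functions. The hypothesis $\E{Y\mid\langle\bV,\bt\rangle}=0$ implies, for each fixed $\bt$ and every bounded measurable $g$, that $\E{Y\,g(\langle\bV,\bt\rangle)}=0$; taking $g(u)=e^{\mathrm{i}su}$ yields $\E{Y\,e^{\mathrm{i}s\langle\bV,\bt\rangle}}=0$ for all $s\in\R$, $\bt\in\R^m$. Absorbing $s$ into $\bt$, I get $\E{Y\,e^{\mathrm{i}\langle\bV,\bt\rangle}}=0$ for every $\bt\in\R^m$. This says the complex measure $A\mapsto\E{Y\,\ind{\bV\in A}}$ (finite because $\E{|Y|}<\infty$) has vanishing Fourier transform, so by uniqueness of the Fourier transform of finite signed/complex measures it is the zero measure. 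Hence $\E{Y\,\ind{\bV\in A}}=0$ for all Borel $A\subseteq\R^m$, which is exactly $\E{Y\mid\bV}=0$ a.s.

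The main obstacle is mostly bookkeeping rather than a single hard step: one must be careful that the one-dimensional hypothesis, stated as a conditional expectation, genuinely upgrades to the statement $\E{Y\,e^{\mathrm{i}s\langle\bV,\bt\rangle}}=0$ (this is where $\E{|Y|}<\infty$ is used, to ensure the relevant integrals converge and the defining measure is finite). Once the characteristic-function identity is in hand, the uniqueness theorem for Fourier transforms does all the work in finite dimensions, and the passage to the infinite-dimensional conclusion is the routine separability-plus-martingale argument sketched above. I expect no analytic difficulty beyond invoking these standard tools; the only point demanding care is the justification that conditioning on the growing finite families converges to conditioning on $\bX$, for which the separability of $\Hil$ is the essential hypothesis.
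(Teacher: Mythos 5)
Your proposal is correct and follows essentially the same route as the paper: the tower property for the trivial implication, reduction to finite-dimensional projections via linear combinations of directions, a characteristic-function/Fourier-uniqueness argument for the finite-dimensional step (the paper cites Theorem 1 of Bierens (1982) for precisely the equivalence you re-derive inline from uniqueness of Fourier transforms of finite complex measures), and L\'evy's upward theorem over the increasing $\sigma$-algebras generated by the first $p$ coordinates. No gaps.
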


The second condition, more adequate for application, \textit{somehow} generalizes Proposition \ref{PropNoRandom}, as it only needs to be satisfied for a randomly chosen $\bh$. In exchange, it holds only under some additional conditions on the moments of $\bX$ and $Y$. Before stating it, we need some preliminary results, the first taken from \cite{Cuesta-Albertos2007a} and included here for the sake of completeness. 

\begin{Lemm}[Theorem 4.1 in \cite{Cuesta-Albertos2007a}]
\label{T:cwgauss}
Let $\mu$ be a nondegenerate Gaussian measure on $\Hil$ and $\bX_1,\bX_2$ be two $\Hil$-valued r.v.'s defined on $(\Omega, \sigma, \nu)$. Assume that:
\begin{enumerate}[label=(\alph{*}),ref=(\alph{*})]
\item $m_k:=\int\|\bX_1\|^k\, \mathrm{d}\nu<\infty$, for all $k\geq1$, and 
$\sum_{k=1}^\infty m_k^{-1/k}=\infty$.\label{T:cwgauss:a}
\item
The set $\{ \bh \in \Hil: \bX^\bh_1 \sim \bX^\bh_2\}$ is of positive $\mu$-measure.\label{T:cwgauss:b}
\end{enumerate}
Then $\bX_1 \sim \bX_2$.
\end{Lemm}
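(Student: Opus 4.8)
The plan is to reduce the infinite-dimensional distributional equality to the agreement of \emph{all} one-dimensional projections, and to reach that agreement through moments. Indeed, a Borel probability measure on a separable Hilbert space is determined by its characteristic functional $\bh\mapsto\E{e^{i\inprod{\bX}{\bh}}}$, so it suffices to prove that $\bX_1^\bh:=\inprod{\bX_1}{\bh}$ and $\bX_2^\bh:=\inprod{\bX_2}{\bh}$ are identically distributed for \emph{every} $\bh\in\Hil$. The hypothesis \ref{T:cwgauss:b} grants this only on a set $H$ of positive $\mu$-measure, and the whole difficulty lies in propagating it to all of $\Hil$.

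First I would exploit the moments. For $\bh\in H$ one has $\bX_1^\bh\sim\bX_2^\bh$, hence $\E{(\bX_1^\bh)^k}=\E{(\bX_2^\bh)^k}$ for every $k\geq1$. The finiteness part of \ref{T:cwgauss:a}, together with $|\bX_j^\bh|\leq\|\bX_j\|\,\|\bh\|$, guarantees that the map $p_k(\bh):=\E{(\bX_1^\bh)^k}-\E{(\bX_2^\bh)^k}$ is well defined and is a continuous homogeneous polynomial of degree $k$ on $\Hil$ (it is the diagonal of the symmetric $k$-linear moment form). By construction, each $p_k$ vanishes on $H$.

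The key step, and the \emph{main obstacle}, is the following fact about Gaussian measures: a non-trivial continuous polynomial on $\Hil$ vanishes only on a $\mu$-null set when $\mu$ is a non-degenerate Gaussian measure. Granting this, since every $p_k$ vanishes on $H$ with $\mu(H)>0$, necessarily $p_k\equiv0$ on $\Hil$, so $\E{(\bX_1^\bh)^k}=\E{(\bX_2^\bh)^k}$ for every $\bh\in\Hil$ and every $k$. To prove the nullity fact I would fix $\bh_0$ in the (dense) Cameron--Martin space of $\mu$ with $p_k(\bh_0)\neq0$; by homogeneity $t\mapsto p_k(x+t\bh_0)$ is, for every $x$, a genuine degree-$k$ polynomial in $t$, whose leading coefficient is $p_k(\bh_0)$, and hence has at most $k$ real zeros. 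Disintegrating $\mu$ along the direction $\bh_0$ — where the conditional measures are non-degenerate one-dimensional Gaussians by Cameron--Martin quasi-invariance — each fibre meets the zero set in finitely many points, a one-dimensional Gaussian null set, and integrating yields $\mu(\{p_k=0\})=0$. This Gaussian-geometry input is precisely where the non-degeneracy of $\mu$ and the infinite-dimensional structure enter.

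Finally I would upgrade equal moments to equal laws and close the argument. Writing $n_k:=\E{|\bX_1^\bh|^k}\leq\|\bh\|^k m_k$, the divergence $\sum_k m_k^{-1/k}=\infty$ in \ref{T:cwgauss:a} passes to $\sum_k n_{2k}^{-1/(2k)}=\infty$ (the even subsequence of the non-increasing sequence $m_k^{-1/k}$ still diverges), so Carleman's condition holds for $\bX_1^\bh$ and the one-dimensional moment problem is determinate. Hence equality of all moments forces $\bX_1^\bh\sim\bX_2^\bh$, now for \emph{every} $\bh\in\Hil$. Letting $\bh$ range freely in the characteristic-functional identity $\E{e^{i\bX_1^\bh}}=\E{e^{i\bX_2^\bh}}$ and invoking the uniqueness theorem for characteristic functionals on separable Hilbert spaces gives $\bX_1\sim\bX_2$, as claimed.
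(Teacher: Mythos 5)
First, a point of reference: the paper itself does not prove this statement --- it is stated as Lemma \ref{T:cwgauss} precisely because it is imported verbatim as Theorem 4.1 of \cite{Cuesta-Albertos2007a}, ``included here for the sake of completeness''. So your attempt can only be compared with the strategy of that reference, which it does in fact reconstruct faithfully: reduce to one-dimensional projections, turn the coincidence of projections on a set of positive $\mu$-measure into the vanishing of the moment-difference polynomials $p_k$ on that set, use the fact that zero sets of non-trivial polynomials are null for non-degenerate Gaussian measures to upgrade this to $p_k\equiv 0$, and close with Carleman determinacy and uniqueness of characteristic functionals. The Carleman step (passing from $\sum_k m_k^{-1/k}=\infty$ to determinacy of each $\bX_1^\bh$) and the Cameron--Martin disintegration argument for the nullity of polynomial zero sets are both correct as written.

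The genuine gap is at the point where you declare $p_k(\bh):=\E{(\bX_1^\bh)^k}-\E{(\bX_2^\bh)^k}$ to be ``a continuous homogeneous polynomial of degree $k$ on $\Hil$''. Assumption \ref{T:cwgauss:a} controls the moments of $\bX_1$ only; nothing is assumed about $\E{\|\bX_2\|^k}$, and for $\bh$ outside the coincidence set $H$ the quantity $\E{|\bX_2^\bh|^k}$ may well be infinite, so $p_k$ is not defined --- let alone continuous --- on all of $\Hil$. In $\R^d$ this is harmless: a set of positive measure contains a basis, so all moments of $\bX_2$ are finite and the argument closes. In infinite dimensions, however, the set $L_k:=\{\bh\in\Hil:\E{|\bX_2^\bh|^k}<\infty\}$ is only a measurable linear subspace containing $H$; the Gaussian zero--one law gives $\mu(L_k)=1$, but $L_k$ need not be closed nor equal to $\Hil$, so your polynomial is defined merely $\mu$-almost everywhere. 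Both your continuity-based selection of a Cameron--Martin vector $\bh_0$ with $p_k(\bh_0)\neq0$ and the measurability of $\{p_k=0\}$ silently rely on the global definition and continuity of $p_k$. The argument can be repaired --- either by reducing to finite-dimensional marginals as in \cite{Cuesta-Albertos2007a}, or by working with $\mu$-a.e.-defined measurable polynomials and the corresponding zero--one law, observing that agreement of the projections for $\mu$-a.e. $\bh$ (a dense set of directions) already determines the characteristic functionals --- but as written this step does not follow from the stated hypotheses.
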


\begin{Rem}
The Gaussianity of $\mu$ in Lemma \ref{T:cwgauss} is not strictly required. It can be replaced by assuming a certain smoothness condition on $\mu$ (see, for instance, Theorem 2.5 and Example 2.6 in \cite{Cuesta-Albertos2007b}).
\end{Rem}

\begin{Rem}
Assumption \ref{T:cwgauss:a} in Lemma \ref{T:cwgauss} is not of a technical nature. According to Theorem 3.6 in \cite{Cuesta-Albertos2007a}, it becomes apparent that a similar condition is required. This assumption is satisfied if the tails of $P_{\bX_1}$ are light enough or if $\bX_1$ has a finite moment generating function in a neighbourhood of zero.
\end{Rem}
\begin{Lemm}
\label{Prop:Auxiliar}
If $\E{Y^2}<\infty$ and $\bX$ satisfies \ref{T:cwgauss:a} in Lemma \ref{T:cwgauss},
then $l_k:=\E{\|\bX\|^k  |Y|}\allowbreak < \infty$ for all $k\geq1$, and $\sum_{k=1}^\infty l_k^{-1/k}=\infty$.
\end{Lemm}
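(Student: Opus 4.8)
The plan is to reduce both assertions to a single Cauchy--Schwarz estimate controlling $l_k$ by the moments $m_k=\E{\|\bX\|^k}$ of $\|\bX\|$, and then to transfer the divergence hypothesis on $\sum_k m_k^{-1/k}$ to its even-indexed subsequence.

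First I would prove finiteness. Since $\E{Y^2}<\infty$, Cauchy--Schwarz applied to $\|\bX\|^k\cdot|Y|$ gives
\[
l_k=\E{\|\bX\|^k|Y|}\le\big(\E{\|\bX\|^{2k}}\big)^{1/2}\big(\E{Y^2}\big)^{1/2}=m_{2k}^{1/2}\big(\E{Y^2}\big)^{1/2}.
\]
As $m_{2k}<\infty$ for all $k$ by hypothesis \ref{T:cwgauss:a}, this already yields $l_k<\infty$, which is the first claim.

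The same inequality gives the lower bound $l_k^{-1/k}\ge m_{2k}^{-1/(2k)}\big(\E{Y^2}\big)^{-1/(2k)}$, and since $\big(\E{Y^2}\big)^{-1/(2k)}\to 1$, the divergence of $\sum_k l_k^{-1/k}$ reduces to that of $\sum_k m_{2k}^{-1/(2k)}$; that is, I must show that the even-indexed terms of $a_k:=m_k^{-1/k}$ are non-summable. This is the main obstacle: non-summability of $\sum_k a_k$ does not in general pass to a subsequence, so I would exploit the particular structure of $a_k$. By Lyapunov's inequality the map $k\mapsto m_k^{1/k}=\big(\E{\|\bX\|^k}\big)^{1/k}$ is non-decreasing, hence $(a_k)$ is a non-increasing sequence of non-negative reals.

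It then remains to verify the elementary fact that a non-increasing non-negative sequence $(a_k)$ with $\sum_k a_k=\infty$ satisfies $\sum_k a_{2k}=\infty$. I would compare the odd- and even-indexed partial sums: monotonicity gives $a_{2k-1}\le a_{2(k-1)}$ for $k\ge2$, so $\sum_{k=1}^N a_{2k-1}\le a_1+\sum_{k=1}^N a_{2k}$, and therefore $\sum_{k=1}^{2N}a_k\le a_1+2\sum_{k=1}^N a_{2k}$; letting $N\to\infty$ forces $\sum_k a_{2k}=\infty$. Feeding this back into the lower bound (and discarding the finitely many indices at which $\big(\E{Y^2}\big)^{-1/(2k)}$ could fall below, say, $1/2$) yields $\sum_k l_k^{-1/k}=\infty$, completing the proof. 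The degenerate case $\E{Y^2}=0$ is immediate, since then $Y=0$ a.s., $l_k=0$, and the series diverges trivially.
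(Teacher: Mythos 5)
Your proof is correct and follows essentially the same route as the paper: the Cauchy--Schwarz bound $l_k\leq m_{2k}^{1/2}\big(\E{Y^2}\big)^{1/2}$, Lyapunov/Jensen monotonicity of $k\mapsto m_k^{1/k}$, and the passage of divergence to the even-indexed subsequence. You merely make explicit the interleaving argument that the paper leaves implicit in its ``hence $\sum_{k}m_{2k}^{-1/(2k)}=\infty$'' step.
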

The second condition, and most important result in this section, is given as follows.
\begin{Theo}
\label{Th:basic}
Let $\mu$ be a nondegenerate Gaussian measure on $\Hil$. Assume that $\bX$ satisfies \ref{T:cwgauss:a} in Lemma \ref{T:cwgauss} and that $\E{Y^2} < \infty$. If we denote $\Hil_0:=\big\{\bh \in \Hil: \mathbb{E}\big[Y | \bXh\big]=0\text{ a.s.}\big\}$, then
\[
\E{Y | \bX}=0\text{ a.s.}\iff \Hil_0\text{ has positive $\mu$-measure.}
\]
\end{Theo}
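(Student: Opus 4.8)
The forward implication is immediate and requires nothing beyond the earlier results: if $\E{Y|\bX}=0$ a.s., then Proposition \ref{PropNoRandom} gives $\E{Y|\bXh}=0$ a.s. for \emph{every} $\bh\in\Hil$, so $\Hil_0=\Hil$ has full, hence positive, $\mu$-measure. The whole difficulty lies in the converse, and my plan is to reduce it to Lemma \ref{T:cwgauss} by manufacturing two suitable $\Hil$-valued random variables out of $\bX$ and $Y$.

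First I would pin down the harmless normalizations. Positivity of the $\mu$-measure of $\Hil_0$ forces $\Hil_0\neq\emptyset$; picking any $\bh\in\Hil_0$ and taking expectations in $\E{Y|\bXh}=0$ yields $\E{Y}=0$, so $\E{Y^+}=\E{Y^-}=:c$. If $c=0$ then $Y=0$ a.s. and there is nothing to prove, so I assume $c>0$. Writing $Y=Y^+-Y^-$, I would introduce the two probability measures on $\Hil$ given by $Q_1(A)=c^{-1}\E{Y^+\ind{\bX\in A}}$ and $Q_2(A)=c^{-1}\E{Y^-\ind{\bX\in A}}$, which are well defined since $\E{|Y|}<\infty$, and let $\bX_1,\bX_2$ be $\Hil$-valued random variables with laws $Q_1,Q_2$ (these exist because $\Hil$ is separable, and Lemma \ref{T:cwgauss} depends only on their laws).

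The crux is a dictionary between the projected hypothesis and identical distribution of projections. For a fixed $\bh$, the condition $\E{Y|\bXh}=0$ a.s. is equivalent to $\E{Y\,g(\bXh)}=0$ for every bounded Borel $g:\R\to\R$, that is, to $\int g\,\mathrm{d}Q_1^\bh=\int g\,\mathrm{d}Q_2^\bh$, where $Q_i^\bh$ denotes the law of $\bX_i^\bh=\langle\bX_i,\bh\rangle$ (the pushforward of $Q_i$ under $\bx\mapsto\langle\bx,\bh\rangle$), using $\E{Y^\pm g(\bXh)}=c\int g\,\mathrm{d}Q_i^\bh$ and $\E{Y^+}=\E{Y^-}$. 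Hence $\bh\in\Hil_0$ if and only if $\bX_1^\bh\sim\bX_2^\bh$, so that $\Hil_0=\{\bh\in\Hil:\bX_1^\bh\sim\bX_2^\bh\}$ is exactly the set appearing in assumption \ref{T:cwgauss:b}, which by hypothesis has positive $\mu$-measure.

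It then remains to verify assumption \ref{T:cwgauss:a} for $\bX_1$, and this is precisely the role of Lemma \ref{Prop:Auxiliar}. Since $m_k:=\int\|\bX_1\|^k\,\mathrm{d}\nu=c^{-1}\E{\|\bX\|^kY^+}\le c^{-1}\E{\|\bX\|^k|Y|}=c^{-1}l_k$, Lemma \ref{Prop:Auxiliar} gives $m_k<\infty$ for all $k$, and from $m_k\le c^{-1}l_k$ together with $\min(1,c)\le c^{1/k}$ I get $m_k^{-1/k}\ge\min(1,c)\,l_k^{-1/k}$, so $\sum_k m_k^{-1/k}=\infty$ follows from $\sum_k l_k^{-1/k}=\infty$. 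Both hypotheses of Lemma \ref{T:cwgauss} now hold, yielding $\bX_1\sim\bX_2$, i.e. $Q_1=Q_2$; unwinding the definitions gives $\E{Y^+\ind{\bX\in A}}=\E{Y^-\ind{\bX\in A}}$ for every Borel $A$, that is $\E{Y\ind{\bX\in A}}=0$ for all $A$, which is exactly $\E{Y|\bX}=0$ a.s. The two delicate points I anticipate are justifying the measure-theoretic dictionary rigorously (that the vanishing of the projected conditional expectation is equivalent to equality of the two tilted projected laws) and transferring the moment-growth condition from $l_k$ to $m_k$; the latter is handled cleanly by the domination above, and the former is the conceptual heart of the argument that makes Lemma \ref{T:cwgauss} applicable.
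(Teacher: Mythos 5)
Your proof is correct and follows essentially the same route as the paper's: both reduce the converse to Lemma \ref{T:cwgauss} by tilting $P_{\bX}$ with a positive/negative decomposition, checking the moment condition \ref{T:cwgauss:a} through Lemma \ref{Prop:Auxiliar} via the bound by $l_k$, and concluding equality of the two tilted laws. The only substantive difference is that you tilt by $Y^{\pm}$ (giving densities proportional to $\E{Y^{\pm}\mid\bX=\bx}$) whereas the paper tilts by $(\E{Y\mid\bX})^{\pm}$; since the two signed differences coincide, the arguments are interchangeable, and your test-function ``dictionary'' identifying $\Hil_0$ with $\{\bh:\bX_1^{\bh}\sim\bX_2^{\bh}\}$ is a clean, explicit substitute for the paper's assertion that the projected marginals of $\nu_{\Phi}^{+}$ and $\nu_{\Phi}^{-}$ coincide for $\bh\in\Hil_0$.
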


\begin{Coro}
\label{Coro1}
Under the assumptions of the previous theorem,
\[
\E{Y | \bX}=0
\text{ a.s.}\iff
\mu\big(\Hil_0\big)=1.
\]
\end{Coro}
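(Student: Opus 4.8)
The plan is to obtain the corollary essentially for free by combining the two characterizations already in hand, namely Proposition \ref{PropNoRandom} and Theorem \ref{Th:basic}. The statement is a strengthening of Theorem \ref{Th:basic}: there the right-hand side reads ``$\mu(\Hil_0)>0$'', and here it is promoted to ``$\mu(\Hil_0)=1$''. Since $\mu$ is a non-degenerate Gaussian, hence probability, measure on $\Hil$, so that $\mu(\Hil)=1$, it suffices to treat the two implications separately.

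The implication $\mu(\Hil_0)=1\Rightarrow\E{Y|\bX}=0$ a.s.\ is immediate: $\mu(\Hil_0)=1$ trivially gives $\mu(\Hil_0)>0$, and one then invokes the corresponding direction of Theorem \ref{Th:basic}. For the converse, $\E{Y|\bX}=0$ a.s.\ $\Rightarrow\mu(\Hil_0)=1$, the crucial observation is that Proposition \ref{PropNoRandom} delivers substantially more than positive measure. Under $\E{|Y|}<\infty$, which follows from the standing assumption $\E{Y^2}<\infty$ by Jensen's inequality, that proposition asserts the equivalence of $\E{Y|\bX}=0$ a.s.\ with $\mathbb{E}\big[Y\mid\bXh\big]=0$ a.s.\ for \emph{every} $\bh\in\Hil$. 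In set-theoretic terms this says exactly that $\Hil_0=\Hil$, whence $\mu(\Hil_0)=\mu(\Hil)=1$, as claimed.

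Taken together, Theorem \ref{Th:basic} and this corollary exhibit a zero--one dichotomy for $\mu(\Hil_0)$: the measure of the set of admissible projection directions is either $0$ or $1$, and never strictly in between. There is no genuine obstacle to surmount here; the only point worth flagging is that one must extract from Proposition \ref{PropNoRandom} the full set equality $\Hil_0=\Hil$, rather than the weaker positive-measure conclusion, which is what makes the upgrade from ``$>0$'' to ``$=1$'' possible.
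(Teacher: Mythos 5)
Your proof is correct and matches the paper's (implicit) argument: the paper gives no separate proof of the corollary, but inside the proof of Theorem \ref{Th:basic} it already notes that $\E{Y|\bX}=0$ a.s.\ forces $\mu(\Hil_0)=1$ via $\sigma(\bX^{\bh})\subset\sigma(\bX)$ — exactly the tower-property step that constitutes the easy direction of Proposition \ref{PropNoRandom}, which you invoke — while the converse is the \emph{if} part of Theorem \ref{Th:basic} applied to $\mu(\Hil_0)=1>0$. Your observation that this yields a zero--one law for $\mu(\Hil_0)$ is accurate and consistent with the paper's discussion following the corollary.
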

According to this corollary, if we are interested in testing the simple null hypothesis $H_0: \E{Y | \bX}=0$, then we can do so as follows: (\textit{i}) select, at random with $\mu$, a direction $\bh \in \Hil$; (\textit{ii}) conditionally on $\bh$, test the projected null hypothesis $H_0^\bh: \mathbb{E}\big[Y | \bX^\bh\big]=0$. The rationale is simple yet powerful: if $H_0$ holds, then $H_0^\bh$ also holds; if $H_0$ fails, then $H_0^\bh$ also fails $\mu$-a.s. In this case, with probability one, we have chosen a direction $\bh$ for which $H_0^\bh$ fails. Of course, the main advantage to testing $H_0^\bh$ over testing $H_0$ directly is that in $H_0^\bh$ the conditioning r.v. is real, which simplifies the problem substantially. 

\begin{Rem}
The set of directions for which $H_0$ is not congruent with $H_0^\mathbf{h}$ has measure zero. A concrete example of this set is given as follows. Suppose we are interested in testing if a random $p$-vector $\mathbf{X}$ is Gaussian. By the Cr\'amer--Wold device, $\mathbf{X}$ is Gaussian if and only if $\mathbf{a}'\mathbf{X}$ is Gaussian for any $\mathbf{a}\in\mathbb{R}^p$. However, by Theorem 3.6 in \cite{Cuesta-Albertos2007b}, it suffices that $\mathbf{h}'\mathbf{X}$ is Gaussian for a \emph{single}, \emph{randomly chosen} direction $\mathbf{h}\in\mathbb{R}^p$. Then the zero-measure set in which $H_0^\mathbf{h}$ and $H_0$ are incongruent (precisely, $H_0^\mathbf{h}$ holds, but $H_0$ does not) is the set of the \emph{projection counterexamples} $\{\mathbf{a}\in\mathbb{R}^p:\mathbf{a}'\mathbf{X}\text{ is Gaussian}, \mathbf{X}\text{ is not Gaussian}\}$. For example, for $\mathbf{X}\sim(\mathrm{Exp}(1),\,\mathcal{N}(0,1))$, the set is $\{(0,\lambda) : \lambda \in \R\}$. Obviously, if $\bh\in\R^2$ is chosen at random with a nondegenerate measure $\mu$, it is impossible that $\bh$ lies exactly on this line.
\end{Rem}

\subsection{Testing a simple null hypothesis}
\label{subsec:simple}

An immediate application of Corollary \ref{Coro1} is the testing of the simple null hypothesis $H_0: m=m_0$ via the empirical process $I_n$ of \cite{Stute1997}. Recall that other testing alternatives can be considered on the projected covariate due to the $\mu$-a.s. characterization. We refer to \cite{Gonzalez-Manteiga2013} for a review of alternatives. \\ 

For a random sample $\{(\bX_i,Y_i)\}_{i=1}^n$ from $(\bX,Y)$, we consider the empirical process of the regression conditioned on the direction $\bh$, 
\[
R_{n,\bh}(x):=n^{1/2}I_{n,\bh}(x)=n^{-1/2} \sum_{i=1}^n\mathds{1}_{\lrb{\bX_i^\bh\leq x}}Y_i,\quad x\in\R.
\]
Then the following result is trivially satisfied using Theorem 1.1 in \cite{Stute1997}. 
\begin{Coro} 
\label{coro:single} 
Under $H_0^\bh$ and $\E{Y^2}<\infty$, $R_{n,\bh}\inlaw \boldsymbol{\mathcal{G}}_1$ in $D(\R)$, with $\boldsymbol{\mathcal{G}}_1$ a Gaussian process with zero mean and covariance function $K_1(s,t):=\int_{-\infty}^{s\wedge t} \allowbreak\V{Y|\bXh=u}\allowbreak\, \mathrm{d}F_{\bh}(u)$.  
\end{Coro}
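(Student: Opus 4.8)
The plan is to observe that, once we condition on the randomly chosen direction $\bh\in\Hil$, the projected covariate $\bXh=\inprod{\bX}{\bh}$ is an ordinary real-valued random variable, so that $R_{n,\bh}$ is exactly the marked empirical process of \cite{Stute1997} with scalar regressor $\bXh$ and marks $Y_i$ (recall that here $Y$ already denotes the residual under the simple null, so no parameter estimation enters the process). Consequently the statement reduces to checking that the hypotheses of Theorem 1.1 in \cite{Stute1997} hold in our setting and then to identifying the limiting covariance. First I would fix a realization of $\bh$ and argue throughout conditionally on it, so that all the one-dimensional machinery applies directly.

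For the finite-dimensional convergence, note that for each $x\in\R$ the summands $Y_i\mathbbm{1}_{\lrb{\bX_i^\bh\leq x}}$ are i.i.d.; under $H_0^\bh$ they are centred, since by the tower property $\E{Y\mathbbm{1}_{\lrb{\bXh\leq x}}}=\E{\E{Y\mid\bXh}\mathbbm{1}_{\lrb{\bXh\leq x}}}=0$, and they have finite second moment because $\E{Y^2}<\infty$ and the indicator is bounded. The multivariate central limit theorem then yields joint asymptotic normality of $\lrp{R_{n,\bh}(x_1),\dots,R_{n,\bh}(x_d)}$ with zero mean for any finite grid. To compute the limiting covariance, I would split $\E{R_{n,\bh}(s)R_{n,\bh}(t)}$ into diagonal and off-diagonal contributions: the off-diagonal terms vanish by independence of the sample together with the centring just established, while the diagonal contributes $\E{Y^2\mathbbm{1}_{\lrb{\bXh\leq s\wedge t}}}$. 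Conditioning on $\bXh$ and using that $\E{Y\mid\bXh}=0$ forces $\E{Y^2\mid\bXh}=\V{Y\mid\bXh}$, this equals $\int_{-\infty}^{s\wedge t}\V{Y\mid\bXh=u}\,\mathrm{d}F_{\bh}(u)=K_1(s,t)$, as claimed.

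It remains to establish tightness of $\lrb{R_{n,\bh}}$ in $D(\R)$, which I would obtain directly from the corresponding argument in \cite{Stute1997}: the second-moment bound furnished by $\E{Y^2}<\infty$ controls the increments of the process, which is precisely what the tightness criterion used there requires, so that the sequence is asymptotically tight. Combining finite-dimensional convergence with tightness gives the weak convergence $R_{n,\bh}\inlaw\boldsymbol{\mathcal{G}}_1$ to the centred Gaussian process with covariance $K_1$.

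Since the argument is an essentially verbatim application of Stute's one-dimensional result, the only genuine point requiring care --- and the sole place where $H_0^\bh$ is used --- is the centring: it is what makes the off-diagonal covariance terms vanish and, equivalently, what lets $\E{Y^2\mid\bXh}$ be read as the conditional variance appearing in $K_1$. I do not anticipate any real obstacle beyond this bookkeeping, consistently with the statement being trivially satisfied from \cite{Stute1997}.
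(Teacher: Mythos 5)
Your proposal is correct and follows exactly the route the paper takes: the paper gives no separate proof, simply noting that the result is ``trivially satisfied from Theorem 1.1 in \cite{Stute1997}'' once one conditions on $\bh$ and treats $\bXh$ as a scalar regressor, which is precisely what you do. Your additional bookkeeping (the centring under $H_0^\bh$, the identification of $\E{Y^2\mid\bXh}$ with $\V{Y\mid\bXh}$, and the tightness inherited from Stute's argument under $\E{Y^2}<\infty$) is an accurate unpacking of that citation.
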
 
Different statistics for the testing of $H_0^\bh$ can be built from continuous functionals on $R_{n,\bh}(x)$. We shall cover this in more detail in Section \ref{Sec.lin}. 

\begin{Exam}
Consider the FLM $Y=\inprod{\bX}{\brho}+\varepsilon$ in $\Hil=L^2[0,1]$, with $\bX$ a Gaussian process with associated Karhunen--Lo\'eve expansion \eqref{Eq.Kar_loeve} below, and $\varepsilon$ independent from $\bX$. Then $\bX^{\bh}$ and $\bX^{\brho}$ are centred Gaussians with variances $\sigma^2_{\bh}$ and $\sigma^2_{\brho}$, respectively, and $\mathbb{C}\mathrm{ov}[\bX^{\bh},\bX^{\brho}]=\sum_{j=1}^\infty h_j \rho_j\lambda_j$, with $h_j:=\inprod{\bh}{\be_j}$, and $\rho_j:=\inprod{\brho}{\be_j}$. Hence, 
\begin{align*}
K^1(s,t) 
=&\;\int_{-\infty}^{s\wedge t}\Bigg(\frac{\sigma^2_{\brho}\sigma^2_{\bh}-\big(\sum_{j=1}^\infty h_j\rho_j\lambda_j\big)^2}{\sigma^2_{\bh}}+\sigma_\varepsilon^2\Bigg)\phi(u/\sigma_{\bh})/\sigma_{\bh}\,\mathrm{d}u\\
=&\;\Bigg(\frac{\sigma^2_{\brho}\sigma^2_{\bh}-\big(\sum_{j=1}^\infty h_j\rho_j\lambda_j\big)^2}{\sigma^2_{\bh}}+\sigma_\varepsilon^2\Bigg)\Phi((s\wedge t)/\sigma_{\bh}),
\end{align*}
where $\phi$ and $\Phi$ are the density and distribution functions of a $\mathcal{N}(0,1)$, respectively.
\end{Exam}

\section{Testing the functional linear model}
\label{Sec.lin}

We focus now on testing the composite null hypothesis, expressed as 
\begin{align}
H_0:  m(\bX) = \langle \bX, \brho \rangle = \bX^{\brho} \text{ for some } \brho \in \Hil.\label{eq:H0m}
\end{align}
According to Corollary \ref{Coro1}, testing \eqref{eq:H0m} is $\mu$-a.s. equivalent to testing  
\[
H_0^\bh: \E{\left(Y - \bX^{\brho}\right)\big| \bX^\bh} =0 \text{ for some } \brho \in \Hil,
\]
where $\bh$ is sampled from a nondegenerate Gaussian law $\mu$. Again, we construct the associated empirical regression process indexed by the projected covariate following \cite{Stute1997}. Therefore, given an estimate $\hat \brho$ of $\brho$ under $H_0$, we consider
\begin{align}
T_{n,\bh}(x) :=a_n  \sum_{i=1}^n \mathds{1}_{\lrb{\bX_i^\bh \leq x}}\lrp{Y_i - \bX_i^{\hat \brho}}=
a_n\left(
T_{n,\bh}^1(x)+T_{n,\bh}^2(x)+T_{n,\bh}^3(x)\right),\label{Ec.Def.Stat}
\end{align}
where $a_n \conv 0$ is a normalizing positive sequence to be determined later and
\begin{align*}
T_{n,\bh}^1(x):=&\; \sum_{i=1}^n \mathds{1}_{\lrb{\bX_i^\bh \leq x}}\lrp{Y_i - \bX_i^{\brho}},\\
T_{n,\bh}^2(x):=&\;  \sum_{i=1}^n 
\inprod{\mathds{1}_{\lrb{\bX_i^\bh \leq x}}\bX_i
 -
\E{\mathds{1}_{\lrb{\bX^\bh \leq x}} \bX}}{\brho-\hat \brho},
\\
T_{n,\bh}^3(x):=&\;n  \inprod{\E{\mathds{1}_{\lrb{\bX^\bh \leq x}} \bX}}{\brho-\hat \brho}.
\end{align*}
The selection of the right estimator $\hat \brho$ has a crucial role in the weak convergence of $T^3_{n,\bh}$, which requires a substantially more complex proof than that for the simple hypothesis. We consider the regularized estimate proposed in Sections 2 and 3 of \cite{Cardot2007} (subsequently denoted by CMS), whose construction is sketched here for the sake of the exposition of our results. 

\subsection{\texorpdfstring{Construction of the estimator of $\brho$}{Construction of the estimator of rho}}
\label{subsec:rho}

Consider the so-called Karhunen--Lo\'eve expansion of $\bX$:
\begin{align}
\bX = \sum_{j=1}^\infty \lambda_j^{1/2} \xi_j \be_j. \label{Eq.Kar_loeve}
\end{align}
Here, $\lrb{\mathbf{e}_j}_{j=1}^\infty$ is a sequence of orthonormal eigenfunctions associated with the covariance operator of $\bX$, $\Gamma\bz:=\E{(\bX\otimes\bX)(\bz)}$, $\bz\in\Hil$, and the $\xi_j$'s are centred real r.v.'s (because $\bX$ is centred) such that $\E{\xi_j\xi_{j'}}= \delta_{j,j'}$, where $\delta_{j,j'}$ is the Kronecker's delta. The Kronecker operator $\otimes$ is such that $(\bx\otimes\by)\bz=\inprod{\bz}{\bx}\by$ for $\bx,\by,\bz\in\Hil$. We assume that the multiplicity of each eigenvalue is one, so $\lambda_1 > \lambda_2 > \ldots > 0$.\\ 

The functional coefficient $\brho$ is determined by the equation $\Delta = \Gamma \brho$, with $\Delta$ the cross-covariance operator of $\bX$ and $Y$, $\Delta\bz:=\E{(\bX\otimes Y)(\bz)}$, $\bz\in\Hil$. To ensure the existence and uniqueness of a solution to $\Delta = \Gamma \brho$, we require the next basic assumptions: 
\begin{enumerate}[label=\textbf{A\arabic{*}}.,ref=\textbf{A\arabic{*}}]
\item $\bX$ and $Y$ satisfy $\sum_{j=1}^\infty \frac{1}{\lambda_j^2} \langle \E{\bX Y}, \be_j\rangle^2 < \infty$. \label{Assump:B1}
\item The kernel of $\Gamma$ is $\{\mathbf{0}\}$. \label{Assump:B2}
\end{enumerate}

The estimation of $\brho$ requires the inversion of $\Gamma_n:=\frac{1}{n}\sum_{i=1}^{n}\bX_i\otimes\bX_i$, but, since $\Gamma_n$ is a.s. a finite rank operator, its inverse does not exist. CMS proposed a regularization yielding a family of continuous estimators for $\Gamma^{-1}$. Based on their Example 1, we define $\Gamma_n^\dag$, an empirical finite rank estimate of~$\Gamma^{-1}$: 
\[
\Gamma_n^\dag:=\sum_{i=1}^{k_n}\frac{1}{\hat{\lambda}_j}\hat{\be}_{j}\otimes\hat{\be}_{j}.
\]
The construction of $\Gamma_n^\dag$ (resp., the population version $\Gamma^\dag:=\sum_{i=1}^{k_n}\frac{1}{\lambda_j}\be_{j}\otimes\be_{j}$) is done by considering a sequence of thresholds $c_n \in (0,\lambda_1)$, $n \in \mathbb{N}$, with $c_n\conv 0$. The procedure is as follows: (\textit{i}) compute the Functional Principal Components (FPC) of $\bX_1,\ldots,\bX_n$, that is, calculate the eigenvalues $\{\hat \lambda_ j\}$ and eigenfunctions $\{\hat \be_j\}$ of $\Gamma_n$; (\textit{ii}) define the sequences $\lrb{\delta_j}$, with $\delta_1 :=\lambda_1- \lambda_2$ and $\delta_j := \min(\lambda_j - \lambda_{j+1}, \lambda_{j-1} - \lambda_j)$ for $j>1$, and set
\[
k_n:=\sup \{j\in\N:\lambda_j+\delta_j/2\geq c_n \};
\]
(\textit{iii}) compute $\Gamma_n^\dag$ (resp., $\Gamma^\dag$) as the finite rank operator with the same eigenfunctions as $\Gamma_n$ (resp., $\Gamma$) and associated eigenvalues equal to $\hat{\lambda}_j^{-1}$ (resp., $\lambda_j^{-1}$) if $j\leq k_n$ and $0$ otherwise. The regularized estimator of $\brho$ is then
\begin{align} \label{Eq.EstimRho}
\hat{\brho}:=\Gamma_n^\dagger\Delta_n=\frac{1}{n}\sum_{j=1}^{k_n}\sum_{i=1}^n\frac{\inprod{\bX_i\otimes Y_i}{\hat\be_j}}{\hat \lambda_j}\hat{\be_j}.
\end{align}
Note that \eqref{Eq.EstimRho} is not readily computable in practice, since $\{\lambda_j\}$ is usually unknown (and hence, $k_n$). As in CMS, we consider the (random) finite rank 
\[
d_n:=\sup \{j\in\N: \hat \lambda_ j \geq c_n \}
\]
as a replacement in practice for the deterministic $k_n$. As seen in Lemma \ref{Prop.Tn}, $\nu[k_n=d_n]\conv1$. Hence the estimator \eqref{Eq.EstimRho} has the same asymptotic behaviour with either $k_n$ or $d_n$. Therefore, we consider $k_n$ in \eqref{Eq.EstimRho} due to the enhanced probabilistic tractability. The consideration of $k_n$ in $\Gamma_n^\dag$, instead of $d_n$, is the main difference between our definition of $\Gamma_n^\dag$ and the proposal given from Example 1 in~CMS. \\

The following assumptions allow us to obtain meaningful asymptotic convergences involving $\hat{\brho}$:
\begin{enumerate}[label=\textbf{A\arabic{*}}.,ref=\textbf{A\arabic{*}}] 
\setcounter{enumi}{2}
\item $\E{\| \bX \| ^2} < \infty$.\label{Assump:C1}

\item $ \sum_{l=1}^\infty  | \langle \brho,\be_l\rangle | < \infty$.\label{Assump:C2}

\item For $j$ large, $\lambda_j=\lambda(j)$, with $\lambda(\cdot)$ a convex positive function.\label{Assump:C3} 
 
\item $\frac{\lambda_{n} n^4}{ \log n} = \Oh(1)$.\label{Assump:C4}

\item $\inf \left\{ |\langle \brho, \be_{k_n}\rangle | , \frac {\lambda_{k_n}} {\sqrt{k_n \log k_n}}\right\}= \Oh(n^{-1/2})$.\label{Assump:C5}
             
\item $\sup_j \left\{ \max \big(\mathbb{E}\big[\xi_j^4\big], \mathbb{E}\big[|\xi_j|^5\big]\big) \right\}\leq M < \infty$ for $M\geq1$.\label{Assump:C6} 

\item There exist $C_1,C_2>0$ such that $C_1 n^{-1/2} < c_n < C_2n^{-1/2}$ for every $n$.\label{Assump:C7}
\end{enumerate}

A brief summary of these assumptions is given as follows. \ref{Assump:C1} is standard for obtaining asymptotic distributions, allows decomposition \eqref{Eq.Kar_loeve}, and implies $\E{Y^2}<\infty$, which is required in Theorem 1.1 of \cite{Stute1997}. \ref{Assump:C2} and \ref{Assump:C3} are A.1 and A.2 in CMS. \ref{Assump:C4} is very similar to an assumption in the second part of Theorem 2 in CMS. \ref{Assump:C5} is the minimum requirement for controlling $\langle \bX, \bL_n\rangle$ (to be detailed in Section \ref{ap:linmod}) when Lemma 7 in CMS is used to prove Lemma \ref{Lemm.Tn3}. \ref{Assump:C6} is a reinforcement of A.3 in CMS, where only fourth-order moments are used. This is because we handle inner products of $\hat \brho$ times a nonindependent r.v., while in CMS the r.v. is not used to estimate $\brho$. \ref{Assump:C7} is useful, mainly (but also see the final part of Lemma \ref{LemmTn2.Yn}) to control the behaviour of $k_n$. We show this fact in Proposition \ref{Prop.1}, with a conclusion very close to assumption (8) in CMS and coinciding with one of the conditions of their Theorem 3 if $\lim_n t_{n,\mathbf{E}_{x,\bh}} < \infty$ (the term $t_{n,\mathbf{E}_{x,\bh}}$ is defined in \eqref{Eq.definiciones} below). Finally, we point out that in CMS the assumptions aim to control the behaviour of $k_n$ while here we have sought to control the threshold $c_n$, as this can be modified by the statistician. 

\subsection{\texorpdfstring{Pointwise asymptotic distribution of $T_{n,\bh}$}{Pointwise asymptotic distribution of Tnh}}
\label{subsec:point}

Corollary \ref{coro:single} shows the weak convergence of $n^{-1/2}T_{n,\bh}^1$. We analyse now the pointwise behaviour of $T_{n,\bh}^2(x)$ and $T_{n,\bh}^3(x)$ for a fixed $x\in\R$. We will show that $T_{n,\bh}^2(x)=\op(n^{1/2})$ and that the rate of $T_{n,\bh}^3(x)$ depends on the key normalizing sequence $\{t_{n,\mathbf{E}_{x,\bh}}\}$, where  
\begin{align} \label{Eq.definiciones}
t_{n,\bx} := \sqrt{\sum_{j=1}^{k_n} \frac{\langle \bx, \be_j\rangle^2}{\lambda_j}}\quad\text{and}\quad \mathbf{E}_{x,\bh}:=\E{\mathds{1}_{\lrb{\bX^\bh \leq x}} \bX}.
\end{align}

\begin{Theo} \label{Theo:DistPuntual}
Under $H_0^\bh$ and \ref{Assump:B1}--\ref{Assump:C7}, and for a fixed $x\in\R$, it follows that:
\begin{enumerate}[label=(\alph{*}),ref=(\alph{*})]
\item $n^{-1/2} t_{n,\mathbf{E}_{x,\bh}}^{-1} T_{n,\bh}^3(x) \inlaw  \mathcal{N}(0,  \sigma^2_{\varepsilon})$.\label{Theo:DistPuntual:a}

\item If $\lim_n t_{n,\mathbf{E}_{x,\bh}}= \infty$, then with $a_n = n^{-1/2} t_{n,\mathbf{E}_{x,\bh}}^{-1}$ in \eqref{Ec.Def.Stat}, the asymptotic distribution of $T_{n,\bh} (x)$ is $n^{-1/2} t_{n,\mathbf{E}_{x,\bh}}^{-1}T_{n,\bh}^3(x)$. \label{Theo:DistPuntual:b} 

\item If $\lim_n t_{n,\mathbf{E}_{x,\bh}} < \infty$, then with $a_n = n^{-1/2}$ in \eqref{Ec.Def.Stat}, the asymptotic distribution of $T_{n,\bh} (x)$ is $n^{-1/2}\big(T_{n,\bh}^1(x) + T_{n,\bh}^3(x)\big)$.  \label{Theo:DistPuntual:c} 
\end{enumerate} 
\end{Theo}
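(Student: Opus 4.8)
The plan is to analyse the three pieces $T^1_{n,\bh}(x)$, $T^2_{n,\bh}(x)$ and $T^3_{n,\bh}(x)$ separately and then recombine them under the two scaling regimes. The central object is the decomposition of the estimation error $\brho-\hat\brho$ induced by the CMS estimator \eqref{Eq.EstimRho}. Writing $\Delta=\Gamma\brho$ and $\hat\brho=\Gamma_n^\dagger\Delta_n$, I would first obtain the standard CMS expansion
\[
\hat\brho-\brho=\Gamma_n^\dagger(\Delta_n-\Gamma_n\brho)+(\Gamma_n^\dagger\Gamma_n-\mathrm{Id})\brho
=\frac1n\sum_{i=1}^n\Gamma_n^\dagger\big(\bX_i(Y_i-\bX_i^{\brho})\big)+\bL_n,
\]
where $\bL_n$ collects the regularization bias $(\Gamma_n^\dagger\Gamma_n-\mathrm{Id})\brho$ plus lower-order terms; under \ref{Assump:C2}--\ref{Assump:C5} and Lemma 7 in CMS this bias is controlled at the rate forced by \ref{Assump:C5}. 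Substituting this into $T^3_{n,\bh}(x)=n\inprod{\mathbf{E}_{x,\bh}}{\brho-\hat\brho}$ produces a leading linear term plus a bias term, and the whole point of normalizing by $t_{n,\mathbf{E}_{x,\bh}}$ is that this sequence is exactly the standard deviation scale of that leading linear statistic.

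For part \ref{Theo:DistPuntual:a} I would isolate the leading term of $n^{-1/2}t_{n,\mathbf{E}_{x,\bh}}^{-1}T^3_{n,\bh}(x)$ as an average of i.i.d.\ (conditionally on $\bh$) centred summands of the form $t_{n,\mathbf{E}_{x,\bh}}^{-1}\inprod{\mathbf{E}_{x,\bh}}{\Gamma_n^\dagger\bX_i}(Y_i-\bX_i^{\brho})$, and apply a Lindeberg--Feller CLT. Using independence of $\varepsilon$ from $\bX$ and $\E{\varepsilon|\bX}=0$, the conditional variance of each summand factorizes so that the $t_{n,\mathbf{E}_{x,\bh}}^{-2}$ normalizing factor cancels the growing quadratic form $\sum_{j\le k_n}\inprod{\mathbf{E}_{x,\bh}}{\be_j}^2/\lambda_j$, leaving the limiting variance $\sigma^2_\varepsilon$. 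The Lindeberg condition follows from the moment bounds \ref{Assump:C6}, while the replacement of $\Gamma_n^\dagger$ by $\Gamma^\dagger$ and the negligibility of the regularization bias (which is where \ref{Assump:C4}, \ref{Assump:C5} and \ref{Assump:C7} enter to guarantee $k_n=d_n$ with probability tending to one and to bound $\langle\bX,\bL_n\rangle$) are handled by the auxiliary lemmas \ref{Lemm.Tn3} and \ref{Prop.Tn}.

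For $T^2_{n,\bh}(x)$ I would show $T^2_{n,\bh}(x)=\op(n^{1/2})$ by bounding $\norm{\brho-\hat\brho}$ via the CMS consistency rate and controlling the centred empirical average $\sum_i\big(\mathbbm{1}_{\lrb{\bX_i^\bh\le x}}\bX_i-\E{\mathbbm{1}_{\lrb{\bX^\bh\le x}}\bX}\big)$, whose norm is $\OP(n^{1/2})$; the product is then $\op(n^{1/2})$ after dividing by the appropriate scale. Parts \ref{Theo:DistPuntual:b} and \ref{Theo:DistPuntual:c} are then a matter of collecting rates: when $t_{n,\mathbf{E}_{x,\bh}}\to\infty$, the $T^3$ term dominates $T^1$ (which is $\OP(n^{1/2})$ by Corollary \ref{coro:single}), so scaling by $a_n=n^{-1/2}t_{n,\mathbf{E}_{x,\bh}}^{-1}$ kills $T^1$ and $T^2$ and leaves the Gaussian limit of part \ref{Theo:DistPuntual:a}; when $t_{n,\mathbf{E}_{x,\bh}}$ stays bounded, the two terms are of the same order $n^{1/2}$, so $a_n=n^{-1/2}$ retains both $T^1$ and $T^3$ while still discarding $T^2$.

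I expect the main obstacle to be the precise control of $T^3_{n,\bh}(x)$: establishing that the regularization bias in $\hat\brho-\brho$ is negligible relative to the scale $n^{1/2}t_{n,\mathbf{E}_{x,\bh}}$ \emph{uniformly enough} to apply the CLT, and verifying that replacing the empirical inverse $\Gamma_n^\dagger$ by its population counterpart $\Gamma^\dagger$ does not disturb the limiting variance. This is exactly the coupling between the data-driven truncation level (through $k_n$, $d_n$ and the threshold $c_n$) and the projection direction $\mathbf{E}_{x,\bh}$, and it is the reason assumptions \ref{Assump:C4}--\ref{Assump:C7} are imposed; handling it rests on the auxiliary lemmas and on the inequality from Lemma 7 of CMS invoked in Lemma \ref{Lemm.Tn3}.
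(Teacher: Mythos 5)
Your plan reproduces the architecture of the paper's proof: the same three-term split, the CMS error decomposition for $\hat\brho-\brho$, a CLT for the leading (noise) part of $T^3_{n,\bh}(x)$ normalized by $t_{n,\mathbf{E}_{x,\bh}}$ with the bias terms controlled via Lemma 7 of CMS and \ref{Assump:C5}/\ref{Assump:C7}, the identification $\nu[k_n=d_n]\to1$, and the final bookkeeping of rates in cases \ref{Theo:DistPuntual:b} and \ref{Theo:DistPuntual:c}. Part \ref{Theo:DistPuntual:a} and the recombination step are essentially identical to the paper's route (the paper delegates the CLT to Lemma \ref{Lemm.Tn3}, which in turn rests on Lemma 8 of CMS exactly as you describe).

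The one step that does not close as written is your treatment of $T^2_{n,\bh}(x)$. By Cauchy--Schwartz, $n^{1/2}\inprod{\Medh-\mathbf{E}_{x,\bh}}{\hat\brho-\brho}\leq n^{1/2}\norm{\Medh-\mathbf{E}_{x,\bh}}\,\norm{\hat\brho-\brho}=\OP(1)\,\norm{\hat\brho-\brho}$, so in case \ref{Theo:DistPuntual:c} your argument needs $\norm{\hat\brho-\brho}=\op(1)$, a norm-consistency statement for the regularized estimator that is neither among assumptions \ref{Assump:B1}--\ref{Assump:C7} nor established anywhere in the paper for the pointwise theorem; indeed the remark following Theorem \ref{Theo:tighness} stresses that norm rates for $\hat\brho-\brho$ are delicate, and the extra hypothesis $\mathbb{E}\big[\norm{\hat\brho-\brho}^4\big]=\Oh(n^{-2})$ is introduced only for the functional weak convergence, not here. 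The paper instead splits $\hat\brho-\brho=\bL_n+\bY_n+\bS_n+\bR_n$ and proves $n^{1/2}\inprod{\Medh-\mathbf{E}_{x,\bh}}{\mathbf{W}_n}=\op(1)$ for each piece separately (Lemmas \ref{LemmTn2.Ln}--\ref{LemmTn2.Yn}) by direct second-moment computations; the technical content of those lemmas is precisely that $\Medh$ is built from the \emph{same} sample as $\bS_n$, $\bY_n$ and $\bR_n$, so one cannot invoke the CMS bounds for an independent test point and must count which cross-terms vanish by hand, using \ref{Assump:C6} to bound the surviving ones. Your Cauchy--Schwartz shortcut avoids the dependence issue but replaces it with the unproven norm-consistency requirement; to make your plan rigorous you would either have to establish $\norm{\bS_n},\norm{\bY_n}=\op(1)$ under \ref{Assump:C3}--\ref{Assump:C7} (itself a nontrivial addition to CMS) or fall back on the paper's term-by-term analysis.
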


The behaviour of the sequence $\{t_{n,\mathbf{E}_{x,\bh}}\}$, indexed by $n\in\N$ and with arbitrary $\bh\in\Hil$ and $x \in\R$, is crucial for the convergence of $T_{n,\bh}$. Since $\{t_{n,\mathbf{E}_{x,\bh}}\}$ is nondecreasing, it has always a limit (finite or infinite). Its asymptotic behaviour is described next.
 
\begin{Prop}
\label{Prop.tnx}
The sequence $\{t_{n,\mathbf{E}_{x,\bh}}\}$ has asymptotic order between $\Oh(1)$ and $\Oh\big(k_n^{1/2}\big)$. In addition, if $\bX$ is Gaussian and satisfies \ref{Assump:C1}, then $\sigma_{\bh}^2:=\V{\bX^{\bh}}< \infty$ and $\lim_n t_{n,\mathbf{E}_{x,\bh}}=\phi(x/\sigma_{\bh})$. 
\end{Prop}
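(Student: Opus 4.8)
The plan is to first rewrite the definition of $t_{n,\mathbf{E}_{x,\bf h}}$ in a form where the eigenvalues disappear, using the Karhunen--Lo\`eve expansion (\ref{Eq.Kar_loeve}). Since $\inprod{\bX}{\be_j}=\lambda_j^{1/2}\xi_j$, one has $\inprod{\mathbf{E}_{x,\bf h}}{\be_j}=\E{\mathbbm{1}_{\lrb{\bX^\bh\leq x}}\inprod{\bX}{\be_j}}=\lambda_j^{1/2}\E{\mathbbm{1}_{\lrb{\bX^\bh\leq x}}\xi_j}$, so that
\[
t_{n,\mathbf{E}_{x,\bf h}}^2=\sum_{j=1}^{k_n}\frac{\inprod{\mathbf{E}_{x,\bf h}}{\be_j}^2}{\lambda_j}=\sum_{j=1}^{k_n}\E{\mathbbm{1}_{\lrb{\bX^\bh\leq x}}\xi_j}^2.
\]
The cancellation of the $\lambda_j$'s is the key simplification that makes the remaining analysis tractable.

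For the general orders, the upper bound is immediate from Cauchy--Schwarz: each summand obeys $\E{\mathbbm{1}_{\lrb{\bX^\bh\leq x}}\xi_j}^2\leq\E{\mathbbm{1}_{\lrb{\bX^\bh\leq x}}}\E{\xi_j^2}=F_\bh(x)\leq1$, whence $t_{n,\mathbf{E}_{x,\bf h}}^2\leq k_nF_\bh(x)=\Oh(k_n)$ and $t_{n,\mathbf{E}_{x,\bf h}}=\Oh(k_n^{1/2})$. For the lower extreme, since $\{t_{n,\mathbf{E}_{x,\bf h}}\}$ is non-decreasing (as already noted) and bounded below by its non-null first term, its order cannot be smaller than $\Oh(1)$; the Gaussian computation below explicitly realizes a finite limit, so these two displayed orders delimit the range of possible behaviours.

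For the Gaussian part I would first check finiteness: $\sigma_\bh^2=\V{\bX^\bh}\leq\E{\inprod{\bX}{\bh}^2}\leq\|\bh\|^2\E{\|\bX\|^2}<\infty$ by Cauchy--Schwarz and \ref{Assump:C1}. Writing $h_j=\inprod{\bh}{\be_j}$, we have $\bX^\bh=\sum_j\lambda_j^{1/2}h_j\xi_j$, and because $\bX$ is Gaussian the $\xi_j$ are i.i.d.\ $\mathcal N(0,1)$ and $(\xi_j,\bX^\bh)$ is jointly Gaussian with $\cov{\xi_j}{\bX^\bh}=\lambda_j^{1/2}h_j$. Regressing $\xi_j$ on $\bX^\bh$ as $\xi_j=(\lambda_j^{1/2}h_j/\sigma_\bh^2)\bX^\bh+\eta_j$, where $\eta_j$ is centred and independent of $\bX^\bh$, the indicator annihilates the residual term and the computation reduces to the one-dimensional integral $\E{\mathbbm{1}_{\lrb{\bX^\bh\leq x}}\bX^\bh}=-\sigma_\bh\phi(x/\sigma_\bh)$. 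This gives $\E{\mathbbm{1}_{\lrb{\bX^\bh\leq x}}\xi_j}^2=\lambda_jh_j^2\phi(x/\sigma_\bh)^2/\sigma_\bh^2$, so that
\[
t_{n,\mathbf{E}_{x,\bf h}}^2=\frac{\phi(x/\sigma_\bh)^2}{\sigma_\bh^2}\sum_{j=1}^{k_n}\lambda_jh_j^2.
\]
It then remains to let $n\to\infty$: because $c_n\to0$ and $\lambda_j>0$ by \ref{Assump:B2}, we have $k_n\to\infty$, and the partial sums converge, $\sum_{j=1}^{k_n}\lambda_jh_j^2\to\sum_{j=1}^\infty\lambda_jh_j^2=\V{\bX^\bh}=\sigma_\bh^2$, yielding $\lim_n t_{n,\mathbf{E}_{x,\bf h}}^2=\phi(x/\sigma_\bh)^2$ and hence the claim.

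The step I expect to demand the most care is the Gaussian conditional computation: one must use joint Gaussianity of $(\xi_j,\bX^\bh)$ so that uncorrelatedness of the residual $\eta_j$ with $\bX^\bh$ upgrades to independence (which is what makes the residual contribution vanish against the indicator), and one must confirm that the series $\sum_j\lambda_jh_j^2$ genuinely sums to $\sigma_\bh^2$ so that the truncated sums along $k_n$ converge to the full variance. The rest is bookkeeping with the reduction above.
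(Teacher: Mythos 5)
Your proposal is correct and follows essentially the same route as the paper: rewrite $t_{n,\mathbf{E}_{x,\bf h}}^2=\sum_{j=1}^{k_n}\E{\mathbbm{1}_{\lrb{\bX^\bh\leq x}}\xi_j}^2$, bound each summand by $\E{\xi_j^2}=1$ for the $\Oh(k_n^{1/2})$ order, and in the Gaussian case exploit joint normality of $(\xi_j,\bX^\bh)$ to reduce the expectation to $-h_j\lambda_j^{1/2}\phi(x/\sigma_\bh)/\sigma_\bh$ and sum $\lambda_jh_j^2$ to $\sigma_\bh^2$. Your regression decomposition $\xi_j=(\lambda_j^{1/2}h_j/\sigma_\bh^2)\bX^\bh+\eta_j$ is just the paper's conditional-expectation computation $\E{\xi_j\mid\bX^\bh=v}=h_j\sqrt{\lambda_j}\,v/\sigma_\bh^2$ in different notation, so no substantive difference.
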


\subsection{\texorpdfstring{Weak convergence of $T_{n,\bh}$ and the test statistics}{Weak convergence of Tnh and the test statistics}} 
\label{subsec:weak}

The result given in Theorem \ref{Theo:DistPuntual} holds for every $x\in\R$. For case \ref{Theo:DistPuntual:c} of Theorem \ref{Theo:DistPuntual} (where the estimation of $\brho$ is not dominant) and under an additional assumption, the result can be generalized to functional weak convergence. 
\begin{Theo} \label{Theo:tighness}
Under $H_0^\bh$, \ref{Assump:B1}--\ref{Assump:C7}, and \ref{Theo:DistPuntual:c} in Theorem \ref{Theo:DistPuntual}, it follows that:
\begin{enumerate}[label=(\alph{*}),ref=(\alph{*})]
\item The finite dimensional distributions of $T_{n,\bh}$ converge to a multivariate Gaussian with covariance function\label{Theo:tighness:a} $K_2(s,t):=K_1(s,t)+C(s,t)+C(t,s)+V(s,t)$, where
\begin{align*}
C(s,t):=&\;\int_{\{\bu^\bh\leq s\}} \V{Y|\bX=\bu}\inprod{\mathbf{E}_{t,\bh}}{\Gamma^{-1}\bu}\,\mathrm{d}P_\bX(\bu),\\
V(s,t):=&\;\int \V{Y|\bX=\bu}\inprod{\mathbf{E}_{s,\bh}}{\Gamma^{-1}\bu}\inprod{\mathbf{E}_{t,\bh}}{\Gamma^{-1}\bu}\,\mathrm{d}P_\bX(\bu).
\end{align*}
\item If $\mathbb{E}\big[\norm{\hat{\brho}-\brho}^4\big]=\Oh(n^{-2})$, then
$T_{n,\bh}\inlaw\boldsymbol{\mathcal{G}}_2$ in $D(\R)$, with $\boldsymbol{\mathcal{G}}_2$ a Gaussian process with zero mean and covariance function $K_2$. \label{Theo:tighness:b}
\end{enumerate}
\end{Theo}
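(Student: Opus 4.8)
The plan is to establish weak convergence of $T_{n,\bh}$ in $D(\R)$ by the standard route: first prove convergence of the finite-dimensional distributions (fidis) to a Gaussian limit, and then establish tightness of the sequence $\{T_{n,\bh}\}$ in the Skorohod space. The fidis part is part~\ref{Theo:tighness:a} and should follow from the pointwise analysis already carried out. Since we are under case~\ref{Theo:DistPuntual:c} (where $\lim_n t_{n,\mathbf{E}_{x,\bf h}}<\infty$ and $a_n=n^{-1/2}$), Theorem~\ref{Theo:DistPuntual} tells us that the limit is driven by $n^{-1/2}(T_{n,\bh}^1+T_{n,\bh}^3)$, the term $T_{n,\bh}^2$ being asymptotically negligible. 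The plan for the fidis is therefore to consider an arbitrary finite collection $x_1,\dots,x_m\in\R$, write the vector $\big(T_{n,\bh}(x_1),\dots,T_{n,\bh}(x_m)\big)$ as $n^{-1/2}$ times a sum of i.i.d.\ (conditionally on $\bh$) contributions, and apply the multivariate Lindeberg--Feller central limit theorem. The main work is identifying the limiting covariance: I would compute $\mathrm{Cov}\big(T_{n,\bh}(s),T_{n,\bh}(t)\big)$ by expanding the cross-products of $T^1$ and $T^3$. The $K_1$ term is the contribution of $T^1$ with itself (already identified in Corollary~\ref{coro:single}); the cross terms $C(s,t)+C(t,s)$ arise from the covariance between $T^1$ and the limiting representation of $T^3$; and $V(s,t)$ is the variance contribution of $T^3$ with itself. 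The key identity to extract here is the asymptotic linear representation $n^{-1/2}T^3_{n,\bh}(x)\approx n^{-1/2}\sum_i \varepsilon_i\inprod{\mathbf{E}_{x,\bh}}{\Gamma^{-1}\bX_i}$ (up to negligible terms coming from the regularization $\Gamma_n^\dagger\to\Gamma^\dagger$), which explains the appearance of $\Gamma^{-1}\bu$ inside $C$ and $V$.

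For part~\ref{Theo:tighness:b}, assuming the fidis convergence from~\ref{Theo:tighness:a}, the remaining task is \emph{tightness}. The standard criterion in $D(\R)$ requires a moment bound on increments of the form $\E{\big|T_{n,\bh}(x)-T_{n,\bh}(x_1)\big|^{2}\big|T_{n,\bh}(x_2)-T_{n,\bh}(x)\big|^{2}}\leq C\,(G(x_2)-G(x_1))^2$ for a continuous nondecreasing $G$, with $x_1\leq x\leq x_2$ (the Billingsley/Chentsov-type condition). The plan is to decompose each increment of $T_{n,\bh}$ according to the decomposition~(\ref{Ec.Def.Stat}) and bound the increments of $n^{-1/2}T^1$, $n^{-1/2}T^2$, $n^{-1/2}T^3$ separately. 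Tightness of $n^{-1/2}T^1_{n,\bh}$ follows directly from Stute's original argument (Theorem~1.1 in \cite{Stute1997}), since that term is exactly the classical marked empirical process. The genuinely new piece is controlling the increments contributed by the estimation of $\brho$, i.e.\ by $T^2$ and $T^3$; this is where the additional hypothesis $\E{\norm{\hat\brho-\brho}^4}=\Oh(n^{-2})$ enters.

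The main obstacle will be the tightness contribution of $T^3_{n,\bh}$, and it is precisely here that the fourth-moment assumption is designed to be used. The increment $T^3_{n,\bh}(x_2)-T^3_{n,\bh}(x_1)=n\inprod{\mathbf{E}_{x_2,\bh}-\mathbf{E}_{x_1,\bh}}{\brho-\hat\brho}$ factors into a deterministic functional increment $\mathbf{E}_{x_2,\bh}-\mathbf{E}_{x_1,\bh}=\E{\mathbbm{1}_{\lrb{x_1<\bX^\bh\leq x_2}}\bX}$ and the estimation error $\brho-\hat\brho$. I would bound the product-of-increments moment via Cauchy--Schwarz, separating the factor $\E{\norm{\brho-\hat\brho}^4}=\Oh(n^{-2})$ (which cancels the $n^2=(n\cdot n)$ prefactor coming from the two $T^3$ increments, each carrying a factor $n$ and normalized by $a_n=n^{-1/2}$) from a factor controlling $\norm{\mathbf{E}_{x_2,\bh}-\mathbf{E}_{x_1,\bh}}^{2}$; the latter should be dominated by the increment of a continuous nondecreasing function of $x$, yielding the required Chentsov bound. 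The delicate point is ensuring that $\inprod{\mathbf{E}_{x,\bh}}{\Gamma^{-1}\,\cdot}$ remains well controlled as $k_n\to\infty$, so that $\Gamma^{-1}\mathbf{E}_{x,\bh}$ (appearing in $C$ and $V$) is genuinely defined in the limit; this is guaranteed under~\ref{Assump:C1}--\ref{Assump:C7} together with the finiteness $\lim_n t_{n,\mathbf{E}_{x,\bf h}}<\infty$ of case~\ref{Theo:DistPuntual:c}, which keeps $\sum_j\lambda_j^{-1}\inprod{\mathbf{E}_{x,\bh}}{\be_j}^2$ bounded. Once the three increment bounds are combined and the cross terms handled by a further Cauchy--Schwarz, tightness follows and the weak convergence $T_{n,\bh}\inlaw\boldsymbol{\mathcal{G}}_2$ is obtained by the fidis-plus-tightness principle.
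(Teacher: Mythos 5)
Your proposal is correct and follows essentially the same route as the paper: finite-dimensional convergence via the Cram\'er--Wold device together with the asymptotic linear representation $n^{-1/2}T^3_{n,\bh}(x)\approx n^{-1/2}\sum_i\varepsilon_i\inprod{\mathbf{E}_{x,\bh}}{\Gamma^{\dagger}\bX_i}$ (the $\bR_n$ term of the CMS decomposition), and tightness obtained from Stute's argument for $T^1$ plus a Chentsov/Billingsley-type increment bound for $T^3$ in which Cauchy--Schwarz isolates the factor $\mathbb{E}\big[\norm{\hat{\brho}-\brho}^4\big]=\Oh(n^{-2})$ against $n^2$ and the remaining factor is dominated by increments of a continuous nondecreasing function (the paper makes this precise by time-changing with $F_{\bh}$ to work on $[0,1]$). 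The only cosmetic difference is that the paper disposes of $T^2$ by showing $\sup_x|n^{-1/2}T^2_{n,\bh}(x)|\inprob 0$ directly rather than via increment bounds, which is equivalent for the purpose of the conclusion.
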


\begin{Rem}
According to Theorem 1 in CMS, it is impossible for $\hat \brho - \brho$ to converge to a nondegenerate random element in the topology of \Hil. To circumvent this issue and obtain the tightness of $T_{n,\bh}$, we assume $\mathbb{E}\big[\norm{\hat{\brho}-\brho}^4\big]=\Oh(n^{-2})$, which implies $\norm{\hat{\brho}-\brho}=\OP(n^{-1/2})$, and, thus, a finite-dimensional parametric convergence rate for $\hat\brho$. For instance, this happens when $\brho$ is a linear combination of a finite number of the eigenfunctions of $\Gamma$. Notice that this is not needed for the convergence of the finite dimensional distributions of $T_{n,\bh}$. 
\end{Rem}

The next result gives the convergence of the Kolmogorov--Smirnov (KS) and Cram\'er--von Mises (CvM) statistics for testing the FLM.
\begin{Coro} \label{Coro:KSCvM}
Under the assumptions in Theorem \ref{Theo:tighness} and  $\mathbb{E}\big[\norm{\hat{\brho}-\brho}^4\big]=\Oh(n^{-2})$, if $\|T_{n,\bh}\|_\mathrm{KS}:=\sup_{x\in\R}|T_{n,\bh}(x)|$ and $\|T_{n,\bh}\|_\mathrm{CvM}:=\int_\R T_{n,\bh}(x)^2\,\mathrm{d}F_{n,\bh}(x)$, then
\begin{align*}
\|T_{n,\bh}\|_\mathrm{KS}\inlaw\|\boldsymbol{\mathcal{G}}_2\|_\mathrm{KS}\text{ and }\|T_{n,\bh}\|_\mathrm{CvM}\inlaw\int_\R \boldsymbol{\mathcal{G}}_2(x)^2\,\mathrm{d}F_{\bh}(x).
\end{align*}
\end{Coro}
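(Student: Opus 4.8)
The plan is to obtain both limits from the weak convergence $T_{n,\bh}\inlaw\boldsymbol{\mathcal{G}}_2$ in $D(\R)$ granted by Theorem \ref{Theo:tighness}\ref{Theo:tighness:b}, by means of the continuous mapping theorem (CMT). The crucial preliminary observation is that $\boldsymbol{\mathcal{G}}_2$ has a.s.\ continuous sample paths: it is a centred Gaussian process whose covariance $K_2$ is assembled from $K_1$, the covariance of the continuous process $\boldsymbol{\mathcal{G}}_1$ of Corollary \ref{coro:single}, plus the continuous contributions $C$ and $V$, so sample-path continuity is inherited from the tightness already proved in Theorem \ref{Theo:tighness}. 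Consequently the weak limit concentrates on the subspace of continuous functions, where the Skorohod and uniform topologies coincide, so that any functional continuous in the uniform topology is a.s.\ continuous at $\boldsymbol{\mathcal{G}}_2$.

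For the KS statistic I would apply the CMT directly. The map $\psi\mapsto\sup_{x\in\R}|\psi(x)|$ is continuous on $D(\R)$ with respect to the uniform topology and, in particular, at every continuous $\psi$; its tail behaviour is controlled because $T_{n,\bh}$ vanishes as $x\to-\infty$ and is tied down at $x\to+\infty$ under $H_0^\bh$. Since $\boldsymbol{\mathcal{G}}_2$ has continuous paths a.s., the set of discontinuities of this functional is $\boldsymbol{\mathcal{G}}_2$-null, and the CMT yields $||T_{n,\bh}||_{\mathrm{KS}}\inlaw||\boldsymbol{\mathcal{G}}_2||_{\mathrm{KS}}$.

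The CvM statistic is more delicate, as the integrating measure in the statistic, $F_{n,\bh}$, is random and differs from the deterministic $F_{\bh}$ appearing in the limit. I would first invoke the Glivenko--Cantelli theorem to get $\sup_{x\in\R}|F_{n,\bh}(x)-F_{\bh}(x)|\to0$ a.s.; since $F_{\bh}$ is deterministic, this uniform convergence combines with $T_{n,\bh}\inlaw\boldsymbol{\mathcal{G}}_2$, via the metric-space version of Slutsky's lemma, to produce the joint weak convergence $(T_{n,\bh},F_{n,\bh})\inlaw(\boldsymbol{\mathcal{G}}_2,F_{\bh})$ in $D(\R)\times D(\R)$. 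It then suffices to show that the functional $(\psi,F)\mapsto\int_\R\psi(x)^2\,\mathrm{d}F(x)$ is continuous at $(\boldsymbol{\mathcal{G}}_2,F_{\bh})$: whenever $\psi_n\to\psi$ uniformly with $\psi$ continuous and $F_n\to F_{\bh}$ uniformly with $F_{\bh}$ continuous, a Helly--Bray argument gives $\int\psi_n^2\,\mathrm{d}F_n\to\int\psi^2\,\mathrm{d}F_{\bh}$. A final application of the CMT then delivers $||T_{n,\bh}||_{\mathrm{CvM}}\inlaw\int_\R\boldsymbol{\mathcal{G}}_2(x)^2\,\mathrm{d}F_{\bh}(x)$.

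The main obstacle is the CvM part, precisely the non-compactness of the domain $\R$: the Helly--Bray continuity of the integral functional needs the tail contribution controlled uniformly in $n$, where the integrand $\psi$ need not be bounded. I would handle this by exploiting that $F_{n,\bh}$ and $F_{\bh}$ are probability measures of unit mass, together with the integrability $\int_\R K_2(x,x)\,\mathrm{d}F_{\bh}(x)=\E{\int_\R\boldsymbol{\mathcal{G}}_2(x)^2\,\mathrm{d}F_{\bh}(x)}<\infty$, which holds under \ref{Assump:B1}--\ref{Assump:C7}; thus the tail mass carries negligible weight. Concretely, I would truncate the integral to a compact $[-M,M]$, pass to the limit there, and let $M\to\infty$, estimating the discarded tail uniformly by the small $F_{\bh}$-mass outside $[-M,M]$ and the $\Oh_{\PR}(1)$ bound on the CvM functional. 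Equivalently, one may reduce $\R$ to a compact interval through the probability-integral transform induced by $F_{\bh}$, after which the limiting process is continuous on a compact set and the functionals are manifestly continuous.
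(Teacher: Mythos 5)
Your proposal is correct and follows essentially the same route as the paper: the KS part by the continuous mapping theorem, and the CvM part by combining Glivenko--Cantelli for $F_{n,\bh}-F_{\bh}$, Slutsky's lemma for the joint convergence, and the CMT applied to the integral functional (the paper phrases this via the additive decomposition $\int T_{n,\bh}^2\,\mathrm{d}F_{n,\bh}=\int T_{n,\bh}^2\,\mathrm{d}F_{\bh}+\int T_{n,\bh}^2\,\mathrm{d}(F_{n,\bh}-F_{\bh})$ rather than via continuity of the bivariate map, but the content is the same). Your explicit truncation argument for the tail of the CvM integral is a detail the paper leaves implicit, so no gap there.
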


\begin{Rem}
An alternative to \ref{Theo:tighness:b} and Corollary \ref{Coro:KSCvM} is to consider a deterministic discretization of the statistics, for which the convergence in law is trivial from \ref{Theo:tighness:a}. For example, if $\|T_{n,\bh}\|_{\widetilde{\mathrm{KS}}}:=\max_{k=1,\ldots,G}|T_{n,\bh}(x_k)|$ for a grid $\{x_1,\ldots,x_G\}$, then
$\|T_{n,\bh}\|_{\widetilde{\mathrm{KS}}}\inlaw\|\mathbf{Z}_2\|_{\widetilde{\mathrm{KS}}}$, where $\mathbf{Z}_2\sim\mathcal{N}_G(\mathbf{0},\boldsymbol\Sigma)$, $\boldsymbol\Sigma_{ij}=K_2(x_i,x_j)$.
\end{Rem}

\section{Testing in practice}
\label{sec:testing}

The major advantage of testing $H_0^\bh$ over $H_0$ is that in $H_0^\bh$ the conditioning r.v. is real. The potential drawbacks of this universal method are a possible loss of power and that the outcome of the test may vary for different projections. Both inconveniences can be alleviated by sampling several directions $\bh_1,\ldots, \bh_K$, testing the projected hypotheses $H_0^{\bh_1},\ldots,H_0^{\bh_K}$, and selecting an appropriate way to mix the resulting $p$-values. For example, using the FDR method proposed in \cite{Benjamini2001} (see Section 2.2.2 of \cite{Cuesta-Albertos2010}), it is possible to control the final rejection rate to be \textit{at most} $\alpha$ under \nolinebreak[4]$H_0$. \\ 

The drawing of random directions is clearly influential in the power of the test. For example, in the extreme case where the directions are orthogonal to the data, that is, $\bX^\bh= 0$, then $T_{n,\bh}(x)=(n^{-1/2}\sum_{i=1}^n\hat{\varepsilon}_i)\mathds{1}_{\{0\leq x\}}$ and $\|T_{n,\bh}\|_\mathrm{N}=\|T^{*b}_{n,\bh}\|_\mathrm{N}=0$ under $H_0$. Therefore, Algorithm \ref{algo:boot} would fail to calibrate the level of the test and potentially yield spurious results due to numerical inaccuracies in $\|T^{*b}_{n,\bh}\|_\mathrm{N}\leq\|T_{n,\bh}\|_\mathrm{N}$. A data-driven compromise to avoid drawing directions in subspaces \textit{almost} orthogonal to the data is the following: (\textit{i}) compute the FPC of $\bX_1,\ldots,\bX_n$, that is, the eigenpairs $\{(\hat \lambda_ j, \hat\be_j)\}$; (\textit{ii}) choose $j_n:=\min\big\{k=1,\ldots,n-1:(\sum_{j=1}^{k}\hat\lambda_j^2)/(\sum_{j=1}^{n-1}\hat\lambda_j^2)\geq r\big\}$ for a variance threshold $r$, for example, $r=0.95$; (\textit{iii}) generate the data-driven Gaussian process $\bh_{j_n}:=\sum_{j=1}^{j_n}\eta_j\hat\be_j$, with $\eta_j\sim\mathcal{N}(0,s_j^2)$ and $s_j^2$ the sample variance of the scores in the $j$th FPC. Without loss of generality, we will use this data-driven projecting process for drawing $\bh$ in the rest of the paper (see the supplement for the consideration of other data generating processes). Formally, the Gaussian measure $\mu$ associated with $\bh_{j_n}$ does not respect the assumptions in Theorem \ref{Th:basic}, since it is degenerate (but recall that $\mu$ does not have to be independent from $\bX$). A nondegenerate Gaussian process can be obtained as $\bh_{j_n}+\boldsymbol{\mathcal{G}}$, with $\boldsymbol{\mathcal{G}}$ a Gaussian process tightly concentrated around zero, albeit employing $\bh_{j_n}$ or $\bh_{j_n}+\boldsymbol{\mathcal{G}}$ has negligible effects in practice. \\ 

The testing procedure is described in the following generic algorithm. 

\begin{Algo}[Testing procedure for $H_0$]
\label{algo:general}
Let $T_n$ denote a test for checking $H_0^{\bh}$ with $\bh$ chosen by a nondegenerate Gaussian measure $\mu$ on $\Hil$:
\begin{enumerate}[label=(\roman{*}), ref=(\roman{*})]
\item For $i=1,\ldots,K$, set by $p_i$ the $p$-value of $H_0^{\bh_i}$ obtained with the test $T_n$.
\item Set the final $p$-value of $H_0$ as $\min_{i=1,\ldots,K}\frac{K}{i}p_{(i)}$, where $p_{(1)}\leq\ldots\leq p_{(K)}$.
\end{enumerate}
\end{Algo}
The calibration of the test statistic for $H_0^{\bh}$ is done by a wild bootstrap resampling. The next algorithm states the steps for testing the FLM. The particular case of the simple null hypothesis corresponds to $\brho=\mathbf{0}$, so its calibration corresponds to setting $\hat\brho=\hat\brho^*=\mathbf{0}$ in the algorithm.

\begin{Algo}[Bootstrap calibration in FLM testing]
\label{algo:boot}
Let $\lrb{\lrp{\bX_i,Y_i}}_{i=1}^n$ be an i.i.d. sample from \eqref{funcmod} and a given $\bh\in\Hil$. To test \eqref{eq:H0m}, proceed as follows: 
	\begin{enumerate}[label=(\roman{*}), ref=(\roman{*})]
 		\item Estimate $\brho$ by FPC for a given $d_n$ and obtain $\hat\varepsilon_i=Y_i-\inprod{\bX_i}{\hat \brho}$.\label{algo:boot:i}
		\item Compute $\|T_{n,\bh}\|_\mathrm{N}=\left|\left|n^{-1/2} \sum_{i=1}^n \mathds{1}_{\lrb{\bX_i^\bh \leq x}}\hat\varepsilon_i\right|\right|_\mathrm{N}$ with $\mathrm{N}$ either $\mathrm{KS}$ or $\mathrm{CvM}$.\label{algo:boot:ii}
		\item \textit{Bootstrap resampling}. For $b=1,\ldots,B$:\label{algo:boot:iii}
		\begin{enumerate}[label=(\alph{*}), ref=(\alph{*})]
			\item Draw binary i.i.d. r.v.'s $V_1^*,\ldots,V_n^*$ such that $\prob{V^*=(1-\sqrt{5})/2}\allowbreak=(5+\sqrt{5})/10$ and $\prob{V^*=(1+\sqrt{5})/2}=(5-\sqrt{5})/10$.\label{algo:boot:iii:a}
			
			\item Set $Y_i^*:=\inprod{\bX_i}{\hat\brho}+\varepsilon_i^*$ using the bootstrap residuals $\varepsilon^*_i:=V_i^*\hat\varepsilon_i$.\label{algo:boot:iii:b}
			
			\item Estimate $\brho^{*}$ from $\lrb{\lrp{\bX_i,Y_i^*}}_{i=1}^n$ by FPC using the same $d_n$ used in \ref{algo:boot:i}. \label{algo:boot:iii:c} 

			\item Obtain the estimated bootstrap residuals $\hat\varepsilon_i^*:=Y_i^*-\inprod{\bX_i}{\hat\brho^{*}}$.\label{algo:boot:iii:d}
			
			\item Compute $\|T_{n,\bh}^{*b}\|_\mathrm{N}:=\left|\left|n^{-1/2} \sum_{i=1}^n \mathds{1}_{\lrb{\bX_i^\bh \leq x}}\hat\varepsilon_i^*\right|\right|_\mathrm{N}$.\label{algo:boot:iii:e}

		\end{enumerate}
		\item Approximate the $p$-value by $\frac{1}{B}\sum_{b=1}^B\mathds{1}_{\lrb{\|T_{n,\bh}^{*b}\|_\mathrm{N}\leq \|T_{n,\bh}\|_\mathrm{N}}}$.\label{algo:boot:iv}
	\end{enumerate}

\end{Algo}

Notice that the role of $c_n$ is the selection of $d_n$ in the estimation of $\brho$. The selection of $d_n$ can be done in a data-driven way by selecting from among a set of candidate $d_n$'s the optimal one in terms of a model-selection criterion. For example, we consider the corrected Schwartz Information Criterion \citep{McQuarrie1999}, defined as $\mathrm{SICc}(d_n):=\ell(\hat{\brho}_{d_n})+\frac{\log(n) d_n}{n-d_n-2}$, in order to overpenalize large $d_n$'s that generate noisy estimates of $\brho$, especially for low sample sizes. In the previous expression, $\ell(\hat{\brho}_{d_n})$ represents the log-likelihood of the FLM for $\brho$ estimated with $d_n$ FPC's. Of course, this selection could be done in terms of the $c_n$'s that determine the $d_n$'s but, since the latter are directly related to the model complexity, its analysis is more convenient in practice. Note that steps \ref{algo:boot:iii:c} and \ref{algo:boot:iii:d} can be easily computed using the properties of the linear model; see Section 3.3 of \cite{Garcia-Portugues:flm}. \\ 

The bootstrap process we are considering is given by (we consider $a_n=n^{-1/2}$)
\begin{align*}
T_{n,\bh}^{*}\lrp{x}:=n^{-1/2} \sum_{i=1}^n \mathds{1}_{\lrb{\bX_i^\bh \leq x}}\hat\varepsilon_i^*=n^{-1/2} \sum_{i=1}^n \mathds{1}_{\lrb{\bX_i^\bh \leq x}}\hat\varepsilon_iV_i^*+n^{-1/2} \sum_{i=1}^n \mathds{1}_{\lrb{\bX_i^\bh \leq x}}\bX_i^{\hat{\brho}-\hat\brho^*},
\end{align*}
which is estimating the distribution of
\[
T_{n,\bh}(x)=n^{-1/2} \sum_{i=1}^n \mathds{1}_{\lrb{\bX_i^\bh \leq x}}\hat\varepsilon_i+n^{-1/2} \sum_{i=1}^n \mathds{1}_{\lrb{\bX_i^\bh \leq x}}\bX_i^{\brho-\hat{\brho}}.
\]
The bootstrap consistency could be obtained as an adaptation of Lemma A.1 of \cite{Stute1998} for the first term of $T_{n,\bh}^*$, Lemma A.2 \textit{ibid} for the second term, and using the decomposition of $\hat{\brho}-\brho$ given in (11) in CMS.

\section{Simulation study and data application}
\label{sec:simu}

We illustrate the finite sample performance of the CvM and KS goodness-of-fit tests implemented using Algorithms \ref{algo:general} and \ref{algo:boot} for the composite hypothesis. In order to examine the possible effects of different functional coefficients $\brho$ and underlying processes for $\bX$, we considered nine possible scenarios combining both factors. The detailed description of these scenarios is given in the supplement, while a coarse-grained graphical idea can be obtained from Figure \ref{fig:sce}. \\

\begin{figure}[t!]
\centering
\includegraphics[width=\textwidth]{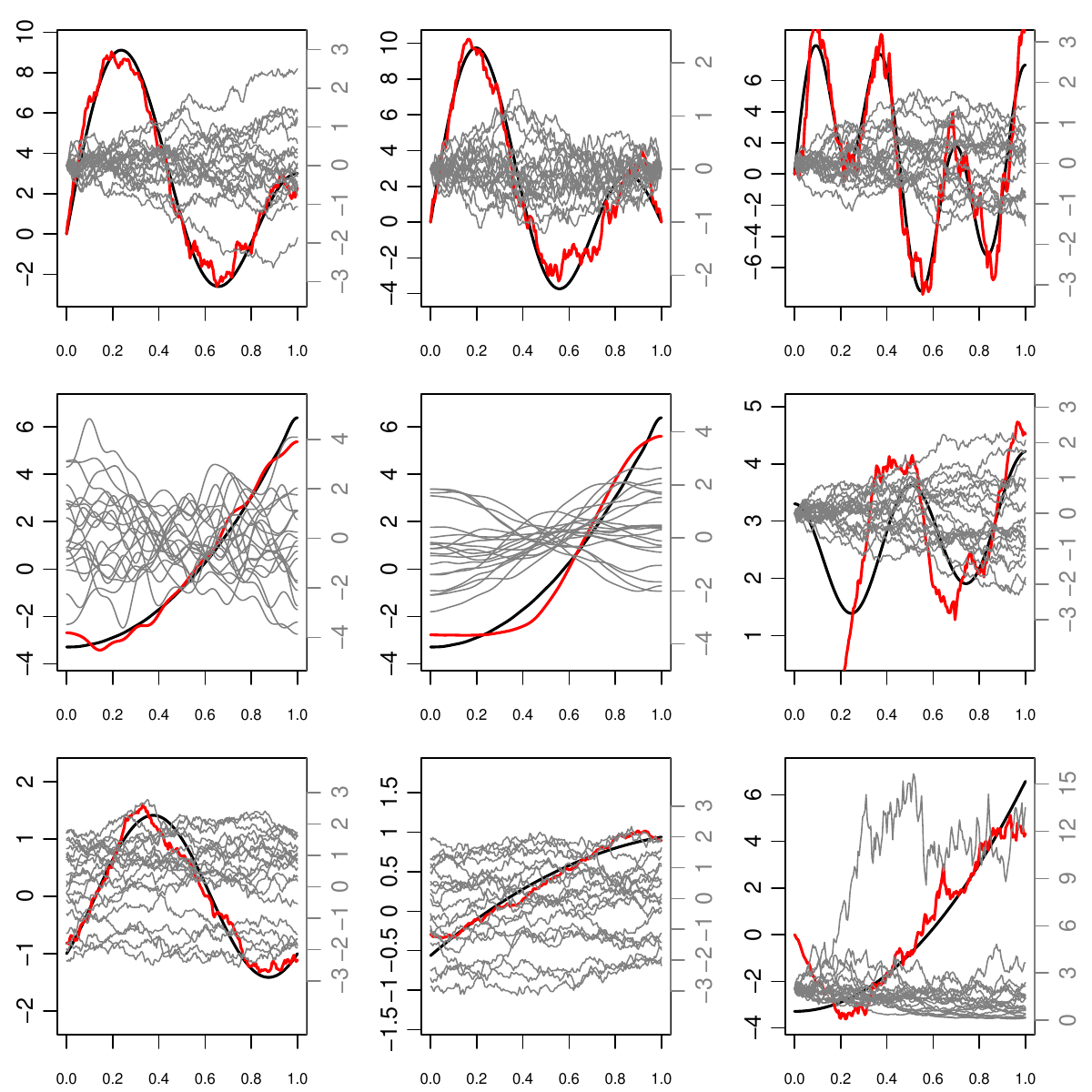}
\caption{\small From left to right and up to down, functional coefficients $\brho$ (black, right scale) and underlying processes (grey, left scale) for the nine different scenarios, labelled S1 to S9. Each graph shows a sample of $100$ realizations of the functional covariate $\bX$ and the estimate $\hat\brho$ (red) with $d_n$ selected by SICc.\label{fig:sce}} 
\end{figure}

The different data generating processes are encoded as follows. For the $k$th simulation scenario S$k$, with functional coefficient $\brho_k$, the deviation from $H_0$ is measured by a deviation coefficient $\delta_d$, with $\delta_0=0$ and $\delta_d>0$ for $d=1,2$. Then, with $H_{k,d}$ we denote the data generation from
\[
Y=\inprod{\bX}{\brho_k}+\delta_d \Delta_{\eta_k}(\bX)+\varepsilon,
\]
where $\boldsymbol{\eta}:=(1,2,1,2,2,1,2,3,3)'$ and the deviations from the linear model are constructed by including the nonlinear terms $\Delta_{1}(\bX):=\norm{\bX}$, $\Delta_{2}(\bX):=25\int_0^1\int_0^1\sin(2\pi ts)s(1-s)t(1-t)\bX(s)\bX(t)\,\mathrm{d}s\,\mathrm{d}t$, and $\Delta_{3}(\bX):=\inprod{e^{-\bX}}{\bX^2}$. The error $\varepsilon$ is distributed as a $\mathcal{N}(0,\sigma^2)$, where $\sigma^2$ was chosen such that, under $H_0$, $R^2=\frac{\V{\inprod{\bX}{\brho}}}{\V{\inprod{\bX}{\brho}}+\sigma^2}=0.95$. The selection of $d_n$ is done automatically by SICc throughout the section. The random directions are drawn from the data-driven Gaussian process described in Section \ref{sec:testing} (see the supplement for other data generating processes and their effects). The choice of the $\delta_d$'s is described in detail in the supplement. \\

\begin{figure}[b!]
\centering
\includegraphics[width=0.315\textwidth]{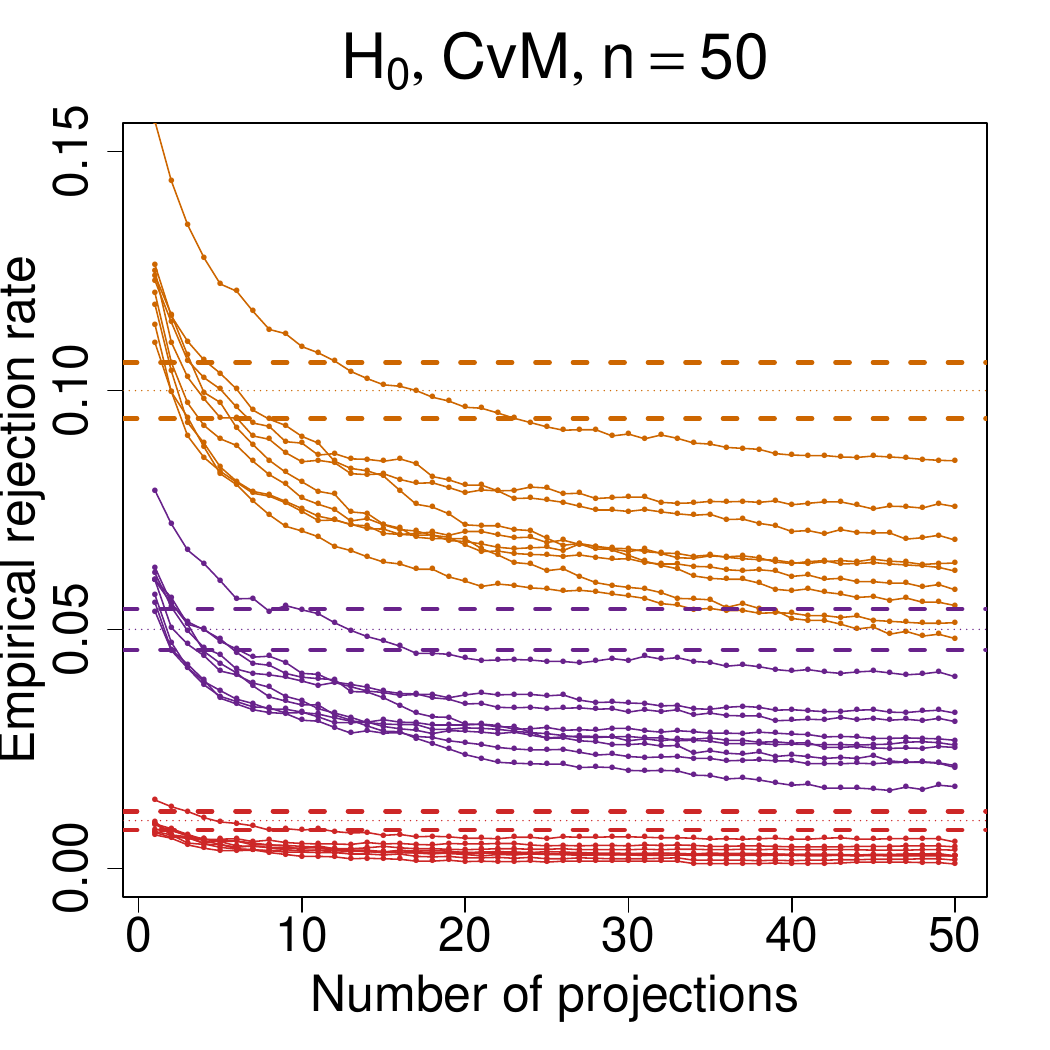}
\includegraphics[width=0.315\textwidth]{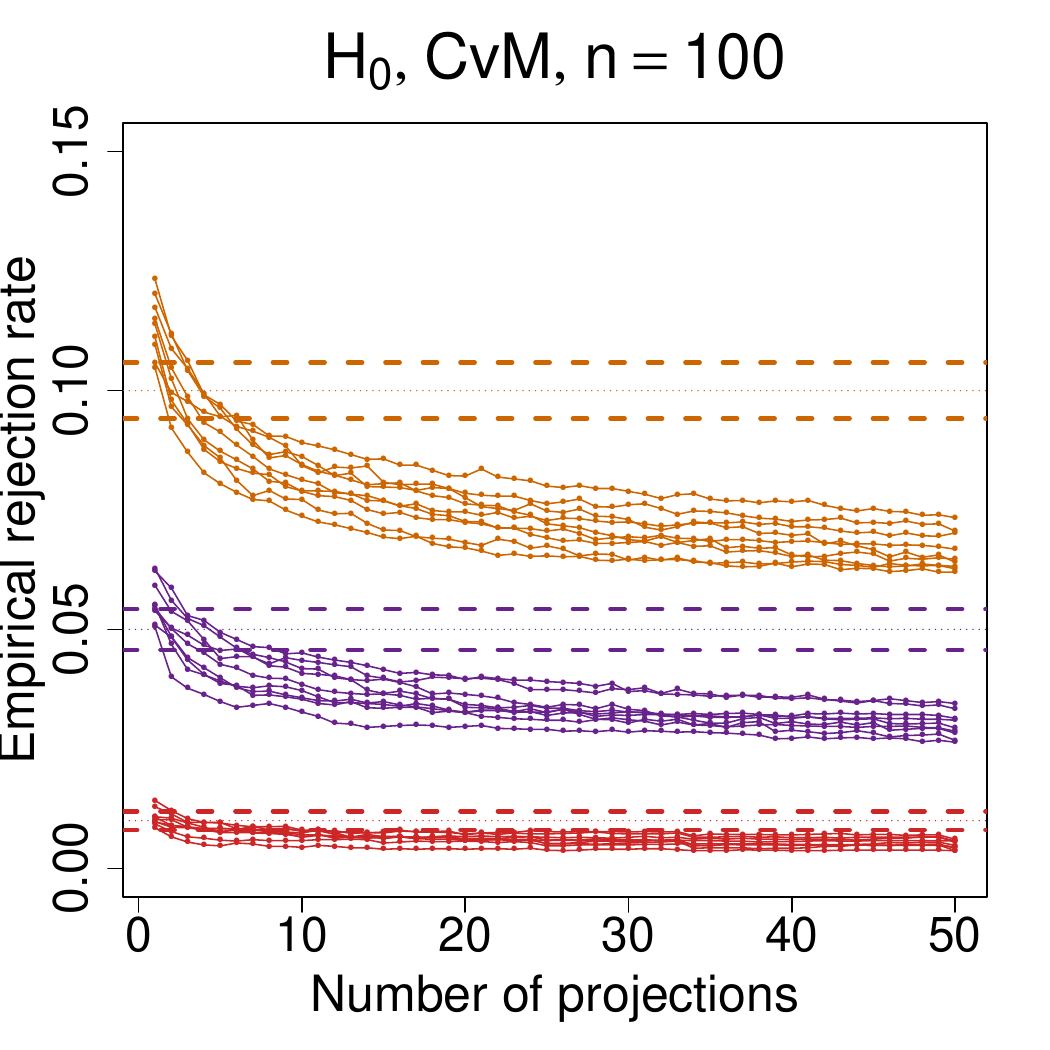}
\includegraphics[width=0.315\textwidth]{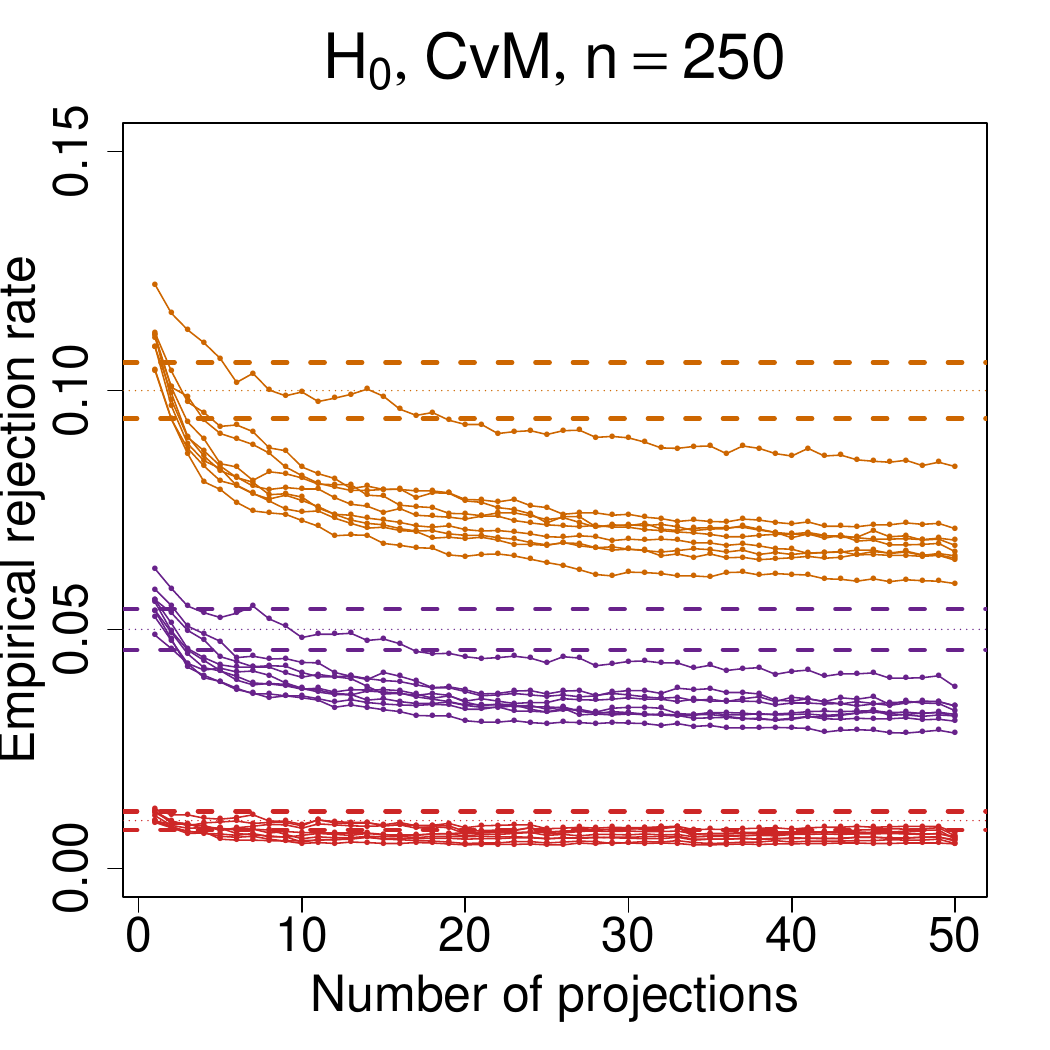}\\
\includegraphics[width=0.315\textwidth]{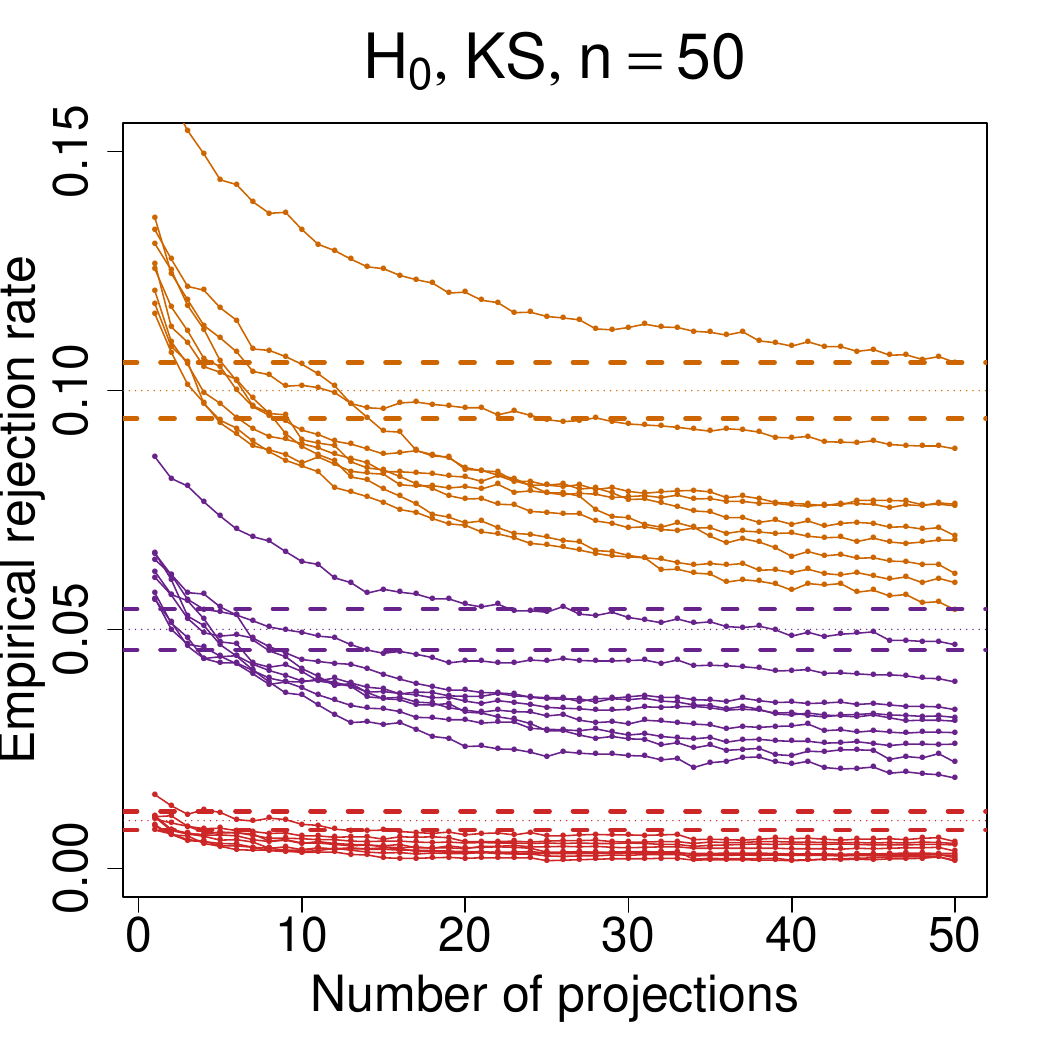}
\includegraphics[width=0.315\textwidth]{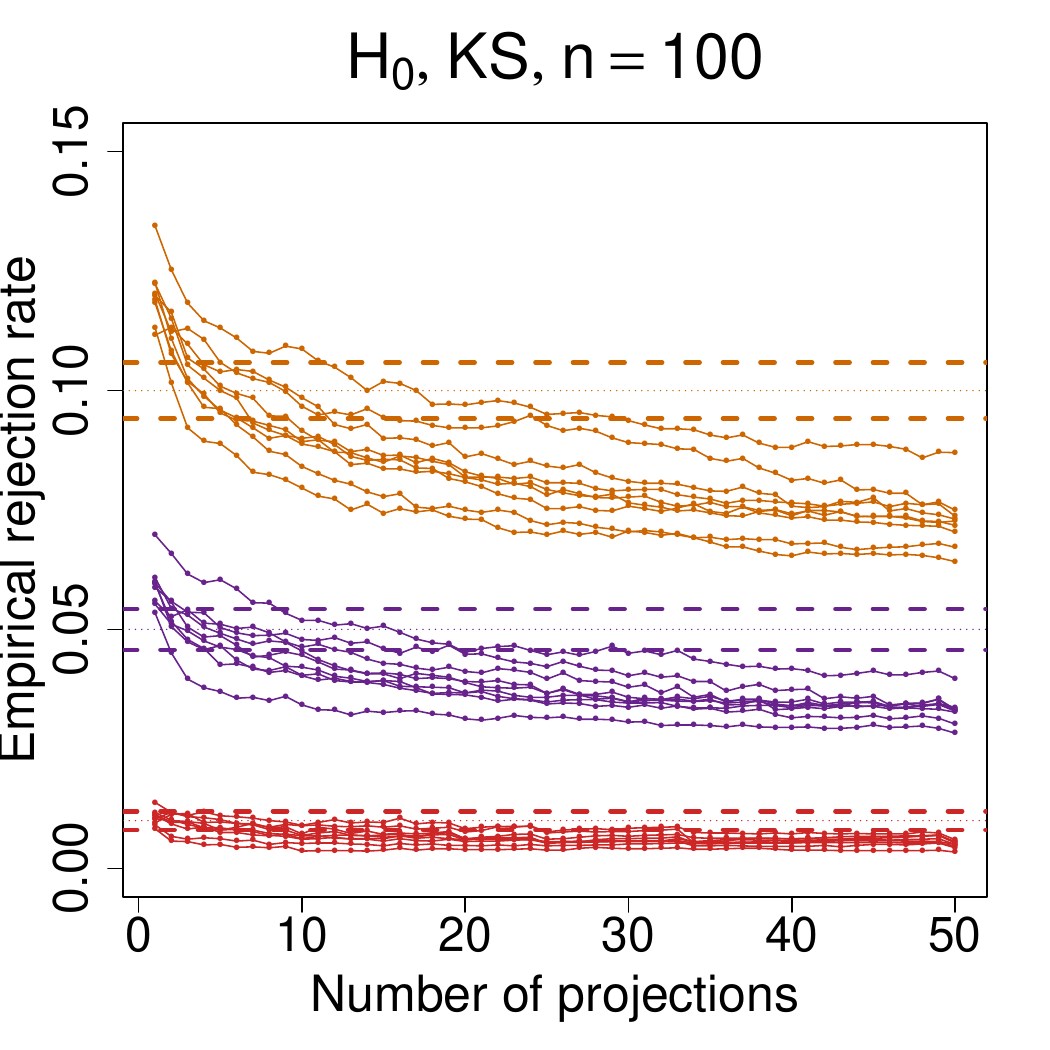}
\includegraphics[width=0.315\textwidth]{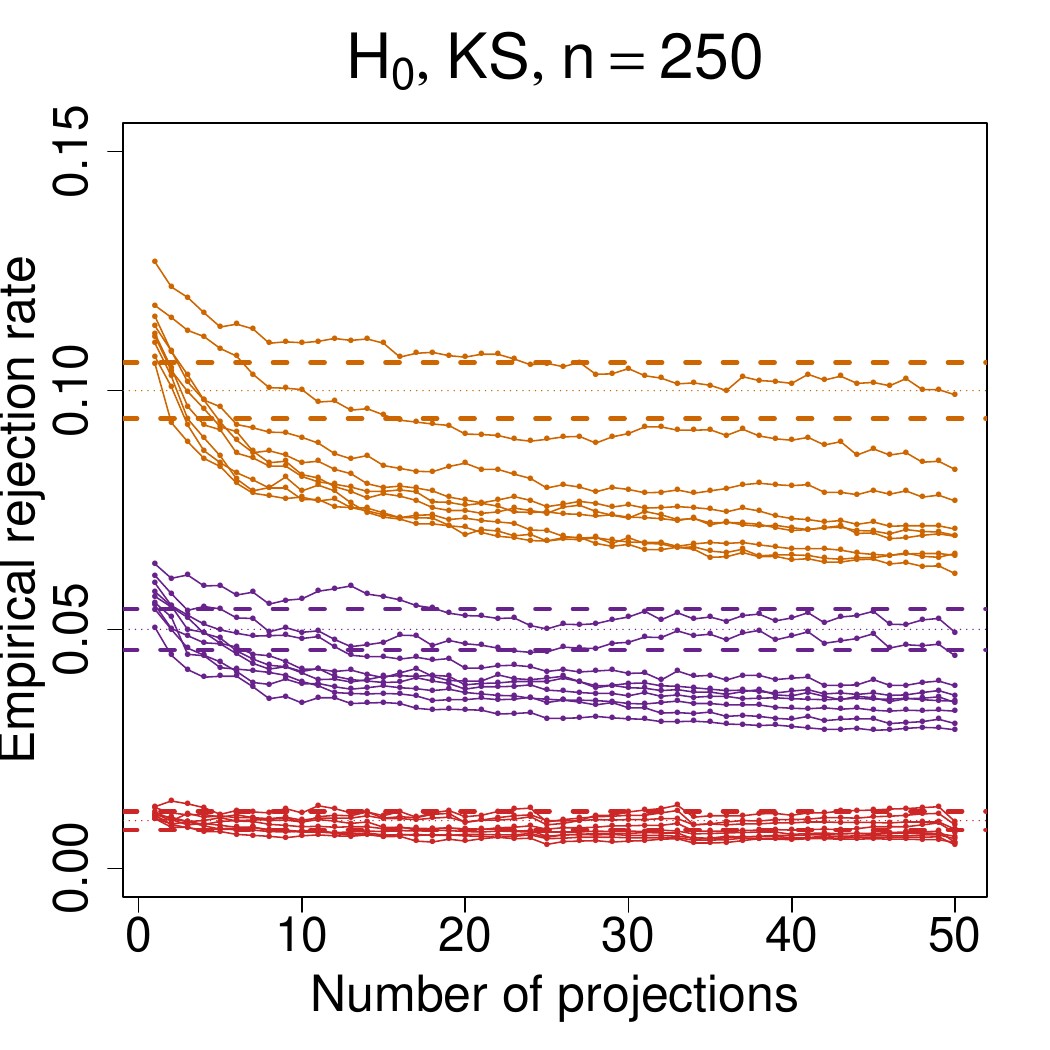}
\caption{\small Empirical sizes of the CvM (upper row) and KS (lower row) tests for scenario S$k$, $k=1,\ldots,9$, depending on the number of projections $K=1,\ldots,50$, and for sample sizes $n=50,100,250$ (from left to right). The empirical sizes associated with the significance levels $\alpha=0.01,0.05,0.10$ are coded in red, purple, and orange, respectively. Dashed thick lines represent the asymptotic $95 \%$ confidence interval for the proportion $\alpha$ obtained from $M$ replicates. \label{fig:sizeproj}}
\end{figure}

\begin{figure}[htb!]
\centering
\includegraphics[width=0.315\textwidth]{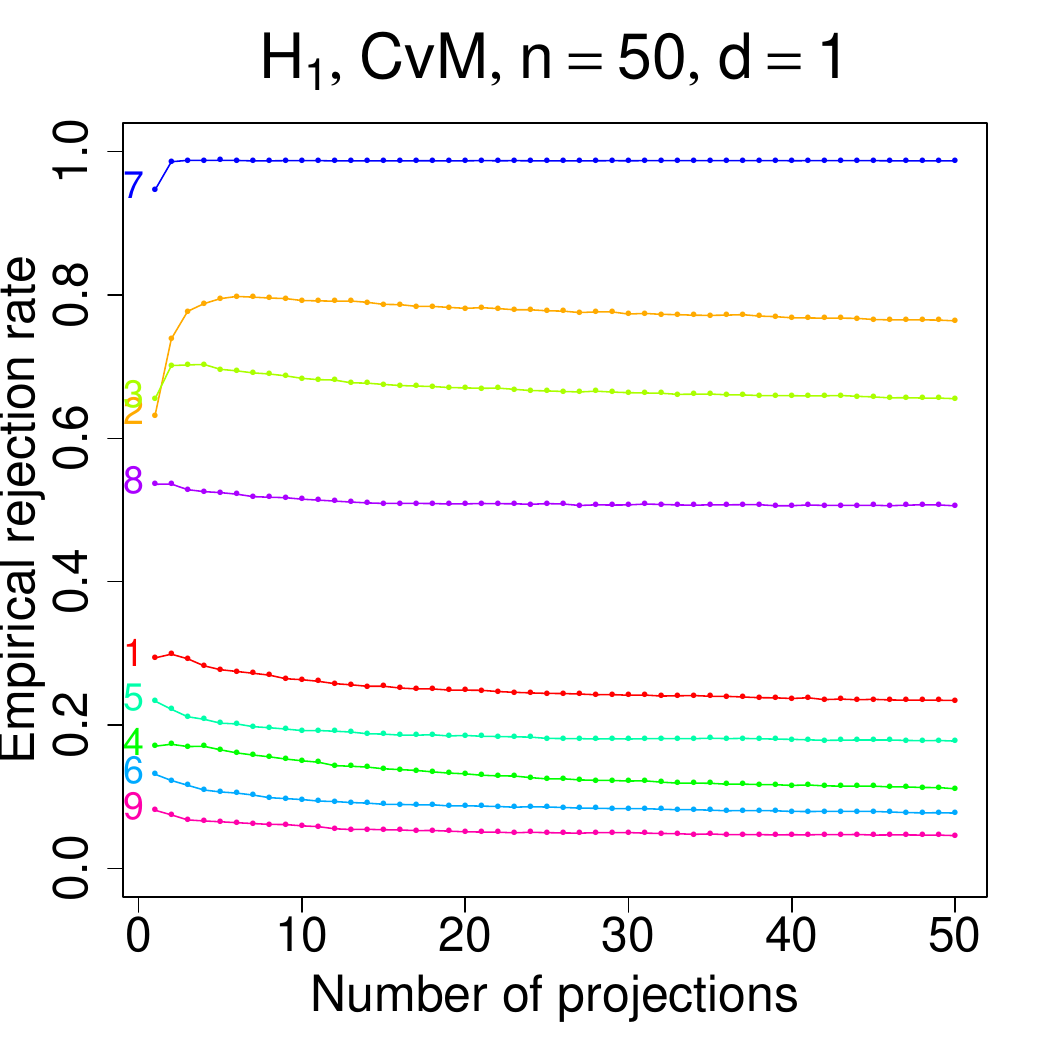}
\includegraphics[width=0.315\textwidth]{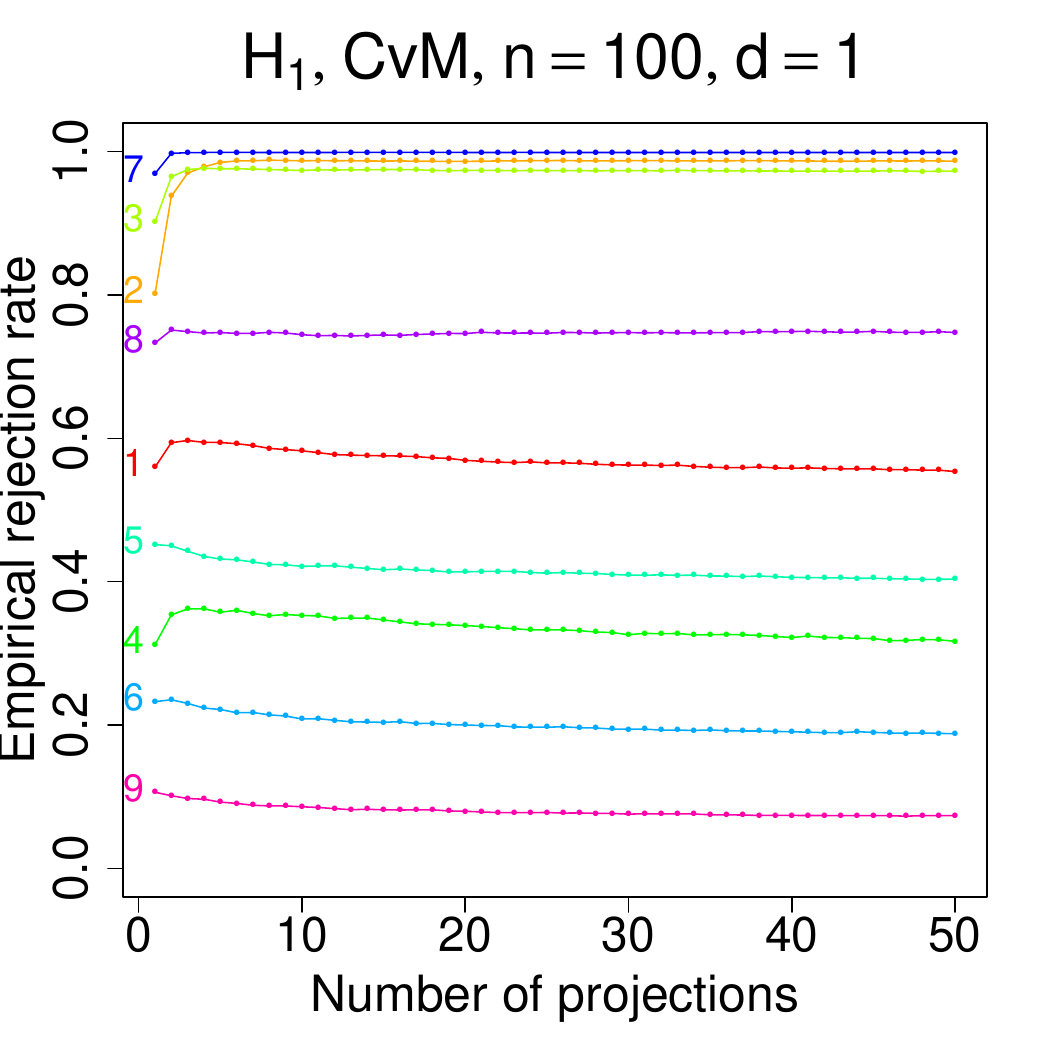}
\includegraphics[width=0.315\textwidth]{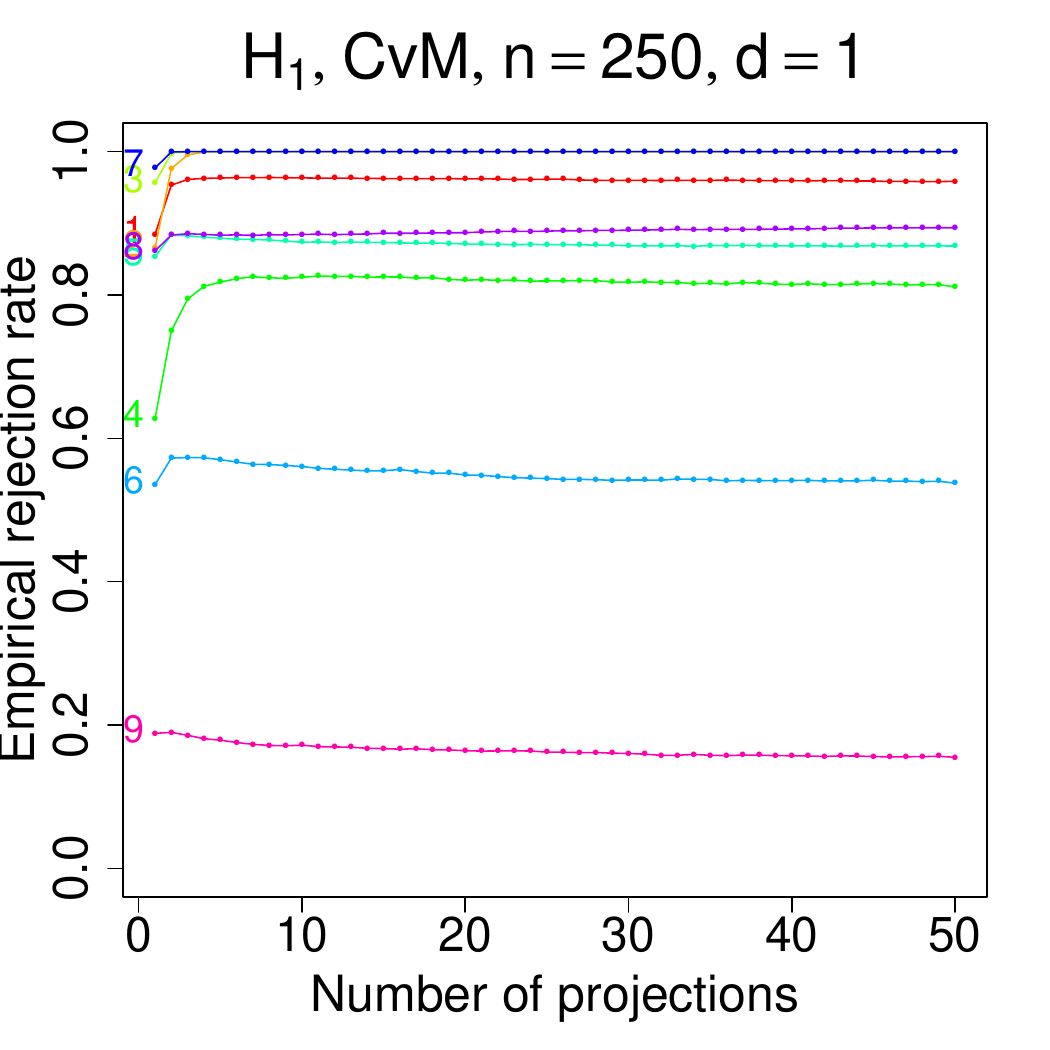}\\
\includegraphics[width=0.315\textwidth]{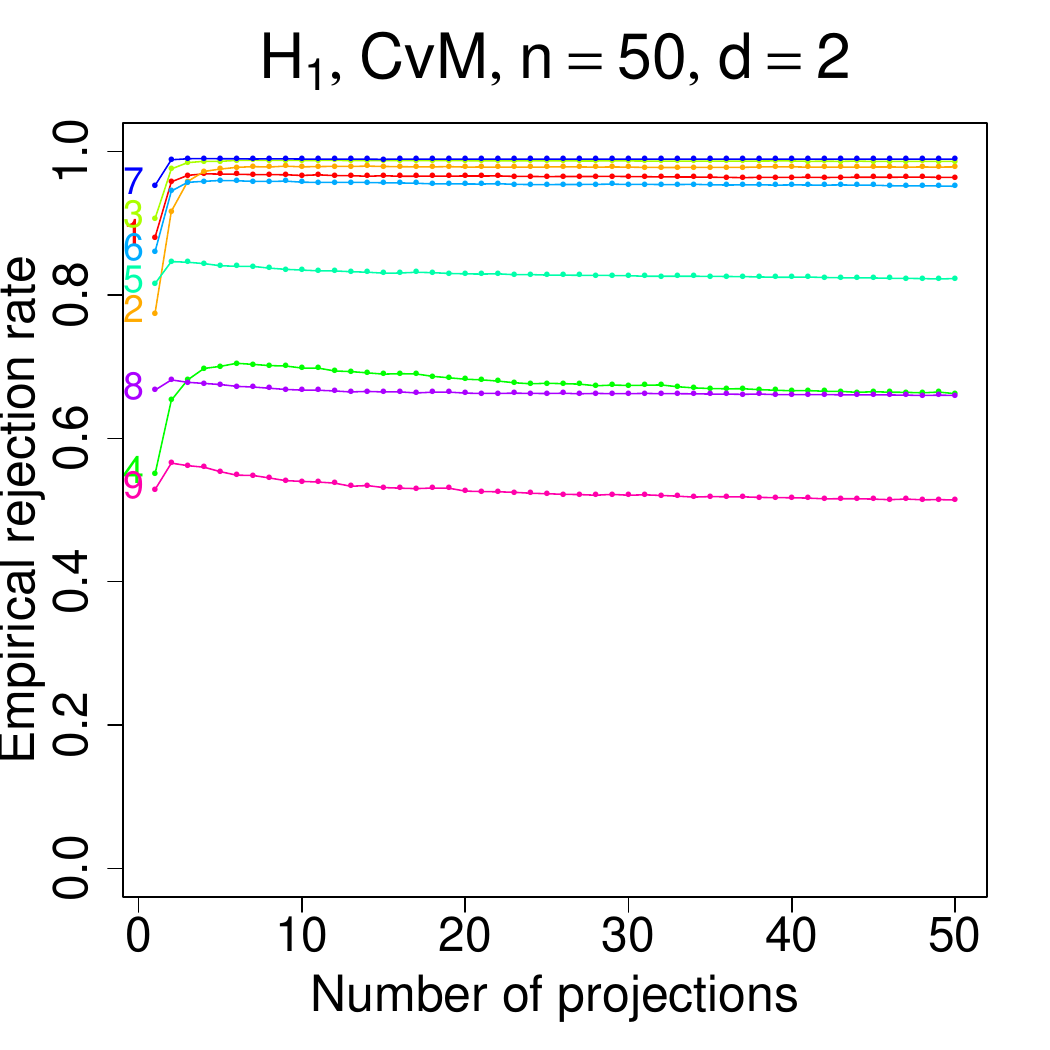}
\includegraphics[width=0.315\textwidth]{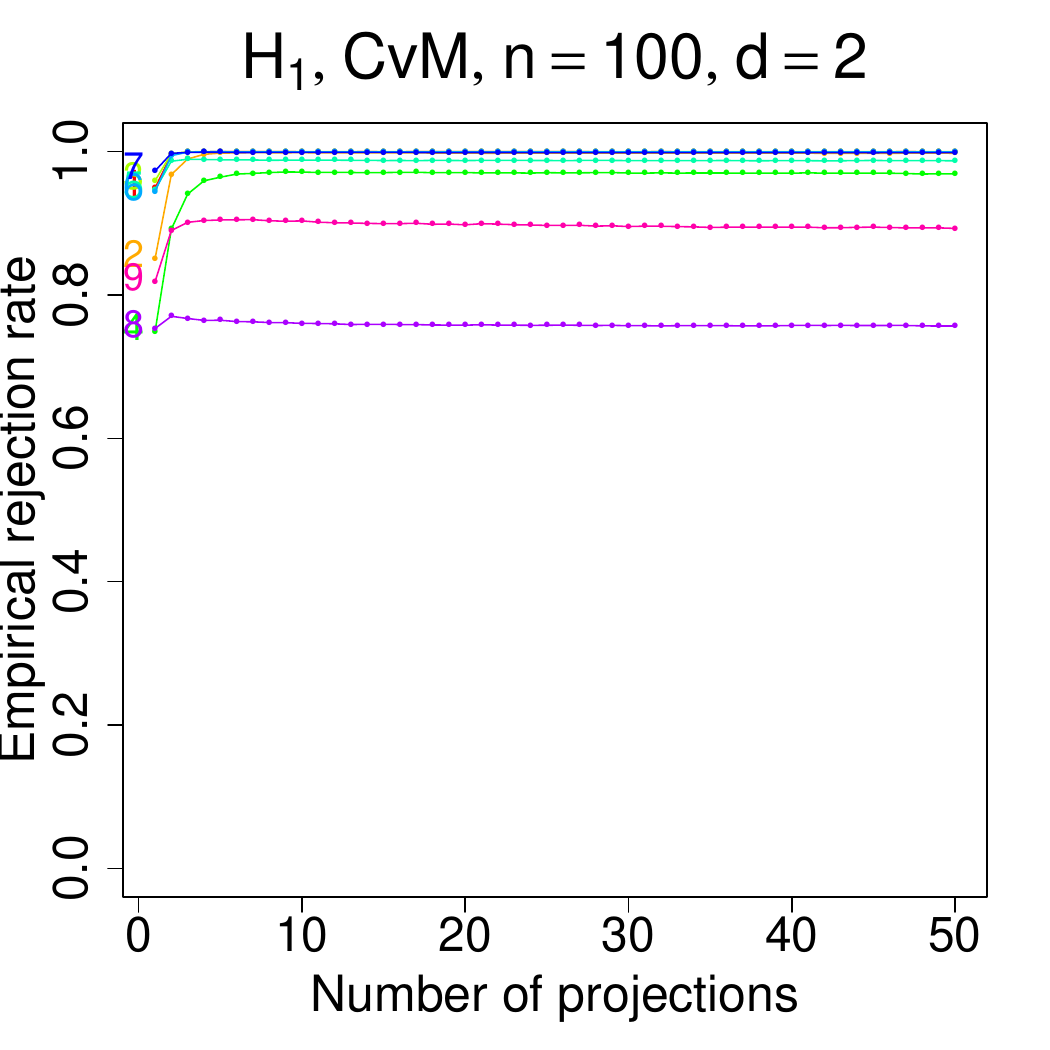}
\includegraphics[width=0.315\textwidth]{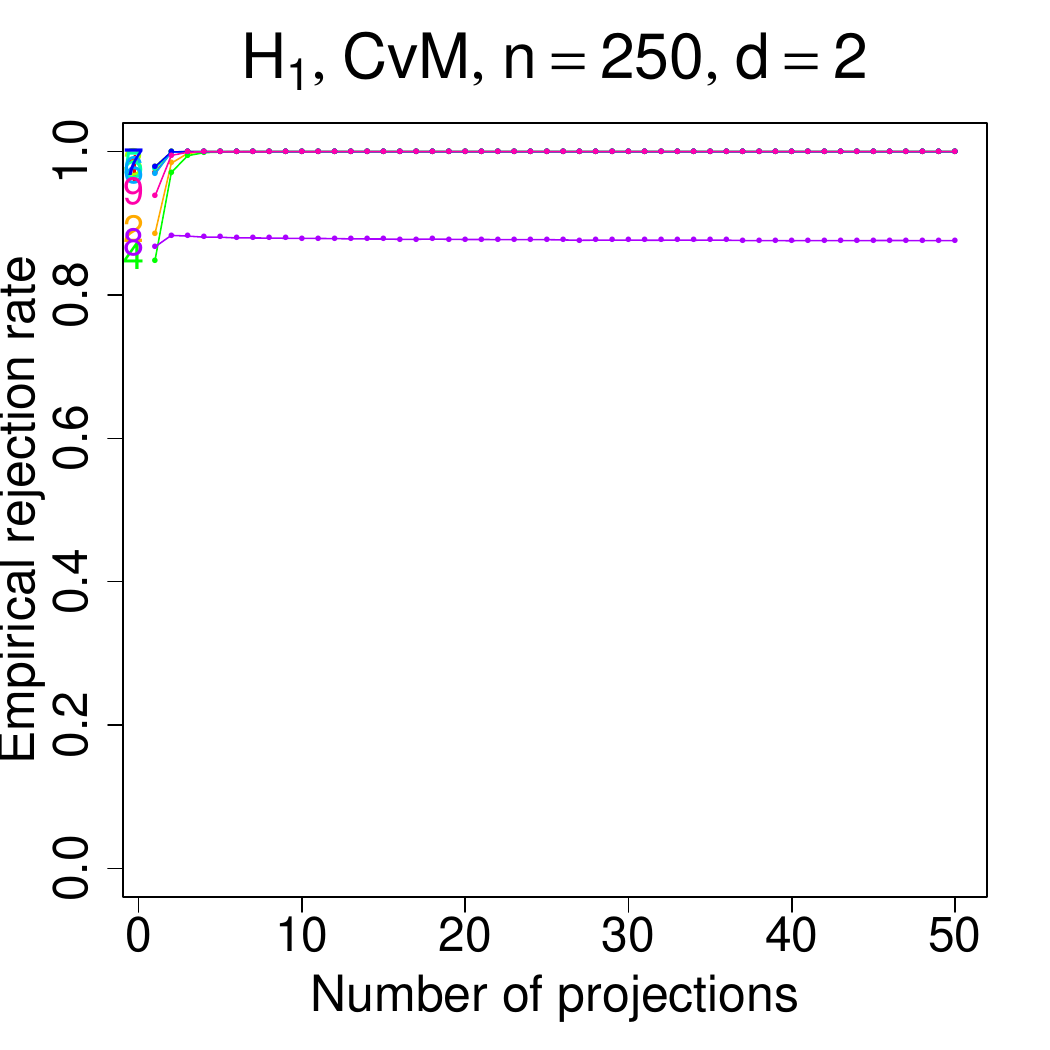}\\
\includegraphics[width=0.315\textwidth]{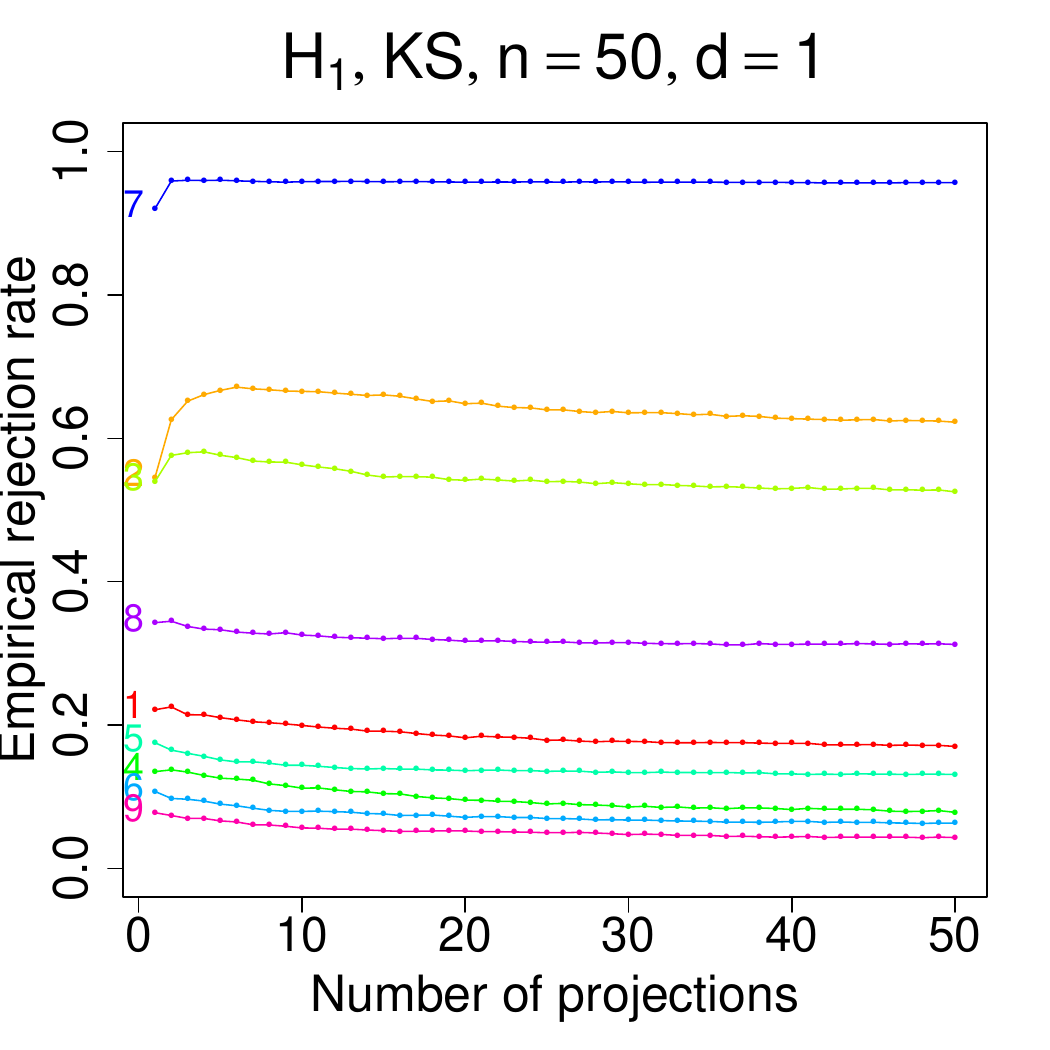}
\includegraphics[width=0.315\textwidth]{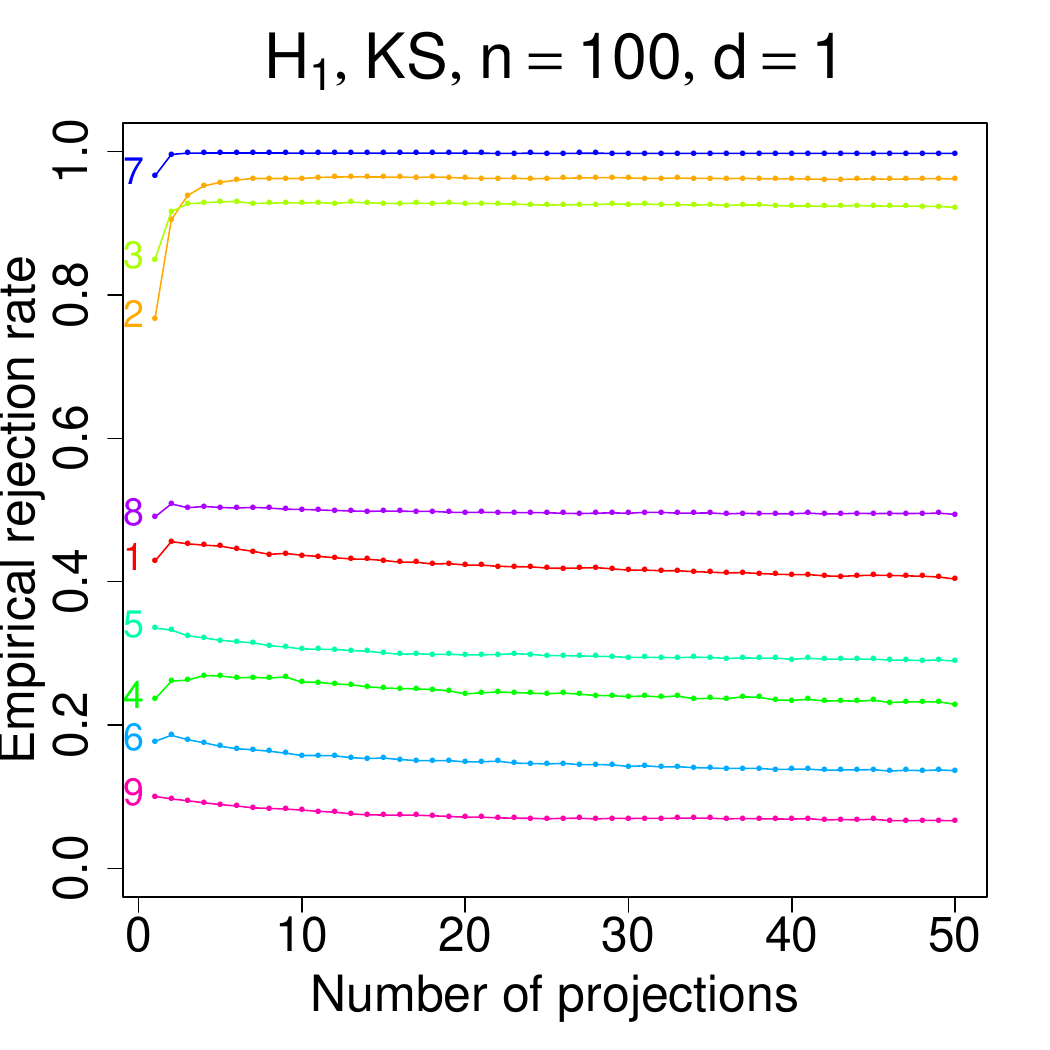}
\includegraphics[width=0.315\textwidth]{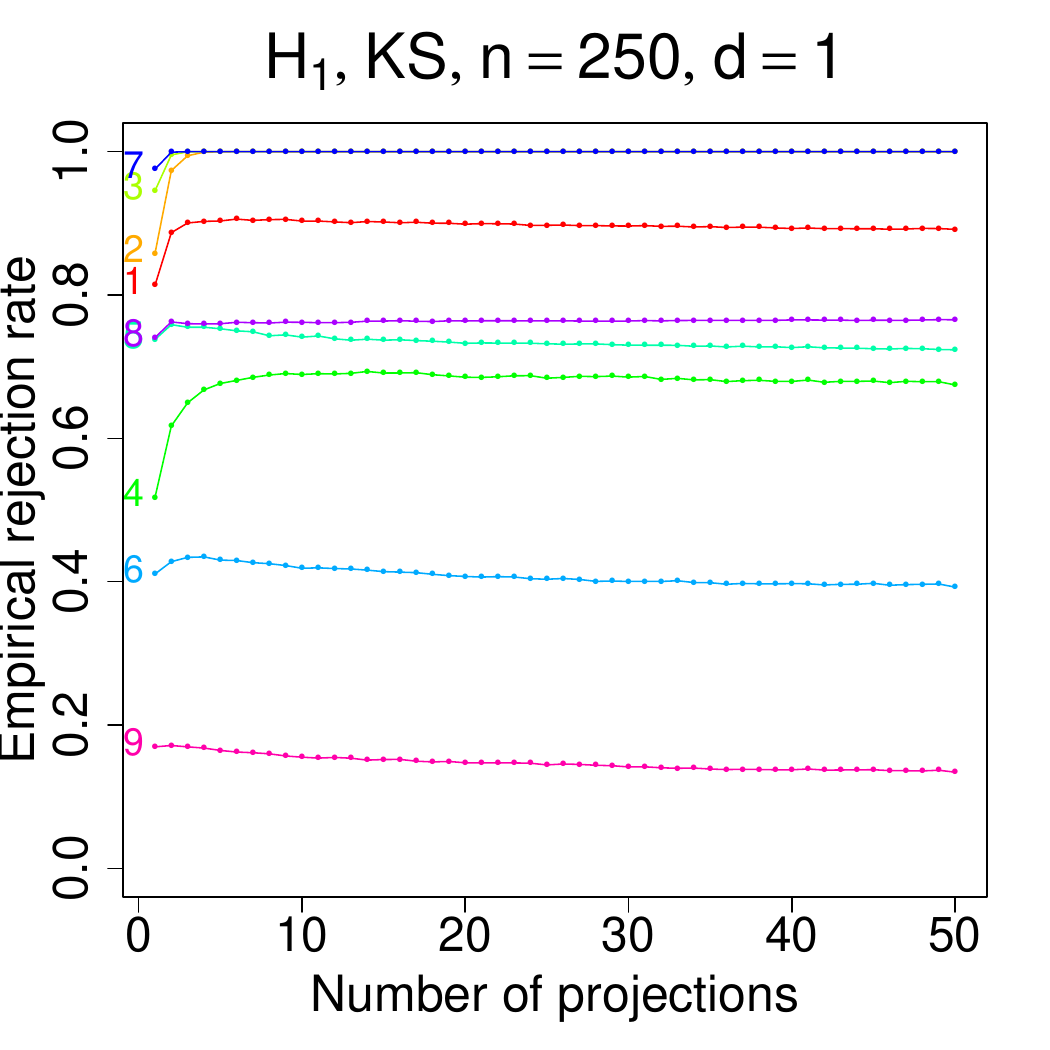}\\
\includegraphics[width=0.315\textwidth]{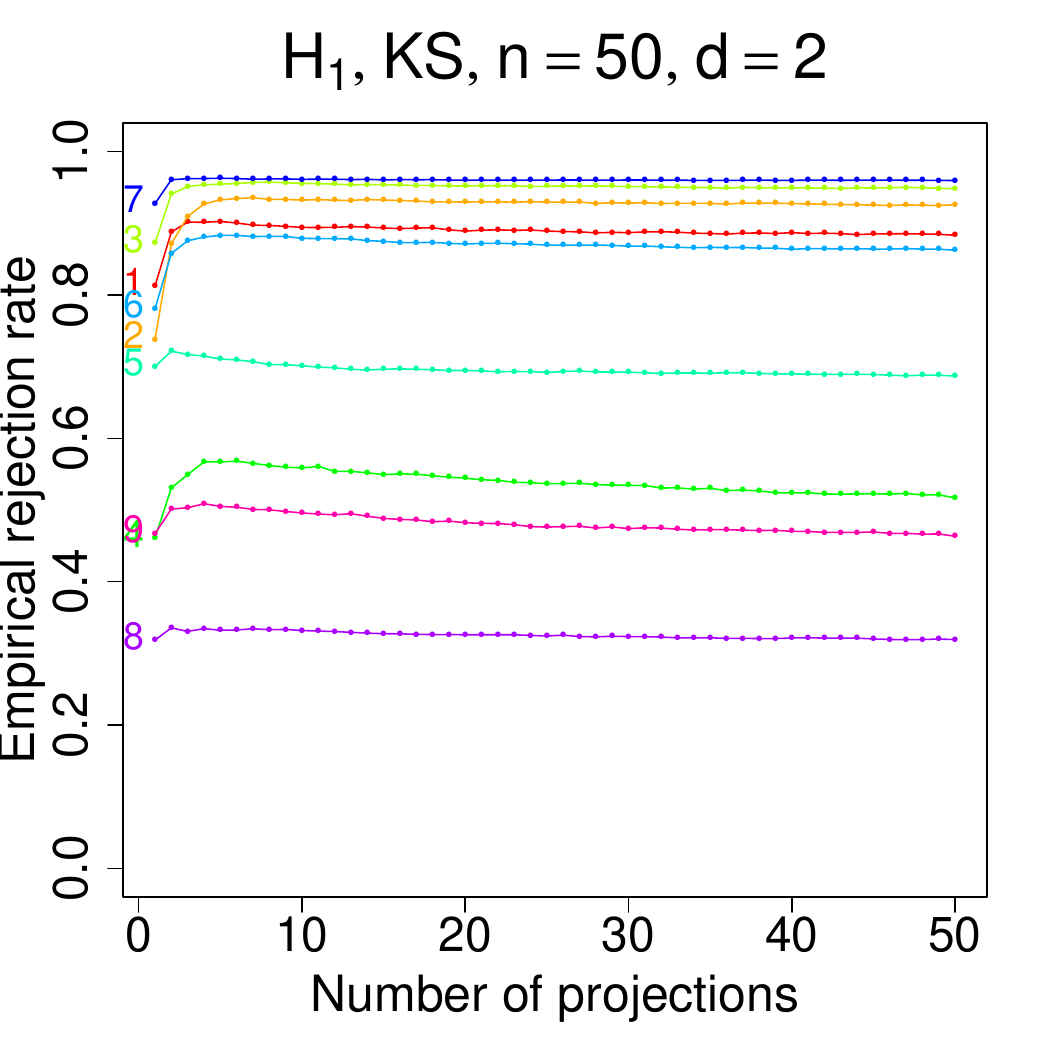}
\includegraphics[width=0.315\textwidth]{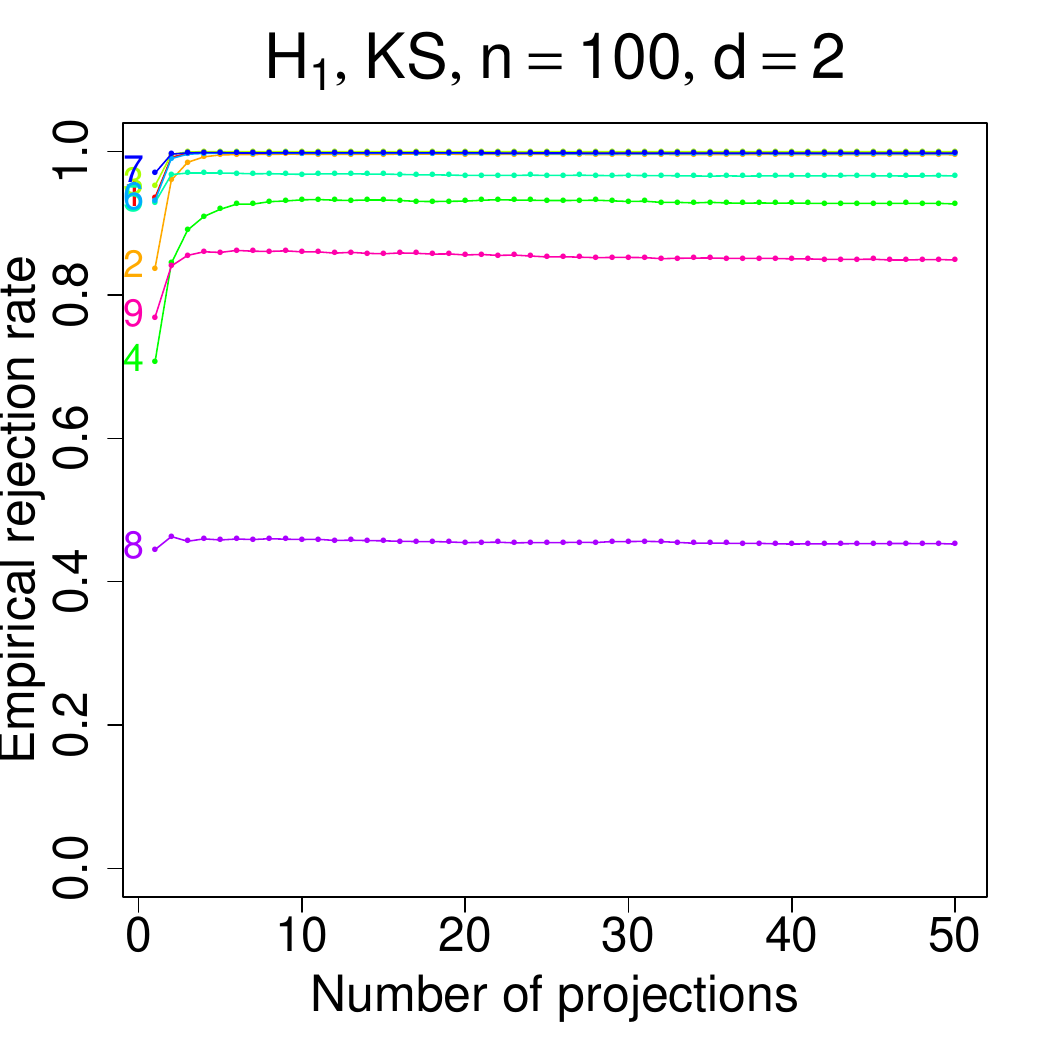}
\includegraphics[width=0.315\textwidth]{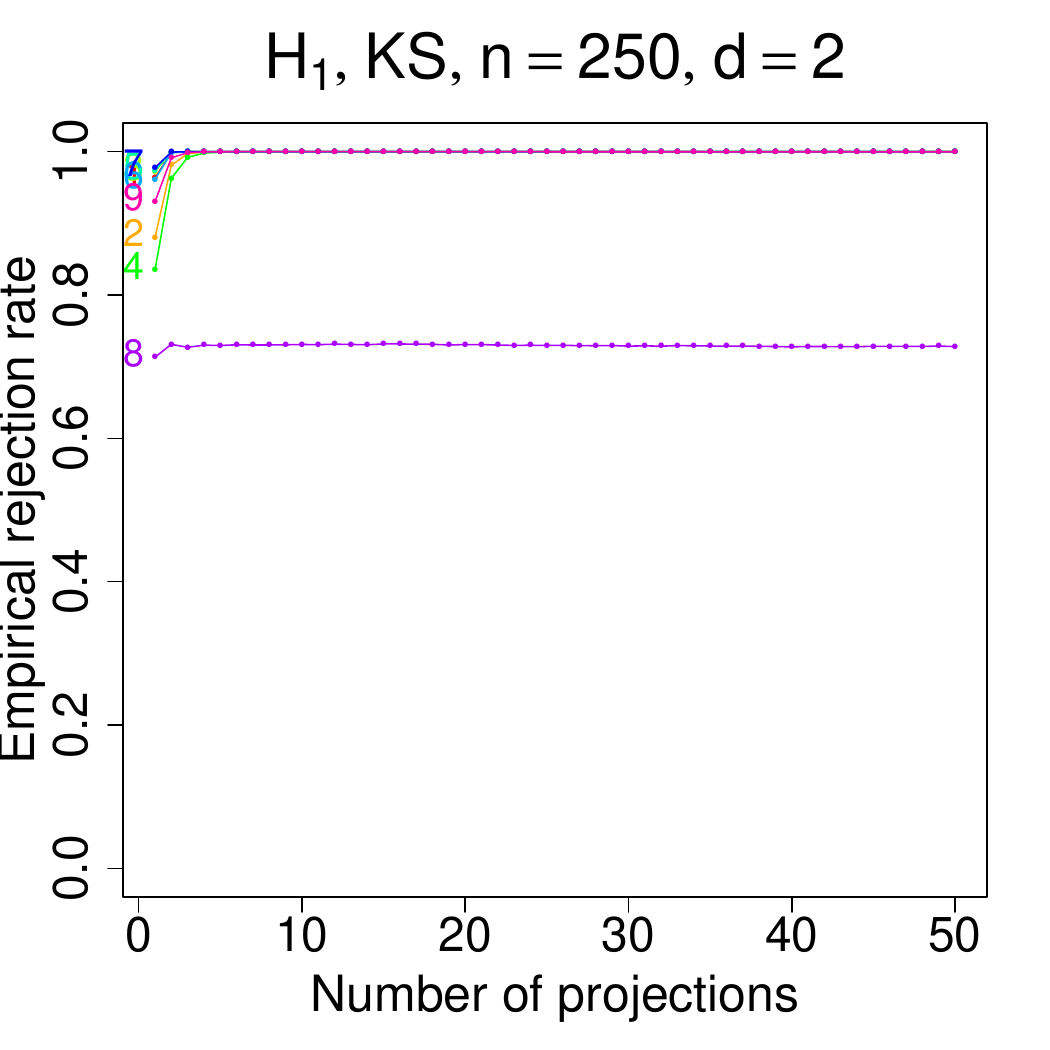}
\caption{\small Empirical powers of the CvM (first two rows) and KS (last two) tests for scenario S$k$, $k=1,\ldots,9$, depending on the number of projections $K=1,\ldots,50$. Odd rows correspond to the deviation index $d=1$, while even account for $d=2$. The significance level is $\alpha=0.05$ and the sample sizes are $n=50,100,250$ (columns, from left to right). \label{fig:powproj}}
\end{figure}

We explore first the dependence of the tests with respect to the number of projections $K$. Figure \ref{fig:sizeproj} shows the empirical level for each scenario, based on $M=10000$ Monte Carlo trials and $B=10000$ bootstrap replicates. There is a clear L-shaped pattern in the empirical rejection rate curves, which is produced by the conservativeness of the FDR correction --- under $H_0$, it ensures that the rejection rate is \textit{at most} $\alpha$ --- when dealing with the highly-dependent projected tests. For small $K$'s ($K\approx3$), both tests calibrate the three levels for different sample sizes reasonably well, with the main exception being $n=50$ and $\alpha=0.10$, for which the tests have a significant over-rejection of the null hypothesis. For moderate to large $K$'s, the empirical rejection rates decrease and stabilize below $\alpha$, resulting in a systematic violation of the confidence intervals. Figure \ref{fig:powproj} shows that the empirical powers with respect to $K$ are almost constant or exhibit mild decrements, except for certain bumps at lower values of $K$ that provide a significant power gain. Both facts point towards choosing the number of projections $K$ to be relatively small, $K\in\{1,2,3,4,5\}$ and particularly $K=3$, in order to make a reasonable compromise between correct calibration and power. In addition to the computational expediency that a small $K$ yields, it also avoids requiring a large $B$ to estimate properly the FDR $p$-values, provided that the FDR correction requires a finer precision in the discretization of the $p$-values for larger $K$ (see the supplement). \\

The tests based on the KS and CvM norms are compared with the test presented in \cite{Garcia-Portugues:flm} (denoted by PCvM), available in the \texttt{R} package \texttt{fda.usc} \citep{Febrero-Bande2011}, and whose test statistic can be regarded as the average of projected CvM statistics. The test was run with the same FPC estimation used in the new tests, the same number of components $d_n$, and $B=10000$ (considered also for CvM and KS). Table \ref{tab:results} presents the empirical rejection rates of the different simulation scenarios with $K=1, 3, 5$ for KS and CvM tests. The results show two consistent patterns. First, in our simulation scenarios, the CvM test consistently dominates over the KS test, with only one exception: $H_{9,1}$ with $n=50$ (see supplement for the latter). This is coherent with the fact that quadratic norms in goodness-of-fit tests are often more powerful than sup-norms (see, e.g., p. 110 of \cite{DAgostino1987} for the distribution case). Second, PCvM tends to have a larger power than CvM for most of the situations, especially for small sample sizes and mild deviations. As an illustration, for $n=50$, the average relative loss in the empirical power for CvM$_3$ with respect to PCvM is $12.7\%$ ($d=1$) and $4.6\%$ ($d=2$). For $n=100$, the losses drop to $9.3\%$ and $1.3\%$, respectively, and for $n=250$, to $5.2\%$ and $0.2\%$, respectively. The drop in performance for CvM with respect to PCvM is expected due to the construction of CvM, which opts for exploring a set of random directions instead of averaging uniformly distributed finite-dimensional directions, as PCvM does. This also yields one the strongest points of the CvM test, which is its relatively short running times, especially for large $n$. Not surprisingly, the number of evaluations performed for computing the CvM statistic is $\Oh(n)$, a notable reduction from PCvM's $\Oh((n^3-n^2)/2)$. Also, the memory requirement for CvM is $\Oh(n)$, instead of PCvM's $\Oh((n^2-n-2)/2)$. The running times in Figure \ref{fig.times} is evidence of this improvement. \\

The new tests were also applied to the two data applications described in \cite{Garcia-Portugues:flm}, yielding similar conclusions. Both datasets are provided in the library \texttt{fda.usc}. The first example uses the classical Tecator dataset, considered in Section 2.1.1 of \cite{Ferraty2006} as a motivating example for introducing nonlinear regression models. The dataset contains $215$ spectrometric curves measuring the absorbance at wavelengths $[850,1050]$ of finely chopped meat samples. Covariates giving the fat, water, and protein content of the meat are also available in the dataset. Typically, the goal is to predict the fat content of a meat sample using the spectrometric curve or any of its derivatives, and, for that, the FLM has been proposed as a candidate model. We test its adequacy for the dataset with the new goodness-of-fit tests proposed. The $p$-values obtained for $K=3$ projections and $B=10000$ are $0.020$ and $0.022$ for CvM and KS, respectively. Using $\hat{\brho}$ with $d_n$ selected by SICc in PCvM gave a $p$-value of $0.006$. Employing the first or second derivatives of the absorbance curves provided null $p$-values. In addition, the tests for $H_0:\brho=\textbf{0}$ also had null $p$-values for all of tests. As a consequence, we conclude that, at level $\alpha=0.05$, there is evidence against the FLM and there is a significant nonlinear relation between the fat content and the absorbance curves. \\

The second example mimics the classical dataset in \cite{Ramsay2005} on Canadian weather stations. The data are contains yearly profiles of temperature from $73$ weather stations of the AEMET (Spanish Meteorological Agency; Spanish acronym) network and other meteorological variables, and the goal is to explain the mean of the wind speed at each location. Prior to its analysis, the dataset was preprocessed to remove the $5\%$ most outlying curves using the \cite{Fraiman2001} depth. With the same settings as before, the CvM and KS tests for $K=3$ projections provided $p$-values equal to $0.612$ and $0.396$, respectively, and PCvM gave a $p\text{-value}=0.080$. In addition, the tests for $H_0:\brho=\textbf{0}$ yielded null $p$-values. We conclude that, at level $\alpha=0.05$, there is no evidence against the FLM and that the effect of the covariate on the response is significant and linear.
\begin{table}[htb!]
\centering
\footnotesize
\setlength{\tabcolsep}{1pt}
\begin{tabular}{l|>{\centering}m{1.00cm}>{\centering}m{1.00cm}>{\centering}m{1.00cm}|>{\centering}m{1.00cm}>{\centering}m{1.00cm}>{\centering}m{1.00cm}|c||>{\centering}m{1.00cm}>{\centering}m{1.00cm}>{\centering}m{1.00cm}|>{\centering}m{1.00cm}>{\centering}m{1.00cm}>{\centering}m{1.00cm}|c}
  \toprule  \toprule
\multirow{2}{*}{$H_{k,\delta}$} & \multicolumn{7}{c||}{$n=100$} & \multicolumn{7}{c}{$n=250$} \\ \cmidrule{2-15}
&  CvM$_{1}$ & CvM$_{3}$ & CvM$_{5}$ & KS$_{1}$ & KS$_{3}$ & KS$_{5}$ & PCvM  
&  CvM$_{1}$ & CvM$_{3}$ & CvM$_{5}$ & KS$_{1}$ & KS$_{3}$ & KS$_{5}$ & PCvM \\ \midrule
$H_{1,0}$ & $5.1$ & $3.9$ & $3.5$ & $5.5$ & $4.6$ & $4.2$ & $4.8$ & $5.4$ & $3.9$ & $3.6$ & $5.6$ & $4.2$ & $3.9$ & $4.9$ \\
$H_{2,0}$ & $5.4$ & $4.6$ & $4.2$ & $5.6$ & $5.1$ & $4.6$ & $3.6$ & $5.8$ & $4.8$ & $4.3$ & $6.1$ & $5.4$ & $4.9$ & $4.7$ \\
$H_{3,0}$ & $6.2$ & $4.9$ & $4.5$ & $7.0$ & $6.0$ & $5.2$ & $5.7$ & $5.6$ & $4.1$ & $3.8$ & $5.8$ & $4.7$ & $4.2$ & $5.3$ \\
$H_{4,0}$ & $5.9$ & $4.4$ & $4.1$ & $5.9$ & $5.0$ & $4.8$ & $4.6$ & $6.3$ & $5.2$ & $4.8$ & $6.4$ & $5.9$ & $5.6$ & $4.9$ \\
$H_{5,0}$ & $5.5$ & $4.0$ & $3.6$ & $6.0$ & $4.3$ & $4.0$ & $4.9$ & $4.9$ & $4.2$ & $3.8$ & $5.0$ & $4.0$ & $3.5$ & $5.0$ \\
$H_{6,0}$ & $5.4$ & $4.3$ & $3.9$ & $6.0$ & $4.9$ & $4.5$ & $5.2$ & $5.6$ & $4.3$ & $4.0$ & $6.0$ & $5.0$ & $4.8$ & $4.8$ \\
$H_{7,0}$ & $5.5$ & $3.9$ & $3.7$ & $6.0$ & $4.7$ & $4.0$ & $5.1$ & $5.4$ & $4.1$ & $3.8$ & $5.5$ & $4.7$ & $4.1$ & $5.2$ \\
$H_{8,0}$ & $5.1$ & $3.5$ & $3.3$ & $5.3$ & $3.7$ & $3.4$ & $4.9$ & $5.3$ & $3.9$ & $3.6$ & $5.4$ & $4.3$ & $3.9$ & $5.1$ \\
$H_{9,0}$ & $6.3$ & $4.8$ & $4.3$ & $6.1$ & $4.9$ & $4.5$ & $6.1$ & $5.6$ & $4.4$ & $4.1$ & $5.7$ & $4.8$ & $4.1$ & $5.9$ \\
\midrule
$H_{1,1}$ & $56.0$ & $59.4$ & $58.3$ & $42.9$ & $45.0$ & $43.7$ & $69.9$ & $88.4$ & $96.3$ & $96.3$ & $81.4$ & $90.3$ & $90.3$ & $98.4$ \\
$H_{2,1}$ & $80.1$ & $98.5$ & $98.7$ & $76.7$ & $95.7$ & $96.3$ & $99.2$ & $86.5$ & $100$ & $100$ & $85.8$ & $100$ & $100$ & $100$ \\
$H_{3,1}$ & $90.2$ & $97.6$ & $97.4$ & $85.0$ & $93.0$ & $92.8$ & $99.2$ & $95.6$ & $100$ & $100$ & $94.5$ & $100$ & $100$ & $100$ \\
$H_{4,1}$ & $31.2$ & $35.7$ & $35.3$ & $23.6$ & $26.8$ & $26.0$ & $43.6$ & $62.7$ & $81.8$ & $82.5$ & $51.8$ & $67.7$ & $68.9$ & $88.6$ \\
$H_{5,1}$ & $45.2$ & $43.1$ & $42.1$ & $33.5$ & $31.8$ & $30.6$ & $49.9$ & $85.4$ & $87.9$ & $87.4$ & $73.8$ & $75.3$ & $74.1$ & $91.5$ \\
$H_{6,1}$ & $23.3$ & $22.2$ & $20.8$ & $17.7$ & $17.0$ & $15.7$ & $27.9$ & $53.5$ & $57.0$ & $56.1$ & $41.1$ & $43.0$ & $41.9$ & $66.9$ \\
$H_{7,1}$ & $96.9$ & $99.9$ & $99.9$ & $96.6$ & $99.8$ & $99.8$ & $99.9$ & $97.7$ & $100$ & $100$ & $97.5$ & $100$ & $100$ & $100$ \\
$H_{8,1}$ & $73.3$ & $74.8$ & $74.5$ & $49.0$ & $50.3$ & $50.1$ & $74.7$ & $86.2$ & $88.3$ & $88.4$ & $74.1$ & $76.0$ & $76.2$ & $87.7$ \\
$H_{9,1}$ & $10.6$ & $9.2$ & $8.6$ & $10.0$ & $8.9$ & $8.1$ & $12.1$ & $18.8$ & $17.9$ & $17.2$ & $17.0$ & $16.4$ & $15.5$ & $22.3$ \\
\midrule
$H_{1,2}$ & $94.9$ & $100$ & $100$ & $93.6$ & $99.9$ & $99.9$ & $100$ & $97.0$ & $100$ & $100$ & $96.3$ & $100$ & $100$ & $100$ \\
$H_{2,2}$ & $85.0$ & $99.8$ & $99.9$ & $83.7$ & $99.5$ & $99.6$ & $99.9$ & $88.6$ & $100$ & $100$ & $88.0$ & $100$ & $100$ & $100$ \\
$H_{3,2}$ & $95.9$ & $100$ & $100$ & $95.2$ & $100$ & $100$ & $100$ & $97.9$ & $100$ & $100$ & $97.2$ & $100$ & $100$ & $100$ \\
$H_{4,2}$ & $74.8$ & $96.4$ & $97.2$ & $70.6$ & $92.0$ & $93.3$ & $98.2$ & $84.8$ & $99.9$ & $100$ & $83.5$ & $99.9$ & $100$ & $100$ \\
$H_{5,2}$ & $94.7$ & $98.9$ & $98.8$ & $92.9$ & $97.0$ & $96.8$ & $99.1$ & $97.6$ & $100$ & $100$ & $97.4$ & $100$ & $100$ & $100$ \\
$H_{6,2}$ & $94.4$ & $100$ & $100$ & $93.2$ & $99.8$ & $99.8$ & $100$ & $96.9$ & $100$ & $100$ & $96.1$ & $100$ & $100$ & $100$ \\
$H_{7,2}$ & $97.3$ & $99.9$ & $99.9$ & $97.0$ & $99.9$ & $99.8$ & $99.9$ & $97.9$ & $100$ & $100$ & $97.8$ & $100$ & $100$ & $100$ \\
$H_{8,2}$ & $75.3$ & $76.5$ & $76.0$ & $44.5$ & $45.8$ & $45.9$ & $78.2$ & $86.7$ & $88.0$ & $87.9$ & $71.3$ & $73.0$ & $73.1$ & $88.9$ \\
$H_{9,2}$ & $81.8$ & $90.5$ & $90.3$ & $76.9$ & $85.9$ & $86.0$ & $93.9$ & $93.8$ & $100$ & $100$ & $93.0$ & $100$ & $100$ & $100$ \\
\bottomrule\bottomrule
\end{tabular}
\caption{\small Empirical sizes and powers (in percentages) of the CvM, KS, and PCvM tests with $\alpha=0.05$, sample sizes $n=100, 250$, and estimation of $\brho$ by data-driven FPC ($d_n$ chosen by SICc). KS and CvM tests are shown with $1$, $3$, and $5$ projections. \label{tab:results}}
\end{table}

\begin{figure}[h!]
\centering
\includegraphics[width=0.55\textwidth]{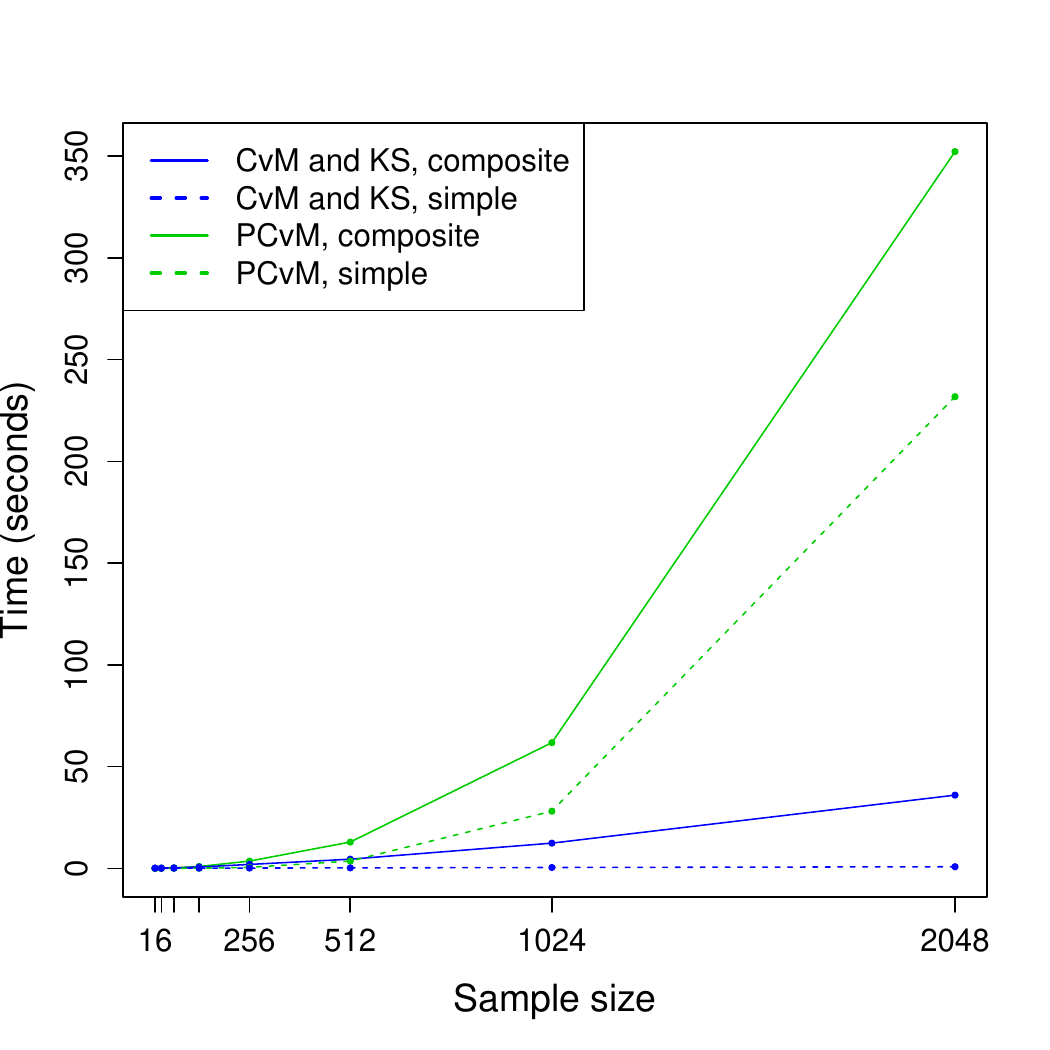}
\caption{\small Running times for the CvM and KS tests (the computation of both tests is done in the same routine) and PCvM, for the composite and simple hypotheses. The tests were averaged over $M=100$ trials and calibrated with $B=1000$. The sample sizes are $n=2^k$, $k=4,\ldots,11$, and the number of projections considered is $K=3$. Times were measured on a 2.53 GHz core. All the tests have a similar implementation in \texttt{R} that interfaces \texttt{FORTRAN} for the computation of the statistics. \label{fig.times}}
\end{figure}

\section{Discussion}
\label{sec:final}

We have presented a new way of building goodness-of-fit tests for regression models with functional covariates employing random projections. The methodology was illustrated using randomly projected empirical processes, which provided root-$n$ consistent tests for testing functional linearity. The calibration of the tests was done by a wild bootstrap resampling and the FDR was used to combine $K$ $p$-values coming from different projections to account for a higher power. The empirical analysis of the tests, conducted in a fully data-driven way, showed that, in our simulation scenarios, CvM yields higher powers than KS and that a selection of $K\in\{1,\ldots,5\}$, in particular $K=3$, is a reasonable compromise between respecting size and increasing power. There is still a price to pay in terms of a moderate loss of power with respect to the PCvM test, which averages across a set of uniformly distributed finite-dimensional directions. However, the reduction in computational complexity of the new tests is more than notable. \\ 

We conclude the paper by sketching some promising extensions of the methodology for the testing of more complex models involving functional covariates: 
\begin{itemize}
\item[(a)] Testing the significance of the functional covariate of $(\bX,{\bf W})\in\Hil\times\R^q$ in the functional partially linear model \citep{Aneiros-Perez2006} $Y=m(\bX)+{\bf W}'{\boldsymbol\beta}+\varepsilon$. The process to be considered for a sample $\{\lrp{\bX_i,{\bf W}_i,Y_i}\}_{i=1}^n$ and an estimator $\hat{\boldsymbol{\beta}}$ such that $\hat{\boldsymbol{\beta}}-\boldsymbol{\beta}=\OP(n^{-1/2})$ is
$n^{-1/2} \sum_{i=1}^n\mathds{1}_{\lrb{\bX_i^\bh\leq x}}\big(Y_i-{\bf W}_i'\hat{\boldsymbol{\beta}}\big)$.
\item[(b)] Testing a functional quadratic regression model \citep{Horvath2013}.
\item[(c)] Testing the significance of a functional linear model with functional response:
$H_0:\E{\bY|\bX}=\mathbf{0}$, where now $\lrp{\bX,\bY}\in\Hil_1\times\Hil_2$ and the associated empirical process is $n^{-1/2} \sum_{i=1}^n\mathds{1}_{\big\{\bX_i^{\bh_1}\leq x\big\}}\bY_i^{\bh_2}$.
\end{itemize} 

\section*{Software availability}

The \texttt{R} package \texttt{rp.flm.test}, openly available at \url{https://github.com/egarpor/rp.flm.test}, contains the implementation of the tests and allows reproduction of the simulation study and data applications. The main function, \texttt{rp.flm.test}, has also been included in the \texttt{R} package \texttt{fda.usc} since version 1.3.1. 

\section*{Supplement}

Two extra appendices are included as supplementary material, containing the proofs of the technical lemmas and further results for the simulation study.

\section*{Acknowledgments}

The first author was supported by projects MTM2014--56235--C2--2--P and MTM2017--86061--C2--2--P from the Spanish Ministry of Economy, Industry and Competitiveness. The remaining authors were supported by: projects MTM2013--41383--P and MTM2016--76969--P from the Spanish Ministry of Economy, Industry and Competitiveness, and the European Regional Development Fund; project 10MDS207015PR from Dirección Xeral de I+D, Xunta de Galicia; IAP network StUDyS from Belgian Science Policy. The second author acknowledges the support by FPU grant AP2010--0957 from the Spanish Ministry of Education and the Dynamical Systems Interdisciplinary Network, University of Copenhagen. We thank one anonymous referee and an Associate Editor for detailed and useful comments that led to significant improvements on an earlier draft of the paper. We gratefully acknowledge the computational resources of the Supercomputing Center of Galicia (CESGA).


\appendix

\section{Proofs of the main results}
\label{ap:proofs}

\subsection{Hypothesis projection}
\label{ap:hyproj}

\begin{proof}[Proof of Proposition \ref{PropNoRandom}]
We denote by $\bX_p$ both the vectors $(X_1,\ldots, X_p)'$ and $(X_1,\ldots, X_p, 0,\ldots)'$ containing the first $p$ coefficients of $\bX$ in an orthonormal basis of $\Hil$. We prove first the result for the finite subspace of $\Hil$ spanned by the first $p$ elements of the orthonormal basis. We need to show that
\begin{align}
\E{Y | \bX_p} = 0 \text{ a.s.} \iff \E{Y | \langle \bX_p, \bh \rangle} = 0 \text{ a.s. for every } \bh\in \Hil. \label{eq.1}
\end{align}
To prove this, we make use of Theorem 1 in \cite{Bierens1982}, which states that if $\bV$ and $\bZ$ are two $\R^p$-valued random vectors, then
\begin{align}
\E{\bV  | \bZ}=0 \text{ a.s.} \iff   \mathbb{E}\big[\bV e^{i \langle \mathbf{t},\mathbf{Z}\rangle}\big]=0\text{ for every } \bt \in \R^p. \label{eq.Bier}
\end{align}
Assume that $\E{Y | \bX_p} = 0$ and let $ \bh \in \Hil$. Since the $\sigma$-algebra generated by $\langle \bX_p, \bh\rangle$, $\sigma(\langle \bX_p, \bh\rangle)$, is contained in $\sigma(\bX_p)$, we have that
$\E{Y | \langle \bX_p, \bh\rangle} = \E{\E{Y |\bX_p}| \langle \bX_p, \bh\rangle} = 0$ a.s.,
which shows the \textit{if} part. To obtain the \textit{only if} part, let $\bh \in \Hil$, and compute
$\mathbb{E}\big[ Y e^{i t \left\langle \mathbf{X}_p, \bh\right\rangle}\big] =
\mathbb{E}\big[\mathbb{E}\big[ Y | \langle \bX_p, \bh\rangle \big] e^{i t \left\langle \mathbf{X}_p, \bh\right\rangle}\big]=0$, for every $t \in \R$. Then \eqref{eq.1} follows from \eqref{eq.Bier}.\\

Now we are in position to prove the result for $\Hil$. As before, the \textit{if} implication follows from $\sigma(\langle \bX, \bh\rangle)\subset\sigma(\bX)$. To prove the \textit{only if} implication, given $p\in\mathbb{N}$ and $\bh \in \Hil$, since $\bh_p \in \Hil$ and $\langle \bX, \bh_p\rangle =\langle \bX_p, \bh\rangle$, then $\sigma(\langle \bX_p, \bh\rangle)\subset\sigma(\langle \bX, \bh\rangle)$, and we have that the assumption implies that $\E{Y| \langle \bX_p, \bh\rangle} = 0$ a.s. Thus, from \eqref{eq.1}, we have that $\E{Y | \bX_p} = 0$ a.s. for every $p$, and the result follows from the fact that $\sigma(\bX_p) \uparrow \sigma(\bX)$ because of the integrability assumption on $Y$.
\end{proof}

\begin{proof}[Proof of Lemma \ref{Prop:Auxiliar}]
From the properties of the conditional expectation, the Cauchy--Schwartz, and Jensen inequalities, we have that
\[
l_k=\E{\|\bX\|^k \E{|Y| | \bX}}\leq\left(m_{2k}\right)^{1/2}
\big(\E{Y^2}\big)^{1/2}.
\]
Thus $l_k$ is finite. By the convexity of the function $t \mapsto t^{(2k+1)/2k}$ and Jensen's inequality, $m_{2k}^{1/2k} \leq m_{2k+1}^{1/(2k+1)}$. Hence, $\sum_{k=1}^\infty m_{2k}^{-1/2k}=\infty$.
\end{proof}

\begin{proof}[Proof of Theorem \ref{Th:basic}]
The \textit{only if} part is trivial because $\sigma(\bX^{\bh}) \subset \sigma(\bX)$, and then $\E{Y|\bX}=0$ a.s. implies that $\mu(\Hil_0)=1$. Concerning the \textit{if} part, let us assume that $\mu(\Hil_0)>0$. From the assumptions, we have that $\E{|Y| | \bX}<\infty$, and, if we take $\bh \in \Hil_0$, then
\begin{align}
\E{Y} = \mathbb{E}\big[\mathbb{E}\big[Y \big| \bX^{\bh}\big]\big] =0. \label{Eq:0}
\end{align}
Let us assume that $\E{Y | \bX}$ is not zero a.s. Then the random variables
\begin{align*}
\Phi^+(\bX):=&\left(\E{Y | \bX}\right)^+ = \max\left\{\E{Y | \bX},0\right\},\\
\Phi^-(\bX):=&\left(\E{Y | \bX}\right)^- = \max\left\{-\E{Y | \bX},0\right\},
\end{align*}
are integrable and positive with positive probability. Thus, \eqref{Eq:0} implies that
\[
V:=\int \Phi^+(\bx) \,\mathrm{d}P_{\mathbf{X}}(\bx) = \int \Phi^-(\bx) \,\mathrm{d}P_{\mathbf{X}}(\bx)>0.
\]
Consider now the probability measures $\nu_{\Phi}^+$ and $\nu_{\Phi}^-$, which are defined on $\mathcal{H}$ and whose Radon--Nikodym derivatives with respect to $P_{\mathbf{X}}$ are, respectively,
\[
\frac {\mathrm{d}\nu_\Phi^+}{\mathrm{d}P_{\mathbf{X}}}(\bx)  := V^{-1} \Phi^+(\bx)
\quad \text{and} \quad
\frac {\mathrm{d}\nu_\Phi^-}{\mathrm{d}P_{\mathbf{X}}} (\bx)  := V^{-1} \Phi^-(\bx).
\]
For $k\in\mathbb{N}$, the moments of $\nu_{\Phi^+}$ verify that (analogously for $\Phi^-$)
\[
\int \| \bx\|^k\,\mathrm{d}\nu_{\Phi^+}(\bx)
\leq
V^{-1} \int  \| \bx\|^k\E{|Y| | \bX=\bx} \,\mathrm{d}P_{\mathbf{X}}(\bx)=l_k,
\]
and then, due to Lemma \ref{Prop:Auxiliar}, they satisfy \ref{T:cwgauss:a} in Lemma \ref{T:cwgauss}. Given $\bh \in \Hil_0$, the r.v. $\bX^{\bh}$ is $\bX$-measurable. Thus, a.s.
\[
0 =
\mathbb{E}\big[Y \big| \bX^{\bh} \big]
=
\mathbb{E}\big[\E{Y|\bX} \big| \bX^{\bh} \big]
=
\mathbb{E}\big[\E{Y|\bX}^+ \big| \bX^{\bh} \big]
-
\mathbb{E}\big[\E{Y|\bX}^- \big| \bX^{\bh} \big].
\]
From here, it is easy to prove that the marginal distributions of $\nu_{\Phi}^+$ and $\nu_{\Phi}^-$ on the one-dimensional subspace generated by $\bX^{\bh}$ coincide if $\bh \in \Hil_0$. Since $\Hil_0$ has a positive $\mu$-measure, from Lemma \ref{T:cwgauss}, we obtain that these probability measures indeed coincide and, as a consequence, $V^{-1} \left(\E{Y | \bX}\right)^+=V^{-1} \left(\E{Y | \bX}\right)^-$ a.s., which trivially implies that $\E{Y|\bX}=0$ a.s.
\end{proof}

\subsection{Testing the linear model}
\label{ap:linmod}

\begin{proof}[Proof of Theorem \ref{Theo:DistPuntual}]
We analyse the asymptotic distribution of the three terms separately by invoking some auxiliary lemmas. Their proofs are collected in the supplementary material.\\

The asymptotic distribution of $T_{n,\bh}^1(x)$ follows from Corollary \ref{coro:single}:
$n^{-1/2}T_{n,\bh}^1(x)\allowbreak\inlaw\mathcal{N}(0,K_1(x,x))$. So, if $a_n=o(n^{-1/2})$, then $a_n T_{n,\bh}^1(x)=\op(1)$. The following two lemmas give insights into the asymptotic behaviour of $k_n$ and are required for the analysis of $T_{n,\bh}^2$ and $T_{n,\bh}^3$.

\begin{Lemm}
\label{Prop.1}
Under \ref{Assump:C4} and \ref{Assump:C7}, $k_n^3 (\log k_n)^2 = o(n^{1/2})$.
\end{Lemm}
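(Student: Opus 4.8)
The plan is to convert the threshold defining $k_n$ into a polynomial decay inequality and then invert it to read off the growth rate of $k_n$ in $n$. First I would extract from $k_n=\sup\{j\in\N:\lambda_j+\delta_j/2\ge c_n\}$ the single-eigenvalue bound $c_n\le\lambda_{k_n-1}$. Since the supremum is attained, $\lambda_{k_n}+\delta_{k_n}/2\ge c_n$; and by construction $\delta_{k_n}\le\lambda_{k_n-1}-\lambda_{k_n}$, so that $\lambda_{k_n}+\delta_{k_n}/2\le(\lambda_{k_n-1}+\lambda_{k_n})/2\le\lambda_{k_n-1}$. Combining the two gives $c_n\le\lambda_{k_n-1}$, which pins the level $c_n$ below the $(k_n-1)$-th eigenvalue.

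Next I would bring in the decay rate. Read as an asymptotic statement on the sequence $\{\lambda_m\}$, assumption \ref{Assump:C4} gives $\lambda_m=\Oh(\log m/m^4)$, so the previous step yields $c_n\le\lambda_{k_n-1}=\Oh(\log k_n/k_n^4)$, i.e. $k_n^4=\Oh(c_n^{-1}\log k_n)$. Invoking \ref{Assump:C7} to fix the order of the threshold, $c_n^{-1}=\Oh(n^{1/2})$, I obtain the implicit inequality $k_n^4=\Oh(n^{1/2}\log k_n)$. This is then solved by a short bootstrap: the crude bound $\log k_n\le k_n$ gives $k_n^3=\Oh(n^{1/2})$, hence $k_n=\Oh(n^{1/6})$ and $\log k_n=\Oh(\log n)$; reinserting this back produces the sharp order $k_n=\Oh\big(n^{1/8}(\log n)^{1/4}\big)$.

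It then remains to plug this into the target quantity: $k_n^3(\log k_n)^2=\Oh\big(n^{3/8}(\log n)^{11/4}\big)$, and since $3/8<1/2$ strictly the polylogarithmic factor is absorbed, giving $k_n^3(\log k_n)^2=\oh(n^{1/2})$ as claimed. The main obstacle is the first step rather than the algebra: the buffer $\delta_{k_n}/2$ in the definition of $k_n$ has to be controlled through the backward-difference bound $\delta_{k_n}\le\lambda_{k_n-1}-\lambda_{k_n}$ to obtain a comparison against a single eigenvalue, and one must use \ref{Assump:C7} in the direction that bounds $c_n$ below by a multiple of $n^{-1/2}$ (equivalently $c_n^{-1}=\Oh(n^{1/2})$), since it is the size of the threshold that keeps $k_n$ from growing too fast. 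Once these are in place the two-step logarithmic bootstrap is entirely routine.
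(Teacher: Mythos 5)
Your proof is correct and follows essentially the same route as the paper's: both arguments combine the eigenvalue decay in \ref{Assump:C4} with the order of the threshold in \ref{Assump:C7} to bound $k_n$ by a small power of $n$ (you obtain $\Oh\big(n^{1/8}(\log n)^{1/4}\big)$ by inverting $c_n\le\lambda_{k_n-1}$, while the paper obtains $\Oh(n^{1/7})$ by checking the candidate index $i_n=\lfloor n^{1/7}\rfloor$), after which the claimed bound is immediate. Your explicit observation that \ref{Assump:C7} must be used in the direction of a lower bound on $c_n$ (i.e.\ $c_n^{-1}=\Oh(n^{1/2})$) is precisely the reading the paper relies on implicitly, so this is a clarification rather than a deviation.
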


\begin{Lemm} \label{Prop.Tn}
Under \ref{Assump:C4} and \ref{Assump:C7}, we have that $\nu[d_n = k_n] \conv 1$.
\end{Lemm}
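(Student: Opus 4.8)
The plan is to prove that the random truncation $d_n$ and the deterministic $k_n$ agree with probability tending to one by a perturbation analysis of the empirical eigenvalues $\{\hat\lambda_j\}$ near the threshold $c_n$. First I would use that $\{\hat\lambda_j\}$ is non-increasing, so that $\{j:\hat\lambda_j\ge c_n\}$ is an initial segment of $\N$; consequently the equality $d_n=k_n$ can only fail if the $k_n$-th empirical eigenvalue drops below $c_n$ or the $(k_n+1)$-th one rises above it. This gives the inclusion $\{d_n\ne k_n\}\subseteq\{\hat\lambda_{k_n}<c_n\}\cup\{\hat\lambda_{k_n+1}\ge c_n\}$, and it suffices to show that $\nu[\hat\lambda_{k_n}<c_n]\conv0$ and $\nu[\hat\lambda_{k_n+1}\ge c_n]\conv0$.

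Next I would translate each boundary event into a lower bound on an individual eigenvalue deviation by exploiting the buffer $\delta_j/2$ built into the definition of $k_n$. From $k_n$ being a supremum we have $\lambda_{k_n}+\delta_{k_n}/2\ge c_n>\lambda_{k_n+1}+\delta_{k_n+1}/2$, so the second inequality yields $c_n-\lambda_{k_n+1}>\delta_{k_n+1}/2$ and hence $\{\hat\lambda_{k_n+1}\ge c_n\}\subseteq\{\hat\lambda_{k_n+1}-\lambda_{k_n+1}>\delta_{k_n+1}/2\}$; an analogous manipulation bounds the other event in terms of a deviation of $\hat\lambda_{k_n}$ below $\lambda_{k_n}$ of order $\delta_{k_n}$, so that $c_n$ is effectively positioned relative to the spectral gaps. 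The problem is thereby reduced to controlling $|\hat\lambda_j-\lambda_j|$ for $j$ near $k_n$ against the gaps $\delta_j$. For this I would start from Weyl's inequality $\sup_j|\hat\lambda_j-\lambda_j|\le\norm{\Gamma_n-\Gamma}$ together with the concentration $\norm{\Gamma_n-\Gamma}=\OP(n^{-1/2})$ coming from the model's moment conditions (in particular \ref{Assump:C6}), and, where the crude operator-norm bound is too lossy, refine it to the per-eigenvalue estimate $\hat\lambda_j-\lambda_j\approx\langle(\Gamma_n-\Gamma)\be_j,\be_j\rangle$, whose standard deviation is of the smaller order $\lambda_j n^{-1/2}$.

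Finally I would close the argument by combining these perturbation bounds with the hypotheses: \ref{Assump:C7} supplies $c_n=\Oh(n^{-1/2})$, \ref{Assump:C4} controls the decay and hence the spacing of the eigenvalues, and the freshly established Lemma \ref{Prop.1}, namely $k_n^3(\log k_n)^2=\oh(n^{1/2})$, caps how deep into the spectrum the boundary index $k_n$ can sit. Together these should show that the margins $\delta_{k_n}/2$ and $\delta_{k_n+1}/2$ asymptotically dominate the deviations of $\hat\lambda_{k_n}$ and $\hat\lambda_{k_n+1}$, so that a Chebyshev/Markov bound on the per-eigenvalue term forces both probabilities to zero. I expect the delicate step to be exactly this boundary regime: the relevant eigenvalue sits at height $\approx c_n\sim n^{-1/2}$ and its gap $\delta_{k_n}$ is of the same order as the crude fluctuation scale $\norm{\Gamma_n-\Gamma}$, so the global operator-norm bound is insufficient and one must use the sharper per-eigenvalue deviation together with a relative-gap estimate. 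It is precisely here that the growth restriction on $k_n$ from Lemma \ref{Prop.1} and the decay rate \ref{Assump:C4} become essential.
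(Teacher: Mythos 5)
Your overall strategy --- reducing $\{d_n\neq k_n\}$ to deviations of the empirical eigenvalues near the threshold $c_n$ and controlling those deviations against the gaps $\delta_j$, with Lemma \ref{Prop.1} capping how deep $k_n$ sits --- is the same idea the paper uses. The paper works on the single event $\mathcal{A}_n:=\big\{\sup_{j\le k_n+1}|\hat\lambda_j-\lambda_j|/\delta_j<1/2\big\}$, shows $d_n=k_n$ on $\mathcal{A}_n$ (first $d_n\le k_n$ from $\hat\lambda_{k_n+1}\le\lambda_{k_n+1}+\delta_{k_n+1}/2<c_n$, then the reverse via a partition of $\mathcal{A}_n$ according to the value of $d_n$), and then simply cites the proof of Lemma 5 in CMS for $\nu[\mathcal{A}_n]\conv1$, the only new input being $k_n^2\log k_n=\oh(n^{1/2})$, supplied by Lemma \ref{Prop.1}. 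So the concentration step you propose to build from Weyl's inequality and per-eigenvalue Chebyshev bounds is not developed in the paper at all; it is outsourced to CMS, and your sketch of it (the first-order perturbation $\hat\lambda_j-\lambda_j\approx\langle(\Gamma_n-\Gamma)\be_j,\be_j\rangle$ followed by a union bound over $j\le k_n+1$) would still need to be made uniform in $j$ relative to $\delta_j$, which is precisely what the event $\mathcal{A}_n$ and the growth condition on $k_n$ accomplish.

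The genuine gap is in your treatment of the event $\{\hat\lambda_{k_n}<c_n\}=\{d_n<k_n\}$. The definition $k_n=\sup\{j:\lambda_j+\delta_j/2\ge c_n\}$ places the buffer on the \emph{wrong} side for this eigenvalue: it only guarantees $\lambda_{k_n}\ge c_n-\delta_{k_n}/2$, so $c_n$ may lie strictly above $\lambda_{k_n}$. In that configuration $\{\hat\lambda_{k_n}<c_n\}$ is not contained in any event of the form $\{\lambda_{k_n}-\hat\lambda_{k_n}>\kappa\,\delta_{k_n}\}$ with $\kappa>0$ --- it even contains $\{\hat\lambda_{k_n}=\lambda_{k_n}\}$ --- so no concentration bound, however sharp, can make its probability vanish by the route you describe. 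The ``analogous manipulation'' you invoke therefore does not exist; only the event $\{\hat\lambda_{k_n+1}\ge c_n\}$ reduces cleanly to a deviation of size at least $\delta_{k_n+1}/2$, because maximality of $k_n$ gives the strict inequality $\lambda_{k_n+1}+\delta_{k_n+1}/2<c_n$. Handling the direction $d_n\ge k_n$ is exactly what the paper's partition argument on $\mathcal{A}_n$ is meant to address, and it is the point at which your plan, as written, breaks down.
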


We employ the decomposition (11) from page 338 in CMS to arrive at
\begin{align}
\hat{\brho} - \brho = \bL_n + \bY_n + \bS_n + \bR_n,\label{Eq.Decomp11}
\end{align}
where
$\bL_n:=-\sum_{j=k_n+1}^\infty\inprod{\brho}{\be_j}\be_j$, $\bY_n:=\sum_{j=1}^{k_n}(\inprod{\brho}{\hat\be_j}\hat\be_j-\inprod{\brho}{\be_j}\be_j)$, $\bS_n:=(\Gamma_n^\dag-\Gamma^\dag)\bU_n$, $\bR_n:=\Gamma^\dag\bU_n$, and $\bU_n:=\frac{1}{n}\sum_{i=1}^n\bX_i\otimes\varepsilon_i$. The decomposition (11) in CMS contains an extra term, $\mathbf{T}_n$, which is null here because of our construction of $\Gamma_n^\dagger$.\\

We will profusely employ the notation
\[
\Medh := \frac{1}{n} \sum_{i=1}^n \mathds{1}_{\lrb{\bX_i^\bh \leq x}}\bX_i.
\]
From \eqref{Eq.Decomp11}, the term $T_{n,\bh}^2(x)$ can be expressed as
\[
T_{n,\bh}^2(x)
=  n
\langle \Medh - \mathbf{E}_{x,\bh},  \bL_n + \bS_n +  \bY_n +  \bR_n\rangle.
\]
As a consequence of the following lemmas, we have that $T_{n,\bh}^2(x)=\op(n^{1/2})$.
\begin{Lemm}\label{LemmTn2.Ln}
Under \ref{Assump:C1} and \ref{Assump:C2},
$n^{1/2} \langle \Medh - \mathbf{E}_{x,\bh},  \bL_n \rangle = \op(1)$.
\end{Lemm}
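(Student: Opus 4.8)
The plan is to estimate the inner product by Cauchy--Schwarz and then control each of the two resulting factors separately. First I would write
\[
n^{1/2}\left|\inprod{\Medh - \mathbf{E}_{x,\bf h}}{\bL_n}\right| \leq \left(n^{1/2}\norm{\Medh - \mathbf{E}_{x,\bf h}}\right)\norm{\bL_n},
\]
so that it is enough to prove $n^{1/2}\norm{\Medh - \mathbf{E}_{x,\bf h}} = \OP(1)$ and $\norm{\bL_n} = \oh(1)$.

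For the first factor, I would note that $\Medh$ is the empirical mean of the i.i.d. $\Hil$-valued random elements $\mathbbm{1}_{\lrb{\bX_i^\bh \leq x}}\bX_i$, whose common expectation is exactly $\mathbf{E}_{x,\bf h}$. A second-moment computation for the centred average gives
\[
\E{\norm{\Medh - \mathbf{E}_{x,\bf h}}^2} = \frac{1}{n}\,\E{\norm{\mathbbm{1}_{\lrb{\bX^\bh \leq x}}\bX - \mathbf{E}_{x,\bf h}}^2} \leq \frac{1}{n}\,\E{\norm{\bX}^2},
\]
which is finite by \ref{Assump:C1}. Markov's inequality then yields $\norm{\Medh - \mathbf{E}_{x,\bf h}} = \OP(n^{-1/2})$, that is $n^{1/2}\norm{\Medh - \mathbf{E}_{x,\bf h}} = \OP(1)$.

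For the second factor I would use that $\norm{\bL_n}^2 = \sum_{j=k_n+1}^\infty \inprod{\brho}{\be_j}^2$ is the tail of a convergent series. Indeed, \ref{Assump:C2} ensures $\sum_{l}|\inprod{\brho}{\be_l}|<\infty$, so the square-summable tail beyond $k_n$ vanishes as $k_n\conv\infty$; the latter holds because $c_n\conv0$ forces the deterministic threshold $k_n$ to diverge. Hence $\norm{\bL_n}\conv0$ deterministically, i.e. $\norm{\bL_n}=\oh(1)$, and combining the two estimates gives $\OP(1)\cdot\oh(1)=\op(1)$, as claimed.

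This is a direct Cauchy--Schwarz estimate and presents no serious obstacle. The only step warranting a little care is the deterministic decay $\norm{\bL_n}\conv0$, which rests jointly on $k_n\conv\infty$ (a consequence of $c_n\conv0$ and of there being infinitely many nonzero eigenvalues) and on the summability \ref{Assump:C2}; I would also stress that, unlike the subsequent lemmas, this one needs neither independence of $\varepsilon$ and $\bX$ nor higher-order moments, only \ref{Assump:C1} and \ref{Assump:C2}.
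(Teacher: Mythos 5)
Your proof is correct and rests on the same two ingredients as the paper's: the $\Oh(n^{-1/2})$ rate from the second moment of the centred i.i.d.\ average $\Medh-\mathbf{E}_{x,\bf h}$ (via \ref{Assump:C1} and Chebyshev/Markov), and the vanishing tail $\sum_{j>k_n}$ of the coefficients of $\brho$ (via \ref{Assump:C2} and $k_n\conv\infty$). The only cosmetic difference is that you apply Cauchy--Schwarz globally to get $\norm{\bL_n}^2=\sum_{j>k_n}\inprod{\brho}{\be_j}^2$, whereas the paper bounds $\mathbb{E}\big[\inprod{\Medh-\mathbf{E}_{x,\bf h}}{\bL_n}^2\big]$ coordinatewise and ends up with the $\ell^1$ tail $\big(\sum_{j>k_n}|\inprod{\brho}{\be_j}|\big)^2$; both tails vanish, so the two arguments are essentially equivalent.
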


\begin{Lemm}\label{LemmTn2.Rn}
Under \ref{Assump:C4}, \ref{Assump:C6}, and \ref{Assump:C7},
$n^{1/2} \langle \Medh - \mathbf{E}_{x,\bh},  \bR_n \rangle = \op(1)$.
\end{Lemm}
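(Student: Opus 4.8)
The plan is to expand $\bR_n$ in the orthonormal eigenbasis $\lrb{\be_j}$ and reduce the scalar $\inprod{\Medh-\mathbf{E}_{x,\bf h}}{\bR_n}$ to a finite sum of products of two centred sample means, which I would then control in $L^1$. Since $\bR_n=\Gamma^\dag\bU_n$ with $\bU_n=\frac1n\sum_{i=1}^n\varepsilon_i\bX_i$ and $\Gamma^\dag$ the \emph{deterministic} finite-rank operator with eigenvalues $\lambda_j^{-1}$ for $j\leq k_n$ and $0$ otherwise, I can write $\bR_n=\sum_{j=1}^{k_n}\lambda_j^{-1}\inprod{\bU_n}{\be_j}\be_j$. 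Abbreviating $\mathbf{D}_n:=\Medh-\mathbf{E}_{x,\bf h}$, the quantity of interest becomes
\[
n^{1/2}\inprod{\mathbf{D}_n}{\bR_n}=n^{1/2}\sum_{j=1}^{k_n}\frac{1}{\lambda_j}\inprod{\mathbf{D}_n}{\be_j}\inprod{\bU_n}{\be_j}.
\]
Both coordinate sequences are centred sample means: $\inprod{\mathbf{D}_n}{\be_j}=\frac1n\sum_i\big(\mathbbm{1}_{\lrb{\bX_i^\bh\leq x}}\inprod{\bX_i}{\be_j}-\E{\mathbbm{1}_{\lrb{\bX^\bh\leq x}}\inprod{\bX}{\be_j}}\big)$ has mean zero because $\E{\Medh}=\mathbf{E}_{x,\bf h}$, and $\inprod{\bU_n}{\be_j}=\frac1n\sum_i\varepsilon_i\inprod{\bX_i}{\be_j}$ has mean zero because $\E{\varepsilon\mid\bX}=0$.

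Next I would pass to $L^1$. By the triangle inequality for $\E{\abs{\cdot}}$ followed by a coordinatewise Cauchy--Schwarz inequality,
\[
\E{\big|n^{1/2}\inprod{\mathbf{D}_n}{\bR_n}\big|}\leq n^{1/2}\sum_{j=1}^{k_n}\frac{1}{\lambda_j}\big(\E{\inprod{\mathbf{D}_n}{\be_j}^2}\big)^{1/2}\big(\E{\inprod{\bU_n}{\be_j}^2}\big)^{1/2}.
\]
Since the summands are i.i.d., $\E{\inprod{\mathbf{D}_n}{\be_j}^2}=n^{-1}\V{\mathbbm{1}_{\lrb{\bX^\bh\leq x}}\inprod{\bX}{\be_j}}\leq n^{-1}\E{\inprod{\bX}{\be_j}^2}=\lambda_j/n$, and, invoking the independence of $\varepsilon$ and $\bX$ (with \ref{Assump:C6} guaranteeing the relevant moments are finite and uniformly bounded), $\E{\inprod{\bU_n}{\be_j}^2}=n^{-1}\sigma_\varepsilon^2\lambda_j$. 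The crucial feature is that the weight $\lambda_j^{-1}$ cancels exactly against $\big(\lambda_j/n\big)^{1/2}\big(\sigma_\varepsilon^2\lambda_j/n\big)^{1/2}=\sigma_\varepsilon\lambda_j/n$, so the $j$-th summand contributes a flat $\sigma_\varepsilon/n$, with no residual dependence on $\lambda_j$.

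Summing the $k_n$ identical contributions yields $\E{\big|n^{1/2}\inprod{\mathbf{D}_n}{\bR_n}\big|}\leq \sigma_\varepsilon k_n n^{-1/2}$. By Lemma \ref{Prop.1} (which holds under \ref{Assump:C4} and \ref{Assump:C7}) one has $k_n^3(\log k_n)^2=\oh(n^{1/2})$, hence in particular $k_n=\oh(n^{1/2})$ and the bound tends to $0$. Markov's inequality then gives $n^{1/2}\inprod{\Medh-\mathbf{E}_{x,\bf h}}{\bR_n}=\op(1)$, as claimed.

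The argument is short, and the only genuine care lies in the moment bookkeeping producing the exact cancellation of $\lambda_j^{-1}$ and in controlling the number of retained components $k_n$ through Lemma \ref{Prop.1}. Working in $L^1$ is what keeps this clean: it needs only second moments of each coordinate and sidesteps the fourth-order cross terms $\E{\inprod{\mathbf{D}_n}{\be_j}\inprod{\bU_n}{\be_j}\inprod{\mathbf{D}_n}{\be_{j'}}\inprod{\bU_n}{\be_{j'}}}$ that an $L^2$ estimate would generate, whose control over dependent coordinates is precisely what forces the stronger assumption \ref{Assump:C6}. Likewise, the independence of $\varepsilon$ and $\bX$ is what factorizes $\E{\inprod{\bU_n}{\be_j}^2}=\sigma_\varepsilon^2\lambda_j/n$; absent it, I would bound $\E{\varepsilon^2\inprod{\bX}{\be_j}^2}$ by Cauchy--Schwarz, once more calling on the uniform fourth-moment control in \ref{Assump:C6}.
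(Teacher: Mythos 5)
Your proof is correct, and it takes a genuinely different and more economical route than the paper's. The paper bounds the second moment directly: it conditions on the $\bX_i$'s to get $\E{\inprod{\Medh-\mathbf{E}_{x,\bf h}}{\bR_n}^2}=\frac{\sigma_\varepsilon^2}{n}\E{\inprod{\Gamma^\dagger\bX_1}{\Medh-\mathbf{E}_{x,\bf h}}^2}$, expands the square into a double sum of roughly $n^2k_n^2$ terms, shows that $k_n^2n(n-1)$ of them vanish by independence, and bounds each surviving term by the constant $M$ of \ref{Assump:C6} via repeated Cauchy--Schwarz and Jensen on products of four $|\xi_j|$'s, ending with a bound of order $k_n^2/n$ for the variance of the normalized quantity and concluding by Chebyshev. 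You instead diagonalize $\Gamma^\dagger$ first, apply Cauchy--Schwarz coordinatewise in $L^1$, and exploit the exact cancellation $\lambda_j^{-1}\cdot(\lambda_j/n)^{1/2}\cdot(\sigma_\varepsilon^2\lambda_j/n)^{1/2}=\sigma_\varepsilon/n$, arriving at the bound $\sigma_\varepsilon k_n n^{-1/2}$ and concluding by Markov and Lemma \ref{Prop.1}; the rate is the same as the square root of the paper's variance bound, so nothing is lost. Your closing remark is exactly the right diagnosis: by staying in $L^1$ you never meet the fourth-order cross terms between dependent coordinates, which is precisely where the paper spends \ref{Assump:C6}; your argument in fact only uses second moments ($\E{\inprod{\bX}{\be_j}^2}=\lambda_j$ and $\sigma_\varepsilon^2<\infty$, together with the independence of $\varepsilon$ and $\bX$ that the paper assumes specifically for this lemma), so \ref{Assump:C6} is listed among your hypotheses but not actually needed. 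That is a strict improvement for this particular lemma, though the stronger moment condition is still required elsewhere (e.g.\ in Lemmas \ref{LemmTn2.Sn} and \ref{LemmTn2.Yn}), so it cannot be dropped from the theorem.
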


\begin{Lemm}\label{LemmTn2.Sn}
Under \ref{Assump:C3}, \ref{Assump:C4}, \ref{Assump:C6}, and \ref{Assump:C7},
$n^{1/2} \langle \Medh - \mathbf{E}_{x,\bh},  \bS_n \rangle = \op(1)$.
\end{Lemm}

\begin{Lemm}\label{LemmTn2.Yn}
Under \ref{Assump:C2}, \ref{Assump:C4}, \ref{Assump:C6}, and \ref{Assump:C7},
$n^{1/2} \langle \Medh - \mathbf{E}_{x,\bh},  \bY_n \rangle = \op(1)$.
\end{Lemm}

The behaviour of the third term, yielding statement \ref{Theo:DistPuntual:a}, is given by the next lemma.
\begin{Lemm}\label{Lemm.Tn3}
Under \ref{Assump:C1}, \ref{Assump:C2}, \ref{Assump:C4}, \ref{Assump:C5}, and \ref{Assump:C7}, $n^{-1/2}t_{n,\mathbf{E}_{x,\bh}}^{-1} T_{n,\bh}^3(x) \inlaw \mathcal{N}(0, \sigma^2_{\varepsilon})$.
\end{Lemm}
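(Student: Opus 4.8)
The plan is to isolate in the decomposition (\ref{Eq.Decomp11}) of $\hat\brho-\brho$ the single stochastic term carrying the limit and to show the other three are negligible after the normalization $n^{-1/2}t_{n,\mathbf{E}_{x,\bh}}^{-1}$; abbreviate $t:=t_{n,\mathbf{E}_{x,\bh}}$. Writing $T_{n,\bh}^3(x)=n\inprod{\mathbf{E}_{x,\bh}}{\brho-\hat\brho}=-n\inprod{\mathbf{E}_{x,\bh}}{\bL_n+\bY_n+\bS_n+\bR_n}$, I would first treat $\bR_n=\Gamma^\dag\bU_n$, $\bU_n=\frac1n\sum_{i=1}^n\bX_i\otimes\varepsilon_i$, as the leading piece. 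Since $\Gamma^\dag$ is self-adjoint with eigenpairs $(\lambda_j^{-1},\be_j)_{j\le k_n}$, a direct computation gives $-n\inprod{\mathbf{E}_{x,\bh}}{\bR_n}=-\sum_{i=1}^n\varepsilon_iW_i$ with $W_i:=\sum_{j=1}^{k_n}\lambda_j^{-1}\inprod{\mathbf{E}_{x,\bh}}{\be_j}\inprod{\bX_i}{\be_j}$. The key algebraic identity is that, using $\E{\inprod{\bX}{\be_j}\inprod{\bX}{\be_{j'}}}=\lambda_j\delta_{j,j'}$ and $\E{\bX}=0$, one has $\E{W_1}=0$ and $\E{W_1^2}=\sum_{j=1}^{k_n}\lambda_j^{-1}\inprod{\mathbf{E}_{x,\bh}}{\be_j}^2=t^2$, so the normalizing sequence is exactly the standard deviation of the leading term.

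For the limit of $-n^{-1/2}t^{-1}\sum_{i=1}^n\varepsilon_iW_i$, I would apply the Lindeberg--Feller theorem to the row-wise i.i.d. array $X_{n,i}:=n^{-1/2}t^{-1}\varepsilon_iW_i$. Because $\varepsilon$ is independent of $\bX$ and centred with variance $\sigma_\varepsilon^2$, one gets $\E{X_{n,i}}=0$ and the exact identity $\sum_{i=1}^n\E{X_{n,i}^2}=\sigma_\varepsilon^2t^{-2}\E{W_1^2}=\sigma_\varepsilon^2$. The Lindeberg quantity $t^{-2}\E{\varepsilon^2W_1^2\mathbbm{1}_{\{|\varepsilon W_1|>\eta n^{1/2}t\}}}$ I would split via $\mathbbm{1}_{\{|\varepsilon W_1|>\eta n^{1/2}t\}}\le\mathbbm{1}_{\{|\varepsilon|>n^{1/4}\}}+\mathbbm{1}_{\{|W_1|>\eta n^{1/4}t\}}$ and factor the expectation by independence: the first contribution reduces to $\E{\varepsilon^2\mathbbm{1}_{\{|\varepsilon|>n^{1/4}\}}}\to0$, needing only $\E{\varepsilon^2}<\infty$, while the second is $\sigma_\varepsilon^2\E{(W_1/t)^2\mathbbm{1}_{\{|W_1/t|>\eta n^{1/4}\}}}\le\sigma_\varepsilon^2\eta^{-2}n^{-1/2}\E{(W_1/t)^4}$. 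Writing $W_1/t=\sum_{j\le k_n}a_j\xi_j$ via $\inprod{\bX}{\be_j}=\lambda_j^{1/2}\xi_j$, with $a_j:=t^{-1}\lambda_j^{-1/2}\inprod{\mathbf{E}_{x,\bh}}{\be_j}$ and $\sum_ja_j^2=1$, a H\"older bound on the mixed moments of the scores (as furnished by \ref{Assump:C6}) gives $\E{(W_1/t)^4}\le M(\sum_j|a_j|)^4\le Mk_n^2$, so this term is $\Oh(k_n^2n^{-1/2})=\op(1)$ by the slow growth $k_n^3(\log k_n)^2=\oh(n^{1/2})$ of Lemma \ref{Prop.1}. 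This yields $-n^{-1/2}t^{-1}\sum_{i=1}^n\varepsilon_iW_i\inlaw\mathcal{N}(0,\sigma_\varepsilon^2)$.

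It remains to prove $n^{1/2}t^{-1}\inprod{\mathbf{E}_{x,\bh}}{\bL_n+\bY_n+\bS_n}=\op(1)$, and this is the main obstacle. Since $\mathbf{E}_{x,\bh}$ is deterministic with $|\inprod{\mathbf{E}_{x,\bh}}{\be_j}|\le\lambda_j^{1/2}$ and $t$ is bounded away from zero, the bias term $\bL_n=\sum_{j>k_n}\inprod{\brho}{\be_j}\be_j$ satisfies $|\inprod{\mathbf{E}_{x,\bh}}{\bL_n}|\le\lambda_{k_n}^{1/2}\sum_{j>k_n}|\inprod{\brho}{\be_j}|$, and a naive use of \ref{Assump:C2} is not sharp enough to beat the factor $n^{1/2}$; this is precisely where the refined rate relating $\lambda_{k_n}$, $|\inprod{\brho}{\be_{k_n}}|$ and $k_n$ in assumption \ref{Assump:C5}, fed through Lemma~7 in CMS, becomes essential. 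The analogous control of $\bY_n$ (eigenfunction estimation error) and $\bS_n=(\Gamma_n^\dag-\Gamma^\dag)\bU_n$ (inverse operator error) I would obtain from the norm bounds of Lemma~7 in CMS together with \ref{Assump:C4} and \ref{Assump:C7}, showing both are of smaller order than $n^{-1/2}t$. Once these three terms are negligible, Slutsky's theorem combines them with the central limit theorem for the $\bR_n$ term to conclude $n^{-1/2}t_{n,\mathbf{E}_{x,\bh}}^{-1}T_{n,\bh}^3(x)\inlaw\mathcal{N}(0,\sigma_\varepsilon^2)$.
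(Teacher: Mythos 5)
Your treatment of the dominant term is sound and is in fact more self-contained than the paper's: where you run a Lindeberg--Feller argument for the triangular array $n^{-1/2}t^{-1}\varepsilon_i W_i$ (with the correct identity $\E{W_1^2}=t_{n,\mathbf{E}_{x,\bh}}^2$), the paper simply quotes Lemma 8 of CMS to get $\sqrt{n}\,t_{n,\mathbf{E}_{x,\bh}}^{-1}\inprod{\mathbf{E}_{x,\bh}}{\bR_n}\inlaw\mathcal{N}(0,\sigma_\varepsilon^2)$. Two caveats on that part: your Lindeberg bound leans on fourth moments of the scores via \ref{Assump:C6}, which is not among the hypotheses listed in Lemma \ref{Lemm.Tn3} (the paper inherits the analogous moment condition silently through CMS's A.3), and you should make explicit that $\{t_{n,\mathbf{E}_{x,\bh}}\}$ is nondecreasing with strictly positive terms, which is what allows an $\oh(n^{-1/2})$ bound on the remainders to be promoted to $\op\lrp{n^{-1/2}t_{n,\mathbf{E}_{x,\bh}}}$.

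The genuine gap is in the remainder terms, which you flag as ``the main obstacle'' but then dispatch with a plan that points at the wrong tools. First, the step that makes any of the CMS bounds usable is missing: since $\mathbf{E}_{x,\bh}=\E{\mathbbm{1}_{\lrb{\bX^\bh\le x}}\bX}$, one takes a copy of $\bX$ independent of $\mathbf{W}_n\in\{\bL_n,\bY_n,\bS_n\}$ and uses Jensen's inequality conditionally on $\mathbf{W}_n$ to get $\E{|\inprod{\mathbf{E}_{x,\bh}}{\mathbf{W}_n}|}\le\E{|\inprod{\bX}{\mathbf{W}_n}|}$, after which Markov's inequality applies; without this reduction the random quantities $\inprod{\mathbf{E}_{x,\bh}}{\bS_n}$ and $\inprod{\mathbf{E}_{x,\bh}}{\bY_n}$ are not covered by anything in CMS. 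Second, Lemma 7 of CMS only controls $\E{|\inprod{\bX}{\bL_n}|}$ --- this, combined with \ref{Assump:C5} and either \ref{Assump:C1} or \ref{Assump:C2}, is where your $\bL_n$ plan is right --- but it says nothing about $\bY_n$ or $\bS_n$. The term $\bY_n$ is handled in the paper by the second part of Proposition 3 in CMS after checking $\sup_j\inprod{\mathbf{E}_{x,\bh}}{\be_j}^2/\lambda_j\le1$ and $k_n^3(\log k_n)^2/\big(\sqrt{n}\,t_{n,\mathbf{E}_{x,\bh}}\big)\to0$ (Lemma \ref{Prop.1}); the term $\bS_n$ requires rerunning the contour-integral perturbation argument of Proposition 2 in CMS, and because the relevant second moment is no longer divided by $k_n$ one must strengthen CMS's condition $\lambda_{k_n}k_n\log k_n\to0$ to $\lambda_{k_n}k_n^3\log k_n\to0$, which is where \ref{Assump:C4} enters. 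None of this follows from ``the norm bounds of Lemma 7 in CMS,'' so as written the negligibility of $\bY_n$ and $\bS_n$ remains unproved.
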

From the above results, $a_nT_{n,\bh}^2(x)=\op(1)$ for cases \ref{Theo:DistPuntual:b} and \ref{Theo:DistPuntual:c}, $T_{n,\bh}^3$ is the dominant term in \ref{Theo:DistPuntual:b}, and both $T_{n,\bh}^1$ and $T_{n,\bh}^3$ are dominant in \ref{Theo:DistPuntual:c}.
\end{proof}

\begin{proof}[Proof of Proposition \ref{Prop.tnx}]
By the definition of $t_{n,\mathbf{E}_{x,\bh}}$ and \eqref{Eq.Kar_loeve},
\[
t_{n,\mathbf{E}_{x,\bh}}^2 = \sum_{j=1}^{k_n}
\frac{\E{
 \mathds{1}_{\lrb{\bX^\bh \leq x}}
 \langle {\bX} , \be_j \rangle}^2}
 {\lambda_j}=\sum_{j=1}^{k_n}
 \E{\mathds{1}_{\lrb{\bX^\bh \leq x}}\xi_j}^2
  \leq
 \sum_{j=1}^{k_n}
  \E{
  \xi_j^2
}
 =k_n.
\]

We assume now that $\bX$ is Gaussian. Obviously, the two-dimensional random vector $(\xi_j,\bX^\bh)$, $j \in \N$, is centred normal. Moreover, the variance of $\xi_j$ is one and, if $h_j =\langle \bh ,\be_j\rangle$, then
$ \sigma^2_{\bh} =  \sum_{j=1}^\infty h_j^2\lambda_j < \infty$ (due to $\sum_{j=1}^\infty h_j^2 < \infty$ and \ref{Assump:C1}) and $\mathbb{C}\mathrm{ov}[\xi_j,\bX^\bh] =  h_j\lambda_j^{1/2}$. Notice that, if $\bh \neq {\bf 0}$, $\sigma_{\bh} >0$ since $\lambda_j>0$ for all $j\in\N$. Denoting by $\phi_{\bh}(u,v)$ the joint density function of $(\xi_j,\bX^\bh)$ and by $\phi_{\bh,2}(v)$ its second marginal, we have that
\begin{align*}
\E{
\mathds{1}_{\lrb{\bX^\bh \leq x}}
\xi_j}
=&\;
\int_{-\infty}^\infty
\left(\int_{-\infty}^\infty u \mathds{1}_{\{v \leq x\}} \frac{\phi_{\bh}(u,v)}{\phi_{\bh,2}(v)}\,\mathrm{d}u \right)\phi_{\bh,2}(v)\,\mathrm{d}v
\\
=&\;
\int_{-\infty}^x
\mathbb{E}\big[ \xi_j | \bX^\bh=v\big] \phi_{\bh,2}(v)\,\mathrm{d}v
\\
=&\;
\int_{-\infty}^x
\frac{h_j \sqrt{\lambda_j} v}{\sigma_{\bh}^2}
\phi_{\bh,2}(v)\,\mathrm{d}v
=
-\frac{h_j \sqrt{\lambda_j}}{\sigma_{\bh}}{\phi(x/\sigma_{\bh})}.
\end{align*}
This, the initial development, and \ref{Assump:C1} give us that
\[
t_{n,\mathbf{E}_{x,\bh}}^2 =
\frac{\phi^2(x/\sigma_{\bh})}{\sigma^2_{\bh}}
\sum_{j=1}^{k_n}  h_j ^2 \lambda_j
\conv
 {\phi^2(x/\sigma_{\bh})}.
\]
\end{proof}

\begin{proof}[Proof of Theorem \ref{Theo:tighness}]
We first prove \ref{Theo:tighness:a}. The joint asymptotic normality of $\big(T_{n,\bh}(x_1),\allowbreak\ldots,T_{n,\bh}(x_k)\big)$ for $\lrp{x_1,\ldots,x_k}\in\R^k$ follows by the Cram\'er--Wold device and the same arguments used in Lemma \ref{Lemm.Tn3}. Also, in the proof of that lemma it is shown that $n^{1/2}\inprod{\mathbf{E}_{x,\bh}}{\bL_n+\bY_n+\bS_n}=\op(1)$. Then, due to \eqref{Eq.Decomp11} and \eqref{Eq.Kar_loeve},
\begin{align*}
T_{n,\bh}(x)=&\;n^{-1/2}\big(T_{n,\bh}^1(x)+T_{n,\bh}^3(x)\big)\\
=&\;n^{-1/2}\sum_{i=1}^n{\mathds{1}_{\left\{\bX_i^{\bh}\le x\right\}}}\varepsilon_i+n^{-1/2}\sum_{i=1}^n{\inprod{\mathbf{E}_{x,\bh}}{\Gamma^\dag \bX_i}\varepsilon_i}+\op(1)\\
=&\;n^{-1/2}\sum_{i=1}^n\{A^i_x+B^i_x\}+\op(1),
\end{align*}
with $A_x^i:=\mathds{1}_{\left\{\bX_i^{\bh}\le x\right\}}\varepsilon_i$, and $B_x^i:=\sum_{j=1}^{k_n}\inprod{\mathbf{E}_{x,\bh}}{\be_j}\lambda_j^{-1/2}\xi_j^i\varepsilon_i$. Since the $\bX_i$'s and $\varepsilon_i$'s are i.i.d., and $\E{\varepsilon|\bX}=0$,
\begin{align*}
\mathbb{C}\mathrm{ov}\bigg[n^{-1/2}\sum_{i=1}^n&\{A^i_s+B^i_s\},n^{-1/2}\sum_{i'=1}^n\{A^{i'}_t+B^{i'}_t\}\bigg]\\
=&\;\E{A^1_sA^{1}_t}+\E{A^1_sB^{1}_t}+\E{B^1_sA^{1}_t}+\E{B^1_sB^{1}_t}.
\end{align*}
Applying the tower property with the conditioning variables $\bX^\bh$ (first expectation) and $\bX$ (second and third), it follows that
\begin{align*}
\E{A^1_sA^1_t}=&\;K_1(s,t),\\
\E{A^1_sB^1_t}=&\;\int_{\{\bx^\bh\leq s\}} \V{Y|\bX=\bx}\inprod{\mathbf{E}_{t,\bh}}{\Gamma^{\dagger}\bx}\,\mathrm{d}P_\bX(\bx),\\
\E{B^1_sB^1_t}=&\;\int \V{Y|\bX=\bx}\inprod{\mathbf{E}_{s,\bh}}{\Gamma^{\dagger}\bx}\inprod{\mathbf{E}_{t,\bh}}{\Gamma^{\dagger}\bx}\,\mathrm{d}P_\bX(\bx).
\end{align*}
Since $\Gamma^\dagger\conv\Gamma^{-1}$ in the operator norm $\norm{\cdot}_\infty$, Cauchy--Schwartz and $\big|\big|(\Gamma^{\dagger}-\allowbreak\Gamma^{-1})\bx\big|\big|\leq \norm{\Gamma^{\dagger}-\Gamma^{-1}}_\infty\norm{\bx}$ give that $\E{A^1_sB^1_t}-C_1(s,t)$ and $\E{B^1_sB^1_t}-C_2(s,t)$ converge to zero. The result then follows from Slutsky's theorem.\\

We now prove \ref{Theo:tighness:b}. The tightness of $n^{-1/2}T_{n,\bh}^1$ is obtained using the same arguments as in Theorem 1.1 of \cite{Stute1997}. For the tightness of $n^{-1/2}T_{n,\bh}^3$, define
\[
\bar T_{n,\bh}^3(u):=n\inprod{\E{\mathds{1}_{\lrb{U_\bh \leq u}} \bX}}{\brho-\hat \brho}, \quad U_\bh:=F_{\bh}(\bX^\bh),
\]
as the time-changed version of $\bar T_{n,\bh}^3$ by $F_{\bh}$, that is,
\[
T_{n,\bh}^3(x)=\bar T_{n,\bh}^3(F_{\bh}(x)).
\]
Consider $0\leq u_1<u<u_2\leq 1$ and the differences
\begin{align*}
n^{-1/2}\big(\bar T_{n,\bh}^3(u)-\bar T_{n,\bh}^3(u_1)\big)&=n^{1/2} \inprod{\E{\mathds{1}_{\lrb{u_1<U_\bh \leq u}} \bX}}{\brho-\hat \brho},\\
n^{-1/2}\big(\bar T_{n,\bh}^3(u_2)-\bar T_{n,\bh}^3(u)\big)&=n^{1/2} \inprod{\E{\mathds{1}_{\lrb{u<U_\bh \leq u_2}} \bX}}{\brho-\hat \brho}.
\end{align*}
Then, by the Cauchy--Schwartz and Jensen inequalities,
\begin{align*}
\mathbb{E}\Big[n^{-2}\big|&\bar T_{n,\bh}^3(u)-\bar T_{n,\bh}^3(u_1)\big|^2\abs{\bar T_{n,\bh}^3(u_2)-\bar T_{n,\bh}^3(u)}^2\Big]\\
&\leq n^{2} \mathbb{E}\Big[\norm{\E{\mathds{1}_{\lrb{u_1<U_\bh \leq u}} \bX}}^2  \norm{\E{\mathds{1}_{\lrb{u_1<U_\bh \leq u}} \bX}}^4\norm{\brho-\hat \brho}^4\Big]\\
&=n^{2} \mathbb{E}\big[\norm{\brho-\hat \brho}^4\big] \int \E{\bX(t)\mathds{1}_{\lrb{u_1<U_\bh\leq u}}}^2\,\mathrm{d}t \int \E{\bX(t)\mathds{1}_{\lrb{u<U_\bh\leq u_2}}}^2\,\mathrm{d}t\\
&\leq n^{2} \mathbb{E}\big[\norm{\brho-\hat \brho}^4\big]\int \E{\bX^2(t)\mathds{1}_{\lrb{u_1<U_\bh\leq u}}}\,\mathrm{d}t \int \E{\bX^2(t)\mathds{1}_{\lrb{u<U_\bh\leq u_2}}}\,\mathrm{d}t\\
&=n^{2} \mathbb{E}\big[\norm{\brho-\hat \brho}^4\big] \lrc{F(u)-F(u_1)}\lrc{F(u_2)-F(u)} \\
&\leq n^{2} \mathbb{E}\big[\norm{\brho-\hat \brho}^4\big]\lrc{F(u_2)-F(u_1)}^2\\
&\leq\lrc{G(u_2)-G(u_1)}^2,
\end{align*}
where $F(u):=\int \E{\bX^2(t)\mathds{1}_{\lrb{U_\bh\leq u}}}\,\mathrm{d}t$ and $G(u):=\sup_n\{n^{2}\mathbb{E}\big[\norm{\brho-\hat \brho}^4\big]\} F(u)$ are nondecreasing and continuous functions on $[0,1]$. This corresponds to employing $\gamma=2$ and $\alpha=1$ in Theorem 15.6 of \cite{Billingsley1968}, which gives the weak convergence of $n^{-1/2}\bar T_{n,\bh}^3$ in $D([0,1])$ and, as a consequence of the Continuous Mapping Theorem (CMT), $n^{-1/2}T_{n,\bh}^3\inlaw\boldsymbol{\mathcal{G}}_2$ in $D(\R)$.\\

Finally, we prove that $n^{-1/2}T_{n,\bh}^2\inprob \mathbf{0}$. Note first that, by Cauchy--Schwartz,
\begin{align*}
\sup_{x\in\R}\abs{n^{-1/2}T_{n,\bh}^2(x)} \leq&\;\sup_{x\in\R}\norm{\Medh - \mathbf{E}_{x,\bh}}n^{1/2}\norm{\hat{\brho}-\brho}.
\end{align*}
Assumption $\mathbb{E}\big[\norm{\brho-\hat{\brho}}^4\big]=\Oh\big(n^{-2}\big)$ implies $\norm{\hat{\brho}-\brho}=\OP(n^{-1/2})$. In addition, the weak law of large numbers in \Hil\ (e.g., \cite{hoffmann-jorgensen1976}) and \ref{Assump:C1} give $\Medh - \mathbf{E}_{x,\bh}\inprob\mathbf{0}$ in \Hil. Therefore, the CMT yields $\sup_{x\in\R}\allowbreak\big|n^{-1/2}T_{n,\bh}^2(x)\big|\inprob0$ and, as a consequence, $n^{-1/2}T_{n,\bh}^2\inprob\mathbf{0}$ in $D(\R)$.
\end{proof}

\begin{proof}[Proof of Corollary \ref{Coro:KSCvM}]
$\|T_{n,\bh}\|_\mathrm{KS}\inlaw\|\boldsymbol{\mathcal{G}}_2\|_\mathrm{KS}$ follows from the CMT. For the Cram\'er--von Mises norm, we use
\begin{align}
\|T_{n,\bh}\|_\mathrm{CvM}=\int_\R T_{n,\bh}(x)^2\,\mathrm{d}F_{\bh}(x)+\int_\R T_{n,\bh}(x)^2\,\mathrm{d}(F_{n,\bh}-F_{\bh})(x)\label{eq:cvm}
\end{align}
and $F_{n,\bh}- F_{\bh}\inprob\mathbf{0}$. By Slutsky's theorem, $(T_{n,\bh}, F_{n,\bh}-F_{\bh})\inlaw (\boldsymbol{\mathcal{G}}_2, \mathbf{0})$. Then, by the CMT, $\int_\R T_{n,\bh}(x)^2\,\mathrm{d}(F_{n,\bh}-F_{\bh})(x)\inlaw\mathbf{0}$ and $\int_\R T_{n,\bh}(x)^2\,\mathrm{d}F_{\bh}(x)\allowbreak\inlaw\int_\R \boldsymbol{\mathcal{G}}_2(x)^2\,\mathrm{d}F_{\bh}(x)$, which completes the proof.
\end{proof}

\fi

\ifsupplement

\newpage
\title{Supplement to ``Goodness-of-fit tests for the functional linear model based on randomly projected empirical processes''}
\setlength{\droptitle}{-1cm}
\predate{}%
\postdate{}%
\date{}

\author{
	Juan A. Cuesta-Albertos$^{1}$, Eduardo Garc\'ia-Portugu\'es$^{2,3,5}$,\\
	Manuel Febrero-Bande$^{4}$, and Wenceslao Gonz\'alez-Manteiga$^{4}$}

\footnotetext[1]{
Department of Mathematics, Statistics and Computer Science, University of Cantabria (Spain).}
\footnotetext[2]{
Department of Statistics, Carlos III University of Madrid (Spain).}
\footnotetext[3]{
UC3M-BS Institute of Financial Big Data, Carlos III University of Madrid (Spain).}
\footnotetext[4]{
Department of Statistics, Mathematical Analysis and Optimization, University of Santiago de Compostela (Spain).}
\footnotetext[5]{Corresponding author. e-mail: \href{mailto:edgarcia@est-econ.uc3m.es}{edgarcia@est-econ.uc3m.es}.}
\maketitle
\begin{abstract}
This supplement is organized as follows. Section \ref{ap:proofslemmas} proves the auxiliary lemmas used in the main results of the paper. Section \ref{ap:fursim} gives further details about the simulation study and contains extra results omitted in the paper. 
\end{abstract}
\begin{flushleft}
\small
	\textbf{Keywords:} Empirical process; Functional data; Functional linear model; Functional principal components; Goodness-of-fit; Random projections.
\end{flushleft}

\section{Proofs of the auxiliary lemmas}
\label{ap:proofslemmas}

Some general setting required for the proofs of the auxiliary lemmas is introduced as follows. We will use the notation
$\bX_{x,\bh}^i := \bX_{i} \mathds{1}_{\lrb{\bX_{i}^\bh \leq x}} - \mathbf{E}_{x,\bh}$, $i=1,\ldots, n$, and  $\Gamma_{z} := z I - \Gamma$ ($I$ is the identity operator in $\Hil$) with $z\in\mathbb{C}$. We also consider the linear operator $\Gamma_z^{-1/2}$, which is defined in CMS as the operator with the same eigenfunctions as $\Gamma$ and with $j$th eigenvalue equal to $(z-\lambda_j)^{-1/2}$ (the square root is taken in the complex space). We refer by $\xi_j^i$ to the $j$th random coefficient in the decomposition of $\bX_i$ in \eqref{Eq.Kar_loeve}. We also write $\bX^i_{x,\bh} = \sum_{j=1}^\infty  D_{x,\bh}^{i,j} \be_j$ for the expansion of $\bX^i_{x,\bh} $ in the basis of the eigenfunctions of $\Gamma$, so $D_{x,\bh}^{i,j} = \xi_j^{i}\mathds{1}_{\lrb{\bX_{i}^\bh \leq x}} -  \mathbb{E}\big[\xi_j^1\mathds{1}_{\lrb{\bX^\bh \leq x}}\big]$. We make use of the sets $\mathcal{B}_j$ (defined in page 339 in CMS), which are the oriented circles of the complex plane with centre $\lambda_j$ and radius $\delta_j/2$, and the functions $G_n(z)=\Gamma_z^{-1/2}(\Gamma_n-\Gamma)\Gamma_z^{-1/2}$, defined on page 351 in CMS. Finally, $\tilde f_n(z) := z^{-1} \mathds{1}_{\cup_j \mathcal{B}_j} (z)$ are analytic extensions of $f_n(x)= x^{-1}\mathds{1}_{\{x \geq c_n\}}$. A general constant $C$ (independent from $x$) will appear in the proofs, which may change from place to place. \\

We note that the assumptions stated in Subsection \ref{subsec:rho} could be slightly weakened in the case in which $\lim_n t_{n,\mathbf{E}_{x,\bh}}= \infty$. The analysis of the proofs shows how this can be done, depending on the speed of convergence of $t_{n,\mathbf{E}_{x,\bh}}$.

\begin{proof}[Proof of Lemma \ref{Prop.1}]
\ref{Assump:C4} implies that there exists a finite positive number $C$ such that $\lambda_n \leq C n^{-4} \log n $. If $i_n=\lfloor n^{1/7}\rfloor$, we have that
\[
\lambda_{i_n} \leq C \frac{\log n}{i_n^{4}}= o(n^{-1/2}).
\]
Then, by the definition of $k_n$ and \ref{Assump:C7}, we have that $k_n = \Oh(n^{1/7})$ and, consequently,
\[
k_n^6 (\log k_n)^4 = \Oh\lrp{n^{6/7} (\log n)^4} = o(n).
\]
\end{proof}

\begin{proof}[Proof of Lemma \ref{Prop.Tn}]
Let us define the set
\[
\mathcal{A}_n :=\left\{\omega\in\Omega: \sup_{j \leq k_n+1} \frac{|\hat \lambda_j (\omega)- \lambda_j |}{\delta_j} < \frac 1 2 \right\}.
\]
If $\omega_0 \in \mathcal{A}_n $, we have that
\[
\hat \lambda_{k_n+1}(\omega_0) \leq \lambda_{k_n+1} +\frac 1 2 \delta_{k_n+1}  < c_n
\]
and, consequently, $d_n(\omega_0) \leq k_n$. Moreover, from here, if we denote $\mathcal{A}_n^j:=\mathcal{A}_n\cap \{d_n=j\}$, then $\mathcal{A}_n^j=\emptyset$ for every $j>k_n$. Therefore, $\mathcal{A}_n = \cup_{j=1}^{k_n} \mathcal{A}_n^j$ and, for every $j=1,\ldots, k_n$, if $\omega \in \mathcal{A}_n^j$, we have that
\[
 \lambda_{j} + \frac{\delta_{j}}{2} >   \hat \lambda_{j}(\omega) = \hat \lambda_{d_n}(\omega) \geq c_n.
\]
Thus, if $\omega \in \mathcal{A}_n $, $k_n \geq d_n(\omega)$ and then $d_n(\omega)= k_n$. However, under \ref{Assump:C4} and \ref{Assump:C7}, Lemma \ref{Prop.1} holds and, in particular, $k_n^2 \log k_n = o(n^{1/2})$. The proof of Lemma 5 in CMS shows that $\nu [\mathcal{A}_n ] \conv 1$, which completes the proof.
\end{proof}

\begin{proof}[Proof of Lemma \ref{LemmTn2.Ln}]
By the definition of $\bL_n$, we have that
\[
  \langle \Medh - \mathbf{E}_{x,\bh},  \bL_n \rangle
=
-\sum_{j=k_n+1} ^\infty
\langle \brho , \be_j \rangle
\langle \Medh-\mathbf{E}_{x,\bh},\be_j \rangle,
  \]
  and, then, taking into account that the $\bX_i$'s are i.i.d.,
 \begin{align*}
\E{\langle \Medh - \mathbf{E}_{x,\bh},  \bL_n \rangle^2}
=&\;
\frac 1 {n}  \mathbb{E}\bigg[  \sum_{j=k_n+1} ^\infty
\langle \brho , \be_j \rangle
\langle \mathds{1}_{\lrb{\bX_1^\bh \leq x}} \bX_1-\mathbf{E}_{x,\bh},\be_j \rangle
\bigg]^2
\\
\leq&\;
\frac 1 {n}  \mathbb{E}\bigg[ \sum_{j=k_n+1} ^\infty
| \langle \brho , \be_j \rangle |
\| \mathds{1}_{\lrb{\bX_1^\bh \leq x}} \bX_1-\mathbf{E}_{x,\bh}\|
\bigg]^2
\\
\leq&\;
\frac 1 {n} \E{\| \bX_1\|^2}  \bigg(\sum_{j=k_n+1} ^\infty
|\langle \brho , \be_j \rangle| \bigg)^2,
 \end{align*}
where we have used that
\begin{align*}
\E{\| \mathds{1}_{\lrb{\bX_1^\bh \leq x}} \bX_1-\mathbf{E}_{x,\bh}\|^2}
\leq&\;
\E{\| \mathds{1}_{\lrb{\bX_1^\bh \leq x}} \bX_1\|^2}
\leq \E{\| \bX_1\|^2}.
\end{align*}
Then the result follows from \ref{Assump:C1}, \ref{Assump:C2}, and Chebyshev's inequality.
\end{proof}

\begin{proof}[Proof of Lemma \ref{LemmTn2.Rn}]
By the definition of $\bR_n$, we have that $\langle \bR_n ,\bx \rangle = n^{-1}\sum_{i=1}^n\allowbreak \langle \Gamma^\dagger \bX_i, \bx\rangle \varepsilon_i$. Then
\[
\langle \Medh - \mathbf{E}_{x,\bh},  \bR_n \rangle
 =
 \frac 1 {n} \sum_{i=1}^n  \langle \Gamma^\dagger \bX_i,  \Medh - \mathbf{E}_{x,\bh}\rangle \varepsilon_i.
\]

The $\varepsilon_i$'s are i.i.d. and independent from the rest of the involved quantities. Therefore:
\begin{align}
\nonumber
\mathbb{E}\big[\langle \Medh &- \mathbf{E}_{x,\bh},  \bR_n \rangle ^2\big]
\\
\nonumber
=&\;
  \frac {\sigma_\varepsilon^2} {n} \E{\langle \Gamma^\dagger \bX_1,  \Medh - \mathbf{E}_{x,\bh}\rangle^2}
\\
\nonumber
=&\;
  \frac {\sigma_\varepsilon^2} {n^3}      \E{\bigg(\sum_{i'=1}^n \langle \Gamma^\dagger \bX_1, \bX_{i'} \mathds{1}_{\lrb{\bX_{i'}^\bh \leq x}} - \mathbf{E}_{x,\bh}\rangle\bigg)^2}
\\
\label{Eq.tnx2_Rn_2}
=&\;
  \frac {\sigma_\varepsilon^2} {n^3}      \mathbb{E}\Bigg[\bigg(\sum_{i'=1}^n
   \sum_{j=1}^{k_n}\xi_j^1\left(\xi_j^{i'}\mathds{1}_{\lrb{\bX_{i'}^\bh \leq x}}- \E{\xi_j^{1}\mathds{1}_{\lrb{\bX_{1}^\bh \leq x}}}\right)\bigg)^2\Bigg].
\end{align}
We need to compute the sum of the expectations in \eqref{Eq.tnx2_Rn_2}. This sum contains $n^2k_n^2$ terms, and we analyse their possible behaviours next.\\

Let $i', i''=1,\ldots, n$. If $i'\neq i''$, then either $i'\neq 1$ or $i''\neq 1$. Assuming the first holds, we have that $\big(\xi_j^{i'}\mathds{1}_{\lrb{\bX_{i'}^\bh \leq x}}- \mathbb{E}\big[\xi_j^{1}\mathds{1}_{\lrb{\bX_{1}^\bh \leq x}} \big]\big)$ is independent of $\big(\xi_j^{i''}\mathds{1}_{\lrb{\bX_{i'}^\bh \leq x}}- \mathbb{E}\big[\xi_j^{1}\mathds{1}_{\lrb{\bX_{1}^\bh \leq x}} \big]\big)$, $\xi_j^{1}$, and $\xi_{j'}^{1}$. Therefore,
\begin{align*}
\mathbb{E}\Big[\xi_j^1&\left(\xi_j^{i'}\mathds{1}_{\lrb{\bX_{i'}^\bh \leq x}}- \E{\xi_j^{1}\mathds{1}_{\lrb{\bX_{1}^\bh \leq x}}}\right)\\
&\times\xi_{j'}^1\left(\xi_{j'}^{i''}\mathds{1}_{\lrb{\bX_{i''}^\bh \leq x}}- \E{\xi_j^{1}\mathds{1}_{\lrb{\bX_{1}^\bh \leq x}}}\right)\Big]
 =
0,
\end{align*}
and we have $k_n^2n(n-1)$  terms in \eqref{Eq.tnx2_Rn_2} whose expectations are also zero.
With respect to the remaining terms, we can elaborate a bit more on  \eqref{Eq.tnx2_Rn_2}  to obtain
\begin{align*}
\E{\langle \Medh - \mathbf{E}_{x,\bh},  \bR_n \rangle ^2}
&\leq
  \frac {\sigma_\varepsilon^2} {n^3}      \E{\bigg(\sum_{i'=1}^n
   \sum_{j=1}^{k_n}|\xi_j^1|\left(|\xi_j^{i'}| +  \E{|\xi_j^{1}|}\right)\bigg)^2}
     \\
  &=:
     \frac {\sigma_\varepsilon^2} {n^3}      \E{\bigg(\sum_{i'=1}^n    \sum_{j=1}^{k_n} T(i',j)\bigg)^2}.
\end{align*}
The expansion of the square in the last expression yields the terms $\mathbb{E}\big[T(i',j)\allowbreak T(i'',j')\big]$. Each of these can be bounded by $M$ in \ref{Assump:C6}, since they are the sum of four terms of the form $\Ebig{|\xi_j^1| |\xi_j^{i'}| |\xi_{j'}^1| |\xi_{j'}^{i''}|}$, $\Ebig{|\xi_j^1| |\xi_j^{i'}| |\xi_{j'}^1|} \Ebig{|\xi_j^{1}|}$, or $\Ebig{|\xi_j^1| |\xi_{j'}^{1}|} \Ebig{|\xi_j^{1}|} \Ebig{|\xi_{j'}^{1}|}$, with $i',i'' = 1,\ldots, n$ and $j,j' = 1,\ldots , k_n$. If we apply Cauchy--Schwartz's and Jensen's inequalities, we have that, for instance,
\begin{align*}
\E{|\xi_j^1| |\xi_{j'}^{1}|} \E{|\xi_j^{1}|} \E{|\xi_{j'}^{1}|} &\leq
\left(\E{|\xi_j^1|^4}\E{|\xi_{j'}^{1}|^4}\right)^{1/4} \E{|\xi_j^{1}| ^4}^{1/4}\E{|\xi_{j'}^{1}| ^4}^{1/4}\\
& \leq M.
\end{align*}

The remaining terms are handled similarly, and we can conclude that the non-null terms in  \eqref{Eq.tnx2_Rn_2} are bounded by $M$. Thus, we obtain that
\[
\E{\langle \Medh - \mathbf{E}_{x,\bh},  \bR_n \rangle}^2
\leq\frac {\sigma_\varepsilon^2}{n^3}M\left(n^2 k_n^2  -k_n^2 n(n-1)\right)=\frac {\sigma_\varepsilon^2}{n^2}M k_n^2.
\]
Consequently, Chebyshev's inequality and Lemma \ref{Prop.1} give the result.
\end{proof}

\begin{proof}[Proof of Lemma \ref{LemmTn2.Sn}]
The proof follows the steps of the proof of Proposition 3 in CMS, but replaces the $\bX_{n+1}$ term by the difference $ \Medh - \mathbf{E}_{x,\bh}$. There are some technical differences as well because, here, some involved terms are not independent.\\

The proof in the beginning of Proposition 3 in CMS allows us to conclude that
\begin{align}
\left| \langle \Medh - \mathbf{E}_{x,\bh},  \bS_n \rangle  \right| \leq C \sum_{j=1}^{k_n} H_{j,n}, \label{Eq.tnx2_0}
\end{align}
with
\[
H_{j,n}  \leq
C \int_{\mathcal{B}_j} \big|\tilde f_n(z)\big| \| G_n(z) \|_\infty \big\|  \Gamma_z^{-1/2}\bU_n\big\| \big\| \Gamma_z^{-1/2}(\Medh - \mathbf{E}_{x,\bh}) \big\|\, \mathrm{d}z.
\]
The application of Cauchy--Schwartz's inequality twice, plus Lemma 3 in CMS (bounds $\sup_{z \in \mathcal{B}_j}\allowbreak\E{\|G_n(z)\|_\infty^2}$), and some arguments developed in Proposition 3 in CMS, give us that
\begin{align}
\nonumber
\E{H_{j,n}} \leq&\;
C \int_{\mathcal{B}_j} \big|\tilde f_n(z)\big|
\left(\E{\| G_n(z) \|_\infty^2} \right)^{1/2} \nonumber \\
&\times \left( \E{\|  \Gamma_z^{-1/2}\bU_n\|^4} \E{\|\Gamma_z^{-1/2}(\Medh - \mathbf{E}_{x,\bh})\| ^4} \right)^{1/4}\, \mathrm{d}z\nonumber
\\
\nonumber
\leq&\;
C \text{diam}(\mathcal{B}_j) \frac{j \log j}{\sqrt{n}} \nonumber\\
&\times  \sup_{z \in \mathcal{B}_j}\left\{ \big|\tilde f_n(z)\big| \left( \E{\|  \Gamma_z^{-1/2}\bU_n\|^4} \E{\|\Gamma_z^{-1/2}(\Medh - \mathbf{E}_{x,\bh})\| ^4} \right)^{1/4} \right\}\nonumber
\\
\label{Eq.tnex2.Sn1}
\leq&\;
C  \frac{ j \log j}{\sqrt{n}} \nonumber\\
&\times\sup_{z \in \mathcal{B}_j} \left\{  \left(
\E{\|  \Gamma_z^{-1/2}\bU_n\|^4}
\E{\|\Gamma_z^{-1/2}(\Medh - \mathbf{E}_{x,\bh})\| ^4}
\right)^{1/4} \right\} .
\end{align}

Let us analyse the two expectations included in \eqref{Eq.tnex2.Sn1}. First, by the definition of $\bU_n$, we have that
\begin{align}
\nonumber
\E{\|  \Gamma_z^{-1/2}\bU_n \|^4}=&\; \frac 1 {n^4} \sum_{r,s,r^\prime,s^\prime=1}^n
\E{\langle\Gamma_z^{-1/2}\bX_r,\Gamma_z^{-1/2}\bX_s\rangle\langle \Gamma_z^{-1/2}\bX_{r'},\Gamma_z^{-1/2}\bX_{s'}\rangle}\nonumber\\
&\times \E{\varepsilon_{r}\varepsilon_{s}\varepsilon_{r'}\varepsilon_{s'}}.\label{Eq.tnex2.Sn2_1}
\end{align}
This sum contains $n^4$ terms. However, the $\varepsilon$'s are centred, independent variables, so $\E{\varepsilon_{r}\varepsilon_{s}\varepsilon_{r'}\varepsilon_{s'}}$ equals zero unless the vector $(r,r',s,s')$ contains two pairs of identical components. This only happens at most in $n + \frac 1 2 \binom{4}{2} n(n-1)=3n^2 -2n$ terms, in which $\E{\varepsilon_{r}\varepsilon_{s}\varepsilon_{r'}\varepsilon_{s'}}=1$. Let us compute the value of the other involved expectation on those terms. We have that
\[
\langle  \Gamma_z^{-1/2}\bX_r,  \Gamma_z^{-1/2}\bX_s\rangle
=
\sum_{i=1}^\infty \frac{\lambda_i}{|z-\lambda_i|}\xi_i^r\xi_i^s
\]
and, then,
\begin{align*}
\nonumber
\Big| \mathbb{E}\Big[\langle  \Gamma_z^{-1/2}\bX_r,  \Gamma_z^{-1/2}\bX_s\rangle&\langle  \Gamma_z^{-1/2}\bX_{r'},  \Gamma_z^{-1/2}\bX_{s'}\rangle\Big] \Big|
\\
\leq&\;
\sum_{i=1}^\infty\sum_{i`=1}^\infty \frac{\lambda_i}{|z-\lambda_i|}\frac{\lambda_{i'}}{|z-\lambda_{i'}|}\E{\left| \xi_i^r\xi_i^s\xi_{i'}^{r'}\xi_{i'}^{s'}\right|}
\\
\leq&\;
M \sum_{i=1}^\infty\sum_{i`=1}^\infty \frac{\lambda_i}{|z-\lambda_i|}\frac{\lambda_{i'}}{|z-\lambda_{i'}|},
\end{align*}
where we have applied Cauchy--Schwartz's inequality twice and \ref{Assump:C6}.
Those findings, replaced in \eqref{Eq.tnex2.Sn2_1}, give us that
\begin{align} \label{Eq.tnex2.Sn2_2}
\E{\|  \Gamma_z^{-1/2}\bU_n \|^4}
\leq \frac{3n^2-2n}{n^4} M
\left(
\sum_{i=1}^\infty \frac{\lambda_i}{|z-\lambda_i|}
\right)^2.
\end{align}

We analyse the last factor in \eqref{Eq.tnex2.Sn1}.
We have that
\[
\Gamma_z^{-1/2}(\Medh - \mathbf{E}_{x,\bh}) =
\frac 1 n \sum_{r=1}^n  \Gamma_z^{-1/2}  \bX_{x,\bh}^r
=
\frac 1 n  \sum_{i=1}^\infty \sqrt{\frac{\lambda_i}{z-\lambda_i}} \left(\sum_{r=1}^n D_{x,\bh}^{r,i} \right) \be_i.
\]
Therefore,
\begin{align}
\label{Eq.tnex2.Sn2_2.2}
\big \|\Gamma_z^{-1/2}(\Medh - \mathbf{E}_{x,\bh}) \big\|^2
=
\frac 1 {n^2} \sum_{i=1}^\infty \frac{\lambda_i}{|z-\lambda_i|}\left(  \sum_{r=1}^n  D_{x,\bh}^{r,i} \right)^2
\end{align}
and
\begin{align}
\nonumber
\mathbb{E}\Big[\big \|\Gamma_z^{-1/2}&(\Medh - \mathbf{E}_{x,\bh}) \big\|^4\Big]
\\
\label{Eq.tnex2.Sn2_3}
=&\;
\frac 1{n^4} \sum_{i,i'=1}^\infty \frac{\lambda_i}{|z-\lambda_i|}\frac{\lambda_{i'}}{|z-\lambda_{i'}|}\sum_{r,r's,s'=1}^n \E{D_{x,\bh}^{r,j}D_{x,\bh}^{r',j}D_{x,\bh}^{s,j'}D_{x,\bh}^{s',j'}}.
\end{align}
We are in a similar situation to that in \eqref{Eq.tnex2.Sn2_1}, and it happens that all expectations here are zero except, at most, $3n^2-2n$ of them. Moreover, those non-null terms can be bounded. Applying the Cauchy--Schwartz inequality twice,
\begin{align}
\nonumber
\Big| \mathbb{E}\Big[D_{x,\bh}^{r,j}
D_{x,\bh}^{r',j}
&D_{x,\bh}^{s,j'}
D_{x,\bh}^{s',j}
\Big]\Big|
\\
\nonumber
\leq&\;
\E{\big| D_{x,\bh}^{r,j}
D_{x,\bh}^{r',j}
D_{x,\bh}^{s,j'}
D_{x,\bh}^{s',j}\big|
}
\\
\leq&\;
 \label{Eq.tnex2.Sn5_2}
\left(\E{\big(D_{x,\bh}^{r,j} \big)^4}
\E{\big(D_{x,\bh}^{r',j} \big)^4}
\E{\big(D_{x,\bh}^{s,j'} \big)^4}
\E{\big(D_{x,\bh}^{s',j} \big)^4}
\right)^{1/4}.
\end{align}

And, given $j\in\N $ and $r=1,\ldots, n$,
\[
\E{\big( D_{x,\bh}^{r,j} \big)^4}
=
\E{\big(\xi_{j}\mathds{1}_{\lrb{\bX_{j}^\bh \leq x}} - \mathbf{E}_{x,\bh} ^{j}\big)^4}
\leq
\E{\left(|\xi_{j}| + \E{|\xi_j|}\right)^4},
\]
which gives a fourth-order polynomial whose terms are $\binom{4}{s} \E{|\xi_{j}|^s}(\E{|\xi_j|})^{4-s}$ for $s=0,\ldots, 4$. If we apply Jensen's inequality and \ref{Assump:C6} to each of these, we have that
$$\E{|\xi_{j}|^s}(\E{|\xi_j|})^{4-s}
\leq
\E{|\xi_{j}|^4}^{s/4}(\E{|\xi_j|^4})^{(4-s)/4} =  \E{|\xi_{j}|^4} \leq M.
$$
This and \eqref{Eq.tnex2.Sn5_2} give
$\big| \Ebig{   D_{x,\bh}^{r,j}
D_{x,\bh}^{r',j}
D_{x,\bh}^{s,j'}
D_{x,\bh}^{s',j}
}\big|
\leq
2^4 M$, and, with \eqref{Eq.tnex2.Sn2_3}, yield
\[
\E{\big \|\Gamma_z^{-1/2}(\Medh - \mathbf{E}_{x,\bh}) \big\|^4}
\leq
\frac{3n^2-2n}{n^4} 2^4M
\left(
\sum_{i=1}^\infty \frac{\lambda_i}{|z-\lambda_i|}
\right)^2.
\]

If we replace this bound and \eqref{Eq.tnex2.Sn2_2} in \eqref{Eq.tnex2.Sn1}, we obtain that
\[
\E{H_{j,n}}
\leq
C  \frac{ j \log j}{\sqrt{n}}
\frac{(3n^2-2n)^2}{n^8}
\sup_{z \in \mathcal{B}_j} \left\{
\sum_{i=1}^\infty \frac{\lambda_i}{|z-\lambda_i|}
\right\}^4 .
\]
\ref{Assump:C3} allows us to apply Lemmas 1 and 2 in CMS, which gives us that $ \sup_{z\in \mathcal{B}_j}\allowbreak \lrb{\sum_{i=1}^\infty \frac {\lambda_i}{|z - \lambda_i|}} \leq C j \log j$ and, then, that
\[
\E{H_{j,n}}
\leq
C   (j \log j)^5
\frac{(3n^2-2n)^2}{n^{17/2}}.
\]

This and \eqref{Eq.tnx2_0} give us
\begin{align*}Ê
\sqrt{n} \E{\left| \langle \Medh - \mathbf{E}_{x,\bh},  \bS_n \rangle  \right|}
\leq&\;C \frac{(3n^2-2n)^2}{n^{8}} \sum_{j=1}^{k_n}  (j \log j)^5
\\
\leq&\;
C \frac{(3n^2-2n)^2}{n^{8}} k_n^6 (\log k_n)^5 = o(n^{-2})
\end{align*}
due to \ref{Assump:C4}, which proves the lemma.
\end{proof}

\begin{proof}[Proof of Lemma \ref{LemmTn2.Yn}]
  This term is handled following the scheme in Proposition 2 in CMS. Using the notation in that proposition, we have that $\bY_n = \mathcal{S}_n\brho + \mathcal{R}_n\brho$, where
 \begin{align*}
 \mathcal{S}_n:=&\;\frac{1}{2\i\pi}\sum_{j=1}^{k_n}\int_{\mathcal{B}_j}\lrc{(zI-\Gamma)^{-1}(\Gamma_n-\Gamma)(zI-\Gamma)^{-1}}\, \mathrm{d}z,\\
  \mathcal{R}_n:=&\;   \frac{1}{2\i\pi}\sum_{j=1}^{k_n}\int_{\mathcal{B}_j}\lrc{(zI-\Gamma)^{-1}(\Gamma_n-\Gamma)(zI-\Gamma)^{-1}(\Gamma_n-\Gamma)(zI-\Gamma_n)^{-1}}\, \mathrm{d}z.\\
 \end{align*}
  The Cauchy--Schwartz inequality gives
  \begin{align}
    \nonumber
 \mathbb{E}\big[|\langle  \mathcal{S}_n  \brho, \Medh &- \mathbf{E}_{x,\bh}\rangle|\big]
 \\
 \label{Eq.tnx2.Yn.1}
 \leq&\;
  \E{{  \|  \mathcal{S}_n  \brho\| \| \Medh - \mathbf{E}_{x,\bh}\|}} \nonumber\\
 \leq&\;
  \left(
   \E{\|  \mathcal{S}_n  \brho\| ^2}  \E{\| \Medh - \mathbf{E}_{x,\bh}\|^2}
  \right)^{1/2}.
  \end{align}

  On the other hand, it happens that
  \begin{align*}
  \E{\| \Medh - \mathbf{E}_{x,\bh}\|^2}
\leq&\;
\frac{1}{n^2}  \E{
  \bigg\| \sum_{r=1}^n \bX^r_{x,\bh} \bigg\|^2
}
  =
   \frac{1}{n^2} \E{\bigg\|
\sum_{l=1}^\infty \bigg(\sum_{r=1}^n D_{x,\bh}^{r,l} \bigg)\be_i\bigg\|^2
}
\\
 =&\;
   \frac{1}{n^2}
   \sum_{l=1}^\infty \E{
 \bigg(\sum_{r=1}^n D_{x,\bh}^{r,l} \bigg)^2
}
=
\frac{1}{n^2}
\sum_{l=1}^\infty \sum_{r=1}^n \E{\big(D_{x,\bh}^{r,l} \big)^2}
\\
=&\;
\frac{1}{n}
\sum_{l=1}^\infty \E{\big(D_{x,\bh}^{1,l} \big)^2}
\leq
\frac{1}{n}   \sum_{l=1}^\infty
\E{\xi_l^2}
= \frac{1}{n}  \sum_{l=1}^\infty \lambda_l.
\end{align*}

From here and \eqref{Eq.tnx2.Yn.1}, we have
\begin{align}
\E{|\langle  \mathcal{S}_n  \brho, \Medh - \mathbf{E}_{x,\bh}\rangle|}
\leq&\;
\frac C {\sqrt{n}}
\left(
\E{\|\mathcal{S}_n  \brho\| ^2}
\right)^{1/2}\nonumber\\
=&\;
\frac C {\sqrt{n}}
\left(
\sum_{l=1}^\infty   \E{\langle\mathcal{S}_n \brho, \be_l \rangle^2}
\right)^{1/2}.  \label{Eq.tnx2.Yn.2}
\end{align}
On pages 347 and 348 in CMS a reasoning is developed which gives the next two bounds:
\[
 \E{\langle\mathcal{S}_n \brho, \be_l \rangle^2}
\leq \left\{
\begin{array}{ll}
\frac{M}{n} \left(
    \sum_{l'=k_n+1}^\infty \langle \brho, \be_{l'}\rangle
    \frac{\sqrt{\lambda_l\lambda_{l'}}}{\lambda_{l'}- \lambda_{l}}
    \right)^2, & \text{if } l \leq k_n,
\\
\frac M n \left(
    \sum_{l'=1}^{k_n} \langle \brho, \be_{l'}\rangle
    \frac{\sqrt{\lambda_l\lambda_{l'}}}{\lambda_{l'}- \lambda_{l}}
    \right)^2, & \text{if } l > k_n.
\end{array}
\right.
\]
This and \eqref{Eq.tnx2.Yn.2} give
\begin{align}
\label{Eq.tnx2.Yn.3}
n\Bigg(\E{|\langle  \mathcal{S}_n  \brho, \Medh - \mathbf{E}_{x,\bh}\rangle|}\Bigg)^2
\leq&\;
\frac{C}{n}\sum_{l=1}^{k_n}
\Bigg(
\sum_{l'=k_n+1}^\infty \langle \brho, \be_{l'}\rangle
\frac{\sqrt{\lambda_l\lambda_{l'}}}{\lambda_{l'}- \lambda_{l}}
\Bigg)^2
\\
\label{Eq.tnx2.Yn.4}
&
+   \frac{C}{{n}} \sum_{l=k_n+1}^\infty
\Bigg(
\sum_{l'=1}^{k_n} \langle \brho, \be_{l'}\rangle
\frac{\sqrt{\lambda_l\lambda_{l'}}}{\lambda_{l'}- \lambda_{l}}
\Bigg)^2.
\end{align}

Lemma 1 in CMS  applied to the term in  \eqref{Eq.tnx2.Yn.3} leads  to
\begin{align*}
\frac{C}{n}\sum_{l=1}^{k_n}
\Bigg(
\sum_{l'=k_n+1}^\infty \langle \brho, \be_{l'}\rangle
\frac{\sqrt{\lambda_l\lambda_{l'}}}{\lambda_{l'}- \lambda_{l}}
\Bigg)^2
\leq&\;
\frac{C}{n}\sum_{l=1}^{k_n}
\Bigg(
\sum_{l'=k_n+1}^\infty
\sqrt{\frac{\lambda_{l'}}{\lambda_{l}}}\frac{|\langle \brho, \be_{l'}\rangle |}{1- \frac{l}{l'}}
\Bigg)^2
\\
\leq&\;
\frac{C}{n}\sum_{l=1}^{k_n}
\Bigg(
\sum_{l'=k_n+1}^\infty
\frac{|\langle \brho, \be_{l'}\rangle |}{1- \frac{l}{l'}}
\Bigg)^2
\\
\leq&\;
\frac{2C}{n}\sum_{l=1}^{k_n}
\Bigg(
\sum_{l'=k_n+1}^{k_n+h_n}
\frac{|\langle \brho, \be_{l'}\rangle |}{1- \frac{1}{l'}}
\Bigg)^2
\\
&
+
\frac{2C}{n}\sum_{l=1}^{k_n}
\Bigg(
\sum_{l'=k_n+h_n+1}^\infty |\langle \brho, \be_{l'}\rangle |
\frac{1}{1- \frac{l}{l'}}
\Bigg)^2,
\end{align*}
where $h_n=\big\lfloor\sqrt{\frac{k_n}{\log k_n}}\big\rfloor$. From this definition we obtain that the second term satisfies
\begin{align*}
\frac{2C}{n}\sum_{l=1}^{k_n}
\Bigg(
\sum_{l'=k_n+h_n+1}^\infty &
\frac{|\langle \brho, \be_{l'}\rangle |}{1- \frac{l}{l'}}
\Bigg)^2\\
\leq&\;
\frac{2C}{n}\sum_{l=1}^{k_n}
\Bigg(
\sum_{l'=k_n+h_n+1}^\infty | \langle \brho, \be_{l'}\rangle| (1+ \sqrt{k_n \log k_n})
\Bigg)^2
\\
\leq&\;
\frac{8C}{n}\sum_{l=1}^{k_n} k_n \log k_n
\Bigg(
\sum_{l'=k_n+1}^\infty | \langle \brho, \be_{l'}\rangle|
\Bigg)^2
\\
\leq&\;
\frac{8C}{n}k_n^2 \log k_n
\Bigg(
\sum_{l'=k_n+1}^\infty | \langle \brho, \be_{l'}\rangle| \Bigg)^2\to0,
\end{align*}
where the convergence follows from \ref{Assump:C2} and Lemma \ref{Prop.1}. On the other hand, the first term verifies that
\begin{align*}
\frac{2C}{n}\sum_{l=1}^{k_n}
\Bigg(
\sum_{l'=k_n+1}^{k_n+h_n}
\frac{|\langle \brho, \be_{l'}\rangle |}{1- \frac{l}{l'}}
\Bigg)^2
\leq&\;
\frac{2C}{n} k_n  h_n^2 \max_{\substack{k_n < l' \leq k_n+h_n\\ 1 \leq l \leq k_n}}
\Bigg\{
\frac{|\langle \brho, \be_{l'}\rangle |}{\left| 1- \frac{l}{l'} \right|}
\Bigg\}^2
\\
\leq&\;
\frac{2C}{n} k_n  \frac{k_n}{\log k_n} (k_n+h_n)^2 \max_{k_n < l'}
\left\{
|\langle \brho, \be_{l'}\rangle |^2
\right\}
\\
=&\;
 8C  \frac{k_n^4}{n \log k_n}  \max_{k_n < l'}
\left\{
|\langle \brho, \be_{l'}\rangle |^2
\right\}
\conv 0
\end{align*}
due to \ref{Assump:C2} and Lemma \ref{Prop.1}. \\

Then the term in \eqref{Eq.tnx2.Yn.3} converges to zero. Let us analyse the term in \eqref{Eq.tnx2.Yn.4}. As before, Lemma 1 in CMS gives us
\begin{align}
\nonumber
\frac{C}{{n}} \sum_{l=k_n+1}^\infty
 \left(
\sum_{l'=1}^{k_n} \langle \brho, \be_{l'}\rangle
\frac{\sqrt{\lambda_l\lambda_{l'}}}{\lambda_{l'}- \lambda_{l}}
\right)^2
\leq&\;
 \frac{C}{{n}} \sum_{l=k_n+1}^\infty
 \left(
\sum_{l'=1}^{k_n}  \sqrt{\frac{\lambda_l}{\lambda_{l'}}}
\frac{|\langle \brho, \be_{l'}\rangle|}{1- \frac{l`}{l}}
\right)^2
\\
\nonumber
\leq&\;
\frac{C}{{n}} \frac 1 {\lambda_{k_n}}\sum_{l=k_n+1}^\infty \lambda_l
 \left(
\sum_{l'=1}^{k_n}
\frac{|\langle \brho, \be_{l'}\rangle|}{1- \frac{l`}{l}}
\right)^2
\\
\nonumber
\leq&\;
\frac{2C}{{n}} \frac 1 {\lambda_{k_n}} \sum_{l=k_n+1}^{k_n+h_n} \lambda_l
 \left(
\sum_{l'=1}^{k_n}
\frac{|\langle \brho, \be_{l'}\rangle|}{1- \frac{l`}{l}}
\right)^2
\\
\label{Eq.tnx2.Yn.5}
& +    \frac{2C}{{n}} \frac 1 {\lambda_{k_n}} \sum_{l=k_n+h_n+1}^\infty \lambda_l
 \left(
\sum_{l'=1}^{k_n}
\frac{|\langle \brho, \be_{l'}\rangle|}{1- \frac{l`}{l}}
\right)^2.
\end{align}

We have that
\begin{align*}
\sum_{l=k_n+h_n+1}^\infty \lambda_l&
 \Bigg(
\sum_{l'=1}^{k_n} |\langle \brho, \be_{l'}\rangle|
\frac{1}{1- \frac{l`}{l}}
\Bigg)^2\\
\leq&\;
\sum_{l=k_n+h_n+1}^\infty \lambda_l
 \Bigg(
\sum_{l'=1}^{k_n} |\langle \brho, \be_{l'}\rangle|
\left(1+ \frac{k_n}{h_n}  \right)
\Bigg)^2
\\
\leq&\;
4 k_n \log k_n    \Bigg( \sum_{l'=1}^{\infty} |\langle \brho, \be_{l'}\rangle| \Bigg)^2 \sum_{l=k_n+h_n+1}^\infty \lambda_l
\end{align*}
and that
\begin{align*}
\sum_{l=k_n+1}^{k_n+h_n} \lambda_l&
 \Bigg(
\sum_{l'=1}^{k_n} |\langle \brho, \be_{l'}\rangle|
\frac{1}{1- \frac{l`}{l}}
\Bigg)^2 \\
\leq&\;
\sum_{l=k_n+1}^{k_n+h_n} \lambda_l
 \Bigg(
\sum_{l'=1}^{\infty} |\langle \brho, \be_{l'}\rangle|    \Bigg)^2
\max_{\substack{k_n<l<k_n+h_n\\ l' \leq k_n}}\Bigg\{ \frac{1}{1- \tfrac{l`}{l}}
\Bigg\}^2
\\
\leq&\;
   (k_n+h_n)^2\left(
\sum_{l'=1}^{\infty} |\langle \brho, \be_{l'}\rangle|    \right)^2
\sum_{l=k_n+1}^{k_n+h_n} \lambda_l
\\
\leq&\;
4k_n ^2 \left(
\sum_{l'=1}^{\infty} |\langle \brho, \be_{l'}\rangle|    \right)^2
\sum_{l=k_n+1}^{k_n+h_n} \lambda_l.
\end{align*}

Replacing the last two inequalities in \eqref{Eq.tnx2.Yn.5}, we obtain that
\begin{align*}
\frac{C}{{n}} \sum_{l=k_n+1}^\infty
 \left(
\sum_{l'=1}^{k_n} \langle \brho, \be_{l'}\rangle
\frac{\sqrt{\lambda_l\lambda_{l'}}}{\lambda_{l'}- \lambda_{l}}
\right)^2
\leq&\;
\frac{2C}{{n}} \frac 1 {\lambda_{k_n}}   8k_n^2 \left(
\sum_{l'=1}^{\infty} |\langle \brho, \be_{l'}\rangle|    \right)^2
\sum_{l=k_n+1}^{\infty} \lambda_l
\\
\leq&\;
\frac{C}{{n}}    k_n^2 \left(
\sum_{l'=1}^{\infty} |\langle \brho, \be_{l'}\rangle|    \right)^2
(k_n+2)\frac {\lambda_{k_n+1}} {\lambda_{k_n}}
\\
\leq&\;
\frac{C}{{n}}    k_n^3 \left(
\sum_{l'=1}^{\infty} |\langle \brho, \be_{l'}\rangle|    \right)^2  ,
\end{align*}
where we have applied Lemma 1 in CMS. Obviously, this quantity converges to zero due to \ref{Assump:C2} and \ref{Assump:C7}. This proves that $\sqrt{n}  \E{|\langle  \mathcal{S}_n  \brho, \Medh - \mathbf{E}_{x,\bh}\rangle |} \conv 0$, and, hence, that $ \sqrt{n}  \langle  \mathcal{S}_n  \brho, \Medh - \mathbf{E}_{x,\bh}\rangle  \stackrel{p}{\to} 0$. Therefore, we only need to show that $ \sqrt{n} \langle  \mathcal{R}_n  \brho, \Medh - \mathbf{E}_{x,\bh}\rangle  \stackrel{p}{\to} 0$.\\

This term is dealt with in the proof on pages 350 and 351 in CMS. According to the arguments on those pages, it happens that we only need to show that
\begin{align}\label{Eq.tnx2.Yn.0_0}
\sqrt{n} \sum_{j=1}^{k_n} \int_{\mathcal{B}_j}
\E{\left\| G_n(z)\right\|^2_\infty \big\| (zI-\Gamma)^{-1/2}(\Medh - \mathbf{E}_{x,\bh})\big\|} \big\|\Gamma_z^{-1/2}\brho\big\|\, \mathrm{d}z \conv 0.
\end{align}

According to Lemma 3 in CMS, if $z \in \mathcal{B}_j$, then
\begin{align*}
\left\| G_n(z)\right\|^2_\infty
\leq&\;
4 \sum_{l=1}^\infty\sum_{\substack{k=1\\ k\neq j}}^\infty \frac{\langle (\Gamma_n - \Gamma)\be_l,\be_k\rangle^2}{|\lambda_j - \lambda_l| |\lambda_j - \lambda_k|}
\\
&+
2 \sum_{\substack{k=1\\ k\neq j}}^\infty \frac{\langle (\Gamma_n - \Gamma)\be_j,\be_k\rangle^2}{\delta_j |\lambda_j - \lambda_k|}
+\frac{\langle (\Gamma_n - \Gamma)\be_j,\be_j\rangle^2}{\delta_j ^2}.
\end{align*}
From \eqref{Eq.tnex2.Sn2_2.2} we have that if we take the positive value of the square root, then
\[
\big\|\Gamma_z^{-1/2}(\Medh - \mathbf{E}_{x,\bh}) \big\| \leq
\frac 1 {n} \sum_{i=1}^\infty \sqrt{\frac{\lambda_i}{|z-\lambda_i|}}  \sum_{r=1}^n  D_{x,\bh}^{r,i}.
\]
Thus, if $z \in \mathcal{B}_j$, then
\begin{align}
\nonumber
\mathbb{E}\Big[
\left\| G_n(z)\right\|^2_\infty &
\big\|\Gamma_z^{-1/2}(\Medh - \mathbf{E}_{x,\bh}) \big\|
\Big]
\\
\nonumber
\leq&\;
4\sum_{i=1}^\infty\sum_{l=1}^\infty\sum_{\substack{k=1\\k\neq j}}^\infty  \sqrt{\frac{\lambda_i}{|z-\lambda_i|}}   \frac{ \E{D_{x,\bh}^{1,i} \langle (\Gamma_n - \Gamma)\be_l,\be_k\rangle^2}}{|\lambda_j - \lambda_l| |\lambda_j - \lambda_k|}
\\
\nonumber
&
+
2 \sum_{i=1}^\infty\sum_{\substack{k=1\\k\neq j}}^\infty \sqrt{\frac{\lambda_i}{|z-\lambda_i|}}   \frac{ \E{D_{x,\bh}^{1,i} \langle (\Gamma_n - \Gamma)\be_j,\be_k\rangle^2}}{\delta_j |\lambda_j - \lambda_k|}
\\
\label{Eq.tnx2.Yn.7}
&    + \sum_{i=1}^\infty \sqrt{\frac{\lambda_i}{|z-\lambda_i|}}   \frac{ \E{D_{x,\bh}^{1,i} \langle (\Gamma_n - \Gamma)\be_j,\be_j\rangle^2}}{\delta_j ^2}.
\end{align}

It is not difficult to check that
$\langle (\Gamma_n - \Gamma)\be_l,\be_k\rangle
=
\frac{\sqrt{\lambda_l \lambda_k}}{n} \sum_{s=1}^n (\xi_l^s\xi_k^s - \delta_{l,k})$, so
\[
D_{x,\bh}^{1,i} \langle (\Gamma_n - \Gamma)\be_l,\be_k\rangle^2
=
\frac{\lambda_l \lambda_k}{n^2} \sum_{s,s'=1}^n D_{x,\bh}^{1,i}(\xi_l^s\xi_k^s - \delta_{l,k})(\xi_l^{s'}\xi_k^{s'} - \delta_{l,k}).
\]

Let $s,s' \in \{1,\ldots,n\}$. If $s\neq s'$, at least one of these is different from $1$. Assume that $s\neq 1$. By the independence of the sample, we have that
\begin{align*}
\mathbb{E}\Big[
D_{x,\bh}^{1,i}&(\xi_l^s\xi_k^s - \delta_{l,k})(\xi_l^{s'}\xi_k^{s'} - \delta_{l,k})
\Big]\\
=&\;
\E{
(\xi_l^s\xi_k^s - \delta_{l,k})
}
\E{D_{x,\bh}^{1,i}
(\xi_l^{s'}\xi_k^{s'} - \delta_{l,k})
}
=0.
\end{align*}

Similarly, if $s=s'$ and $s\neq 1$, we have $\Ebig{
D_{x,\bh}^{1,i}(\xi_l^s\xi_k^s - \delta_{l,k})(\xi_l^{s'}\xi_k^{s'} - \delta_{l,k})
}
=0$. Thus,
\[
\E{D_{x,\bh}^{1,i} \langle (\Gamma_n - \Gamma)\be_l,\be_k\rangle^2}
=
\frac{\lambda_l \lambda_k}{n^2} \E{D_{x,\bh}^{1,i}(\xi_l^1\xi_k^1 - \delta_{l,k})^2}.
\]

On the other hand, it can be checked that, for every $i,l,k$, it happens that
$\big|\Ebig{ D_{x,\bh}^{1,i}(\xi_l^1\xi_k^1 - \delta_{l,k})^2}\big| \leq C M,
$ where $M$ is given in \ref{Assump:C6}. For instance, let us assume that $l=k$ with $k\neq i$. We have that
\[
\left|\E{D_{x,\bh}^{1,i}(\xi_l^1\xi_k^1 - \delta_{l,k})^2}\right|
\leq
\E{|\xi_i^1| (|\xi_k^1|^2+1)^2}+
\E{\E{|\xi_i^1|} (|\xi_k^1|^2+1)^2}.
\]

The term with higher order expectations is $\E{|\xi_i^1| |\xi_k^1|^4}$, and it can be bounded:
\[
\E{|\xi_i^1| |\xi_k^1|^4} \leq \E{\max (|\xi_i^1|,|\xi_k^1|)^5}
\leq
\E{|\xi_i^1|^5+|\xi_k^1|^5} \leq 2M.
\]

In summary, we have that
\[
\left| \E{D_{x,\bh}^{1,i} \langle (\Gamma_n - \Gamma)\be_l,\be_k\rangle^2}\right|
\leq
\frac{C \lambda_l \lambda_k}{n^2}.
\]

From here and \eqref{Eq.tnx2.Yn.7} we obtain that if $z \in \mathcal{B}_j$, then
\begin{align}
\nonumber
\mathbb{E}\Big[
 \| G_n(z)\|^2_\infty
&\big\|\Gamma_z^{-1/2}(\Medh - \mathbf{E}_{x,\bh})  \big\|
\Big]
\\
\nonumber
\leq&\;
\frac{C}{n^2}
\Bigg(\sum_{l=1}^\infty\sum_{\substack{k=1\\k\neq j}}^\infty   \frac{ \lambda_l \lambda_k}{|\lambda_j - \lambda_l| |\lambda_j - \lambda_k|}\\
&+
\sum_{\substack{k=1\\ k\neq j}}^\infty   \frac{\lambda_j \lambda_k}{\delta_j |\lambda_j - \lambda_k|}
+  \frac{\lambda_j^2}{\delta_j ^2}
\Bigg) \sum_{i=1}^\infty  \sqrt{\frac{\lambda_i}{|z-\lambda_i|}}
\nonumber\\
\label{Eq.tnx2.Yn.8_0}
\leq&\;
\frac C{n^2} (j \log j)^2
\Bigg(
  \sum_{\substack{i=1\\ i\neq j}}^\infty  \sqrt{\frac{\lambda_i}{|\lambda_j-\lambda_i|}}
+ \sqrt{\frac{\lambda_j}{\delta_j}}
\Bigg),
\end{align}
where we have applied the same argument as in the final part of the proof of Lemma 3 in CMS and Lemma \ref{LemmTn2.Ln} in this paper. \\

Since $\delta_j \geq \lambda_j - \lambda_{j+1}$, the first part of Lemma 1 in CMS gives us that
\begin{align}\label{Eq.tnx2.Yn.8}
\sqrt{\frac{\lambda_j}{\delta_j}}
\leq
\sqrt{\frac{\lambda_j}{\lambda_j-\lambda_{j+1}}}\leq  \sqrt{j+1} \leq 2\sqrt{k_n}.
\end{align}

On the other hand, we have that
\begin{align}
\sum_{\substack{i=1\\ i\neq j}}^\infty  \sqrt{\frac{\lambda_i}{|\lambda_j-\lambda_i|}}
\leq
\sqrt{\frac{\lambda_j}{\delta_j}}\frac{1}{\sqrt{\lambda_j}} \sum_{\substack{i=1\\ i\neq j}}^\infty  \sqrt{\lambda_i}
\leq C \sqrt{k_n}  \frac{1}{\sqrt{c_n}},  \label{Eq.tnx2.Yn.9}
\end{align}
where we have employed that $\delta_j \leq | \lambda_j - \lambda_i|$ if $i \neq j$; \eqref{Eq.tnx2.Yn.8}; that \ref{Assump:C4} implies $ \sum_{\substack{i=1\\ i\neq j}}^\infty  \sqrt{\lambda_i}< \infty$; and, by the definitions of $\delta_j$ and $k_n$, that for every $j\leq k_n$, $c_n \leq \lambda_j + \frac {\delta_j} 2 \leq 3\frac {\lambda_j} 2$. Inequalities \eqref{Eq.tnx2.Yn.8_0}, \eqref{Eq.tnx2.Yn.8}, and \eqref{Eq.tnx2.Yn.9} give us that
\begin{align*}
\int_{\mathcal{B}_j}
\mathbb{E}\Big[\left\| G_n(z)\right\|^2_\infty &\big\| (zI-\Gamma)^{-1/2}(\Medh - \mathbf{E}_{x,\bh})\big\|\Big]
\big\|\Gamma_z^{-1/2}\brho\big\|\, \mathrm{d}z
\\
\leq&\;
C \text{diam}(\mathcal{B}_j) \frac{(j\log j)^2 \sqrt{k_n}}{n^2 \sqrt{c_n}}\sup_{z \in \mathcal{B}_j}\big\|\Gamma_z^{-1/2}\brho\big\|
\\
\leq&\;
C{\delta_j} \frac{(j\log j)^2 \sqrt{k_n}}{n^2 \sqrt{c_n}} \frac{1}{\sqrt{\delta_j}} \| \brho \|
\leq
C \frac{(j\log j)^{3/2} \sqrt{k_n}}{n^{7/4}},
\end{align*}
where we have applied the facts that if $z \in \mathcal{B}_j$, then $|z - \lambda_i| \geq \delta_j/2$; that if $j$ is large enough, then $\delta_j < C(j \log j)^{-1}$; and \ref{Assump:C7}. From here, we have that
\begin{align*}
\sqrt{n} \sum_{j=1}^{k_n} &\int_{\mathcal{B}_j}
\E{\left\| G_n(z)\right\|^2_\infty \big\| (zI-\Gamma)^{-1/2}(\Medh - \mathbf{E}_{x,\bh})\big\|} \big\|\Gamma_z^{-1/2}\brho\big\|\, \mathrm{d}z
\\
& \leq
C  \frac{k_n^3 (\log k_n)^{3/2}}{n^{5/4}}\conv0
\end{align*}
by Lemma \ref{Prop.1}. This proves \eqref{Eq.tnx2.Yn.0_0} and, consequently, that $n^{1/2} \langle \Medh - \mathbf{E}_{x,\bh},  \bY_n \rangle = \op(1)$.
\end{proof}

\begin{proof}[Proof of Lemma \ref{Lemm.Tn3}]
According to Lemma 8 in CMS, we have that
\[
\frac{\sqrt{n}}{t_{n,\textbf{E}_{x,\bh}}}\langle \mathbf{E}_{x,\bh}, \bR_n\rangle \inlaw  \mathcal{N}(0,\sigma^2_{\varepsilon}).
\]

From the proof of Proposition \ref{Prop.tnx}, we have that $\sup_j \frac {\langle \mathbf{E}_{x,\bh}, \be_j \rangle^2}{ \lambda_j} \leq 1$. Since the sequence $ \left\{ t_{n,\mathbf{E}_{x,\bh}}\right\}_n$ is strictly increasing, with its terms strictly positive, Lemma  \ref{Prop.1} implies that  $\frac{k_n^3 (\log k_n)^2}{\sqrt nt_{n,\textbf{E}_{x,\bh}}} \conv 0.$ Therefore,  the second part of Proposition 3 in CMS (page 352) gives us that $ \langle \mathbf{E}_{x,\bh}, \bY_n\rangle =\op (n^{-1/2})$. \\

Then, the result will be proven if we show that
\begin{align} \label{Eq.tnx.2}
n^{1/2} \left(
\langle \mathbf{E}_{x,\bh},  \bL_n \rangle
+
\langle \mathbf{E}_{x,\bh}, \bS_n \rangle
\right) =\op (1).
\end{align}

To prove \eqref{Eq.tnx.2}, we analyse separately both terms on the left hand side of this expression. To do this, we follow the steps in some proofs in CMS. Before we do so, notice that, in the computation of $\mathbf{E}_{x,\bh}$, we can take $\bX$ independent from all the remaining variables in the problem. In particular, for every $n \in \mathbb{N}$, we can consider $\bX$ to be independent of $\textbf{W}_n$, where $\textbf{W}_n$ denotes either $\bL_n$ or $\bS_n$. Thus, we have that
\begin{align*}
\E{
\left| \left\langle \E{\mathds{1}_{\lrb{\bX^\bh \leq x}} \bX}, \textbf{W}_n \right\rangle \right|}
=&\;
\E{
\left|
\E{
\langle \mathds{1}_{\lrb{\bX^\bh \leq x}} \bX , \textbf{W}_n \rangle
\Big|\textbf{W}_n
}
\right|
}
\\
\leq&\;
\E{\mathds{1}_{\lrb{\bX^\bh \leq x}} \left|  \langle  \bX , \textbf{W}_n \rangle \right|}
\leq
\E{|\langle  \bX , \textbf{W}_n \rangle  |}.
\end{align*}

Hence, if we show that $\E{|\langle  \bX , \textbf{W}_n \rangle  |} = \op(n^{-1/2})$, then Markov's inequality gives \eqref{Eq.tnx.2}. \\

Lemma 7 in CMS gives the following two bounds:
\[
\E{|\langle \bX,  \bL_n \rangle|}
\leq
\left\{
\begin{array}{l}
|\langle \brho, \be_{k_n}\rangle | \sqrt{ \sum_{j=k_n+1}^\infty  \lambda_j},
\\
\frac{\lambda_{k_n}}{\sqrt{k_n \log k_n}}  \sqrt{ \sum_{j=k_n+1}^\infty  \langle \brho,\be_j\rangle}.
\end{array}
\right.
\]
This inequality, either \ref{Assump:C1} or \ref{Assump:C2}, and \ref{Assump:C5} yield $
 \E{|  \langle  \bX , \bL_n \rangle  |}  =\op(n^{-1/2})$.\\

To handle the term $n^{1/2} \E{|\langle\bX,  \bS_n \rangle|}$, we follow the argument in Proposition 2 in CMS. We have that
\[
\langle \bX, \bS_n \rangle =  \langle \bX, \mathcal{R}_n \brho \rangle +  \langle \bX, \mathcal{S}_n \brho \rangle.
\]
Lemma \ref{Prop.1} implies that $k_n^2 \log k_n= o(n^{1/2})$. Then the reasoning in the last part of the proof of Proposition 2, page 352, in CMS leads to
\[
\E{|\langle \mathcal{R}_n\brho , \bX \rangle|}\leq C \frac{k_n^3(\log k_n)^2}{n} = o(n^{-1/2}).
\]
Concerning the term $\langle \bX, \mathcal{S}_n \brho \rangle$, let us consider the decomposition of $\E{\langle \bX, \mathcal{S}_n \brho \rangle^2}$, which appears in (24) and (25) in CMS. The term in (24) is bounded by the expression in (26) and then the authors of CMS find a bound for each term in (26). The difference in our case is that our term $\E{\langle \bX, \mathcal{S}_n \brho \rangle}^2$ is not divided by $k_n$. Therefore, in order to be able to apply the inequality in the second display on page 350 of CMS (which allows them to bound the second term in (26) in their paper), we need to reinforce the assumption $\lambda_{k_n} k_n \log k_n \conv 0$, used in CMS, to $\lambda_{k_n} k_n^3 \log k_n \conv 0$, which follows from~\ref{Assump:C4}. \\

The bound for the first term in (26) in CMS in our case is
\[
k_n M \frac {k_n} {\log k_n} \lambda_{k_n} k_n^2 \max_{k_n < j \leq k_n+h_n}\left\{\langle \brho, \be_{j}\rangle^2 \right\},
\]
which converges to zero by \ref{Assump:C2} and \ref{Assump:C4}. The convergence to zero of the term in (25) in CMS is proved similarly.
\end{proof}

\section{Supplement to the simulation study}
\label{ap:fursim}

\subsection{Detailed description of simulation scenarios}
\label{ap:fursim:sce}

The different data generating processes used in the simulation study are encoded as follows. For the $k$th simulation scenario S$k$, with functional coefficient $\brho_k$, the deviation from $H_0$ is measured by a deviation coefficient $\delta_d$, with $\delta_0=0$ and $\delta_d>0$ for $d=1,2$. Then, under $H_{k,d}$, we denote data generation by
\[
Y=\inprod{\bX}{\brho_k}+\delta_d \Delta_{\eta_k}(\bX)+\varepsilon,
\]
where $\boldsymbol{\eta}:=(1,2,1,2,2,1,2,3,3)'$ and the deviations from the linear model are constructed by including the nonlinear terms $\Delta_{1}(\bX):=\norm{\bX}$, $\Delta_{2}(\bX):=25\int_0^1\int_0^1\sin(2\pi ts)s(1-s)t(1-t)\bX(s)\bX(t)\,\mathrm{d}s\,\mathrm{d}t$, and $\Delta_{3}(\bX):=\inprod{e^{-\bX}}{\bX^2}$. The error $\varepsilon$ is distributed as a $\mathcal{N}(0,\sigma^2)$, where $\sigma^2$ was chosen such that, under $H_0$, $R^2=\frac{\V{\inprod{\bX}{\brho}}}{\V{\inprod{\bX}{\brho}}+\sigma^2}=0.95$.\\

The description of the simulation scenarios is given in Table \ref{tab:scenarios}. The functional processes $\mathbf{X}(t)$, all of them indexed in $[0,1]$ and discretized in $201$ equidistant points, are the following:
\begin{description}
\item[BM.] Brownian motion, denoted by $\mathbf{B}$, whose eigenfunctions are $\boldsymbol\psi_j(t):=\sqrt{2}\sin\big((j-\frac{1}{2})\pi t\big)$, $j\geq1$.
\item[HHN.] The functional process considered in \cite{Hall2006}, given by $\bX(t)=\sum_{j=1}^{20}\xi_j \boldsymbol{\phi}_j(t)$, where $\boldsymbol{\phi}_j(t):=\sqrt{2}\cos\lrp{j\pi t}$ and $\xi_j$ are independent r.v.'s distributed as $\mathcal{N}\lrp{0,j^{-2l}}$, with $l=1,2$.
\item[BB.] Brownian bridge, defined as $\bX(t)=\mathbf{B}(t)-t\mathbf{B}(1)$. Its eigenfunctions are $\tilde\bpsi_j(t):=\bpsi_{j+\frac{1}{2}}(t)$, $j\geq1$.
\item[OU.] Ornstein--Uhlenbeck process, defined as the zero-mean Gaussian process with covariance given by $\mathbb{C}\mathrm{ov}[\bX(s),\bX(t)]=\frac{\sigma^2}{2\alpha}e^{-\alpha(s+t)}\big(e^{2\alpha\min(s,t)}-1\big)$. We consider $\alpha=\frac{1}{3}$, $\sigma=1$, and $\bX(0)\sim\mathcal{N}(0,\frac{\sigma^2}{2\alpha})$.
\item[GBM.] Geometric Brownian motion, defined as $\bX(t)=s_0\exp\big\{\big(\mu-\frac{\sigma^2}{2}\big)t+\sigma\mathbf{B}(t)\big\}$. We consider $\sigma=1$, $\mu=\frac{1}{2}$, and $s_0=2$.
\end{description}

\begin{table}[h!]
\centering
\footnotesize
\setlength{\tabcolsep}{2pt}
\begin{tabular}{c|c|c|l}
\toprule\toprule
Scenario & \multicolumn{1}{c|}{Coefficient $\brho(t)$} & \multicolumn{1}{c|}{Process $\bX$} & \multicolumn{1}{c}{Deviation}  \\
\midrule
S1  & $(2\bpsi_1(t)+4\bpsi_2(t)+5\bpsi_3(t))/\sqrt{2}$ & BM & $\Delta_1$, $\delta=\lrp{0,\frac{1}{4},\frac{3}{4}}'$ \\\midrule 

S2  & $(2\tilde\bpsi_1(t)+4\tilde\bpsi_2(t)+5\tilde\bpsi_3(t))/\sqrt{2}$ & BB & $\Delta_2$, $\delta=\lrp{0,-2,-\frac{15}{2}}'$ \\\midrule 

S3  & $(2\bpsi_2(t)+4\bpsi_3(t)+5\bpsi_7(t))/\sqrt{2}$ & BM  & $\Delta_1$, $\delta=\lrp{0,-\frac{1}{5},-\frac{1}{2}}'$ \\\midrule 

S4  & $\sum_{j=1}^{20}2^{3/2}(-1)^jj^{-2}\bphi_j(t)$ & HHN ($l=1$) & $\Delta_2$, $\delta=(0,-1,-3)'$ \\\midrule 

S5  & $\sum_{j=1}^{20}2^{3/2}(-1)^jj^{-2}\bphi_j(t)$  & HHN ($l=2$) & $\Delta_2$, $\delta=(0,-1,-3)'$ \\\midrule 

S6  & $\log\lrp{15t^2 + 10} + \cos(4\pi t)$ & BM & $\Delta_1$, $\delta=\lrp{0,\frac{1}{5},1}'$ \\\midrule 

S7  & $\sin(2\pi t)-\cos(2\pi t)$ & OU & $\Delta_2$, $\delta=\lrp{0,-\frac{1}{4},-1}'$ \\\midrule 

S8  & $t-\lrp{t-\frac{3}{4}}^2$ & OU & $\Delta_3$, $\delta=\lrp{0,-\frac{1}{100},-\frac{1}{10}}'$ \\\midrule 

S9  & $\pi^2\lrp{t^2 - \frac{1}{3}}$ & GBM &  $\Delta_3$, $\delta=\lrp{0,\frac{1}{2},\frac{5}{2}}'$ \\ 
\bottomrule\bottomrule
\end{tabular}
\caption{\small Simulation scenarios and deviations from the null hypothesis.\label{tab:scenarios}}
\end{table}

The first scenario, S1, contains a $\brho$ based on example (a) in Section 5 of \cite{Cardot2003a}, which is a linear combination of the first three eigenfunctions of the Brownian motion. Variations on the same idea -- a $\brho$ that is a finite linear combination of the eigenfunctions of the functional process -- are employed in the next four scenarios. S2 considers a Brownian bridge as the functional process. S3 takes linear combinations of eigenfunctions associated to smaller eigenvalues to construct $\brho$. S4 and S5 collect the process and coefficients used in Section 5 of \cite{Hall2006} for $l=1$ and $l=2$, respectively, which is a finite-dimensional smooth process. S6 is example (b) in \cite{Cardot2003a}, which is not expressible as a finite combination of eigenfunctions. The remaining scenarios follow this idea: S7 and S8 with Ornstein--Uhlenbeck processes (as in Section 4.2 of \cite{Garcia-Portugues:flm}) and S9 with geometric Brownian motion. 

\begin{figure}[h!]
\centering
\includegraphics[width=\textwidth]{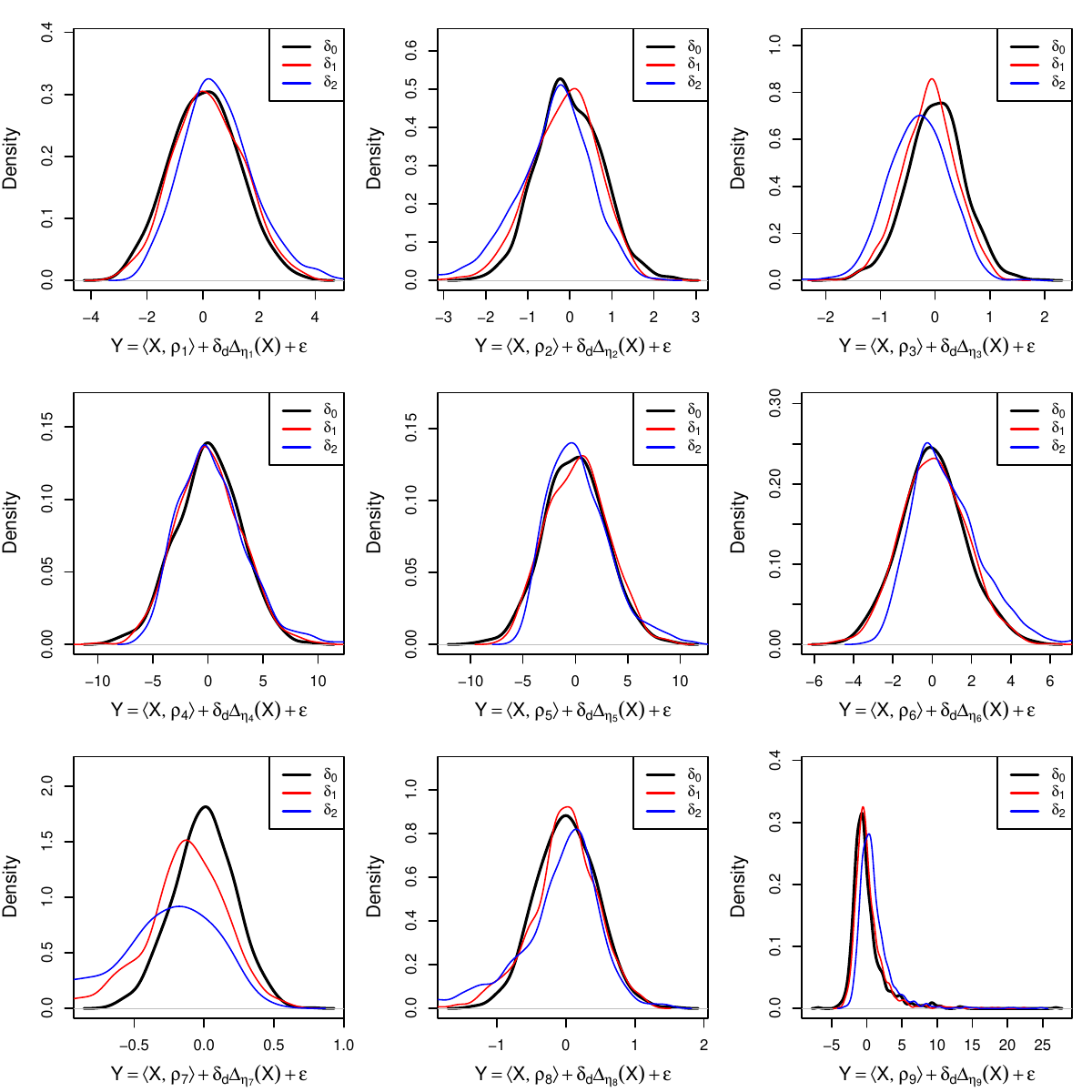}
\caption{\small Densities of the responses $Y$ under $H_{k,d}$, with $k=1,\ldots,9$ from left to right and up to down, and $d=0,1,2$.\label{fig:dens}}
\end{figure}

The deviation coefficients $\delta_d$, $d=1,2$, for the scenarios in Table \ref{tab:scenarios} were chosen by comparing the densities of the response $Y$ under the null and alternative. This comparison provides a graphical visualization of the difficulty in distinguishing between the hypotheses (Figure \ref{fig:dens}). The actual choice of the coefficients aims to make this distinction a challenging task.

\subsection{Supplementary results for the composite hypothesis}
\label{ap:fursim:comp}

Table \ref{tab:results50} shows the empirical sizes and powers for $n=50$. The numerical summaries for the data-driven $d_n$ are given in Table \ref{tab:p}.

\begin{table}[H]
\centering
\footnotesize
\setlength{\tabcolsep}{1pt}
\begin{tabular}{l|>{\centering}m{1.00cm}>{\centering}m{1.00cm}>{\centering}m{1.00cm}|>{\centering}m{1.00cm}>{\centering}m{1.00cm}>{\centering}m{1.00cm}|c}
  \toprule  \toprule
\multirow{2}{*}{$H_{k,\delta}$} & \multicolumn{7}{c}{$n=50$} \\ \cmidrule{2-8}
&  CvM$_{1}$ & CvM$_{3}$ & CvM$_{5}$ & KS$_{1}$ & KS$_{3}$ & KS$_{5}$ & PCvM  \\  \midrule
$H_{1,0}$ & $6.0$ & $4.7$ & $4.1$ & $6.2$ & $5.6$ & $4.7$ & $4.6$ \\
$H_{2,0}$ & $6.0$ & $5.2$ & $4.8$ & $6.5$ & $5.8$ & $5.5$ & $3.5$ \\
$H_{3,0}$ & $7.9$ & $6.7$ & $6.0$ & $8.6$ & $8.0$ & $7.4$ & $7.2$ \\
$H_{4,0}$ & $6.0$ & $5.1$ & $4.3$ & $6.6$ & $5.3$ & $4.7$ & $4.2$ \\
$H_{5,0}$ & $5.7$ & $4.2$ & $3.7$ & $5.8$ & $4.8$ & $4.3$ & $4.8$ \\
$H_{6,0}$ & $5.6$ & $4.3$ & $3.6$ & $5.7$ & $4.7$ & $4.4$ & $4.9$ \\
$H_{7,0}$ & $6.2$ & $5.0$ & $4.5$ & $6.6$ & $5.6$ & $5.4$ & $5.2$ \\
$H_{8,0}$ & $5.4$ & $4.2$ & $3.6$ & $5.6$ & $4.7$ & $4.4$ & $4.5$ \\
$H_{9,0}$ & $6.3$ & $5.1$ & $4.8$ & $6.1$ & $5.2$ & $4.9$ & $6.2$ \\
\midrule
$H_{1,1}$ & $29.4$ & $29.2$ & $27.7$ & $22.1$ & $21.4$ & $21.0$ & $35.1$ \\
$H_{2,1}$ & $63.2$ & $77.7$ & $79.5$ & $54.5$ & $65.2$ & $66.7$ & $84.9$ \\
$H_{3,1}$ & $65.5$ & $70.2$ & $69.6$ & $53.9$ & $58.0$ & $57.6$ & $80.5$ \\
$H_{4,1}$ & $17.0$ & $17.0$ & $16.5$ & $13.5$ & $13.4$ & $12.6$ & $20.3$ \\
$H_{5,1}$ & $23.3$ & $21.1$ & $20.3$ & $17.5$ & $16.1$ & $15.1$ & $24.7$ \\
$H_{6,1}$ & $13.1$ & $11.6$ & $10.7$ & $10.7$ & $9.6$ & $9.0$ & $14.2$ \\
$H_{7,1}$ & $94.7$ & $98.8$ & $98.8$ & $92.0$ & $96.0$ & $96.0$ & $99.0$ \\
$H_{8,1}$ & $53.6$ & $52.8$ & $52.4$ & $34.3$ & $33.8$ & $33.3$ & $55.2$ \\
$H_{9,1}$ & $8.1$ & $6.8$ & $6.5$ & $7.8$ & $7.0$ & $6.6$ & $8.7$ \\
\midrule
$H_{1,2}$ & $88.0$ & $96.7$ & $96.9$ & $81.2$ & $90.2$ & $90.3$ & $98.7$ \\
$H_{2,2}$ & $77.4$ & $95.8$ & $97.6$ & $73.8$ & $91.0$ & $93.3$ & $98.3$ \\
$H_{3,2}$ & $90.6$ & $98.5$ & $98.7$ & $87.2$ & $95.1$ & $95.5$ & $99.5$ \\
$H_{4,2}$ & $55.0$ & $68.2$ & $70.0$ & $46.1$ & $54.9$ & $56.7$ & $79.7$ \\
$H_{5,2}$ & $81.6$ & $84.6$ & $84.1$ & $70.0$ & $71.7$ & $71.1$ & $87.7$ \\
$H_{6,2}$ & $86.0$ & $95.7$ & $96.0$ & $78.1$ & $87.6$ & $88.3$ & $97.8$ \\
$H_{7,2}$ & $95.3$ & $99.0$ & $99.0$ & $92.8$ & $96.2$ & $96.3$ & $99.2$ \\
$H_{8,2}$ & $66.7$ & $67.8$ & $67.5$ & $31.8$ & $33.1$ & $33.2$ & $70.6$ \\
$H_{9,2}$ & $52.8$ & $56.2$ & $55.4$ & $46.7$ & $50.3$ & $50.5$ & $63.7$ \\
\bottomrule\bottomrule
\end{tabular}
\caption{\small Empirical sizes and powers (in percentages) of the CvM, KS, and PCvM tests with $\alpha=0.05$, sample size $n=50$, and estimation of $\brho$ by data-driven FPC ($d_n$ chosen by SICc). KS and CvM tests are shown with $1$, $3$, and $5$ projections. \label{tab:results50}}
\end{table}

\begin{table}[h]
\centering
\footnotesize
\setlength{\tabcolsep}{1.5pt}
\begin{tabular}{l|ccc||ccc||ccc}
  \toprule  \toprule
\multirow{2}{*}{$H_{k,\delta}$} & \multicolumn{3}{c||}{$n=50$} & \multicolumn{3}{c||}{$n=100$} & \multicolumn{3}{c}{$n=250$} \\ \cmidrule{2-10}
&  $\delta=0$ & $\delta=1$ & $\delta=2$ &  $\delta=0$ & $\delta=1$ & $\delta=2$ &  $\delta=0$ & $\delta=1$ & $\delta=2$ \\  \midrule
$H_{1,\delta}$ & 3.4 (0.7) & 3.4 (0.6) & 3.3 (0.5) & 3.4 (0.6) & 3.3 (0.6) & 3.2 (0.5) & 3.3 (0.5) & 3.3 (0.5) & 3.2 (0.4) \\
$H_{2,\delta}$ & 3.7 (0.9) & 3.5 (0.7) & 3.0 (0.6) & 3.7 (0.8) & 3.5 (0.7) & 3.1 (0.4) & 3.6 (0.8) & 3.4 (0.6) & 3.1 (0.3) \\
$H_{3,\delta}$ & 7.8 (1.1) & 7.6 (1.2) & 6.2 (1.9) & 8.1 (1.0) & 7.9 (0.9) & 7.5 (0.8) & 8.1 (0.9) & 7.9 (0.9) & 7.5 (0.7) \\
$H_{4,\delta}$ & 2.0 (0.7) & 2.0 (0.7) & 1.8 (0.7) & 2.3 (0.6) & 2.3 (0.6) & 2.0 (0.6) & 2.6 (0.7) & 2.6 (0.7) & 2.3 (0.6) \\
$H_{5,\delta}$ & 1.5 (0.6) & 1.4 (0.6) & 1.3 (0.5) & 1.7 (0.6) & 1.7 (0.6) & 1.5 (0.6) & 2.0 (0.3) & 2.0 (0.4) & 1.8 (0.5) \\
$H_{6,\delta}$ & 1.6 (0.8) & 1.6 (0.8) & 1.4 (0.6) & 1.9 (0.8) & 1.9 (0.8) & 1.6 (0.7) & 2.5 (1.0) & 2.5 (1.0) & 2.1 (0.7) \\
$H_{7,\delta}$ & 4.2 (0.7) & 3.2 (0.6) & 1.5 (0.7) & 4.3 (0.7) & 3.4 (0.6) & 1.8 (0.8) & 4.4 (0.7) & 3.7 (0.5) & 2.4 (0.7) \\
$H_{8,\delta}$ & 2.1 (0.4) & 1.9 (0.5) & 1.2 (0.5) & 2.2 (0.5) & 2.0 (0.4) & 1.1 (0.4) & 2.3 (0.5) & 2.0 (0.4) & 1.1 (0.4) \\
$H_{9,\delta}$ & 3.0 (1.0) & 3.0 (1.0) & 2.8 (0.9) & 3.5 (1.1) & 3.4 (1.1) & 3.2 (1.0) & 4.3 (1.0) & 4.3 (1.0) & 3.9 (1.0) \\\bottomrule\bottomrule
\end{tabular}
\caption{\small Averages of the SICc-driven $d_n$ for the different models, sample sizes, and deviations from the null hypothesis. Standard deviations are given in parentheses. \label{tab:p}}
\end{table}

\subsection{Dependence on the projection process}
\label{ap:fursim:projs}

The goodness-of-fit tests depend clearly on the way random directions $\bh$ are chosen. In order to explore its practical influence, we replicated the results in Section \ref{sec:simu} for two new processes. We have three different projecting processes in total: (\textit{i}) the data-driven process described in Section \ref{sec:testing}; (\textit{ii}) the same process, but with constant variance coefficients $\eta_j\sim\mathcal{N}(0,1)$; (\textit{iii}) an Ornstein--Uhlenbeck process with $\alpha=\tfrac{1}{2}$ and $\sigma=1$ that is completely independent from the sample. Note that process (\textit{ii}) generates more noisy random directions than (\textit{i}) since all the FPC's of the sample are equally weighted. 

\begin{figure}[h!]
\centering
\includegraphics[width=0.315\textwidth]{sizes_cvm_100_sd0.pdf}
\includegraphics[width=0.315\textwidth]{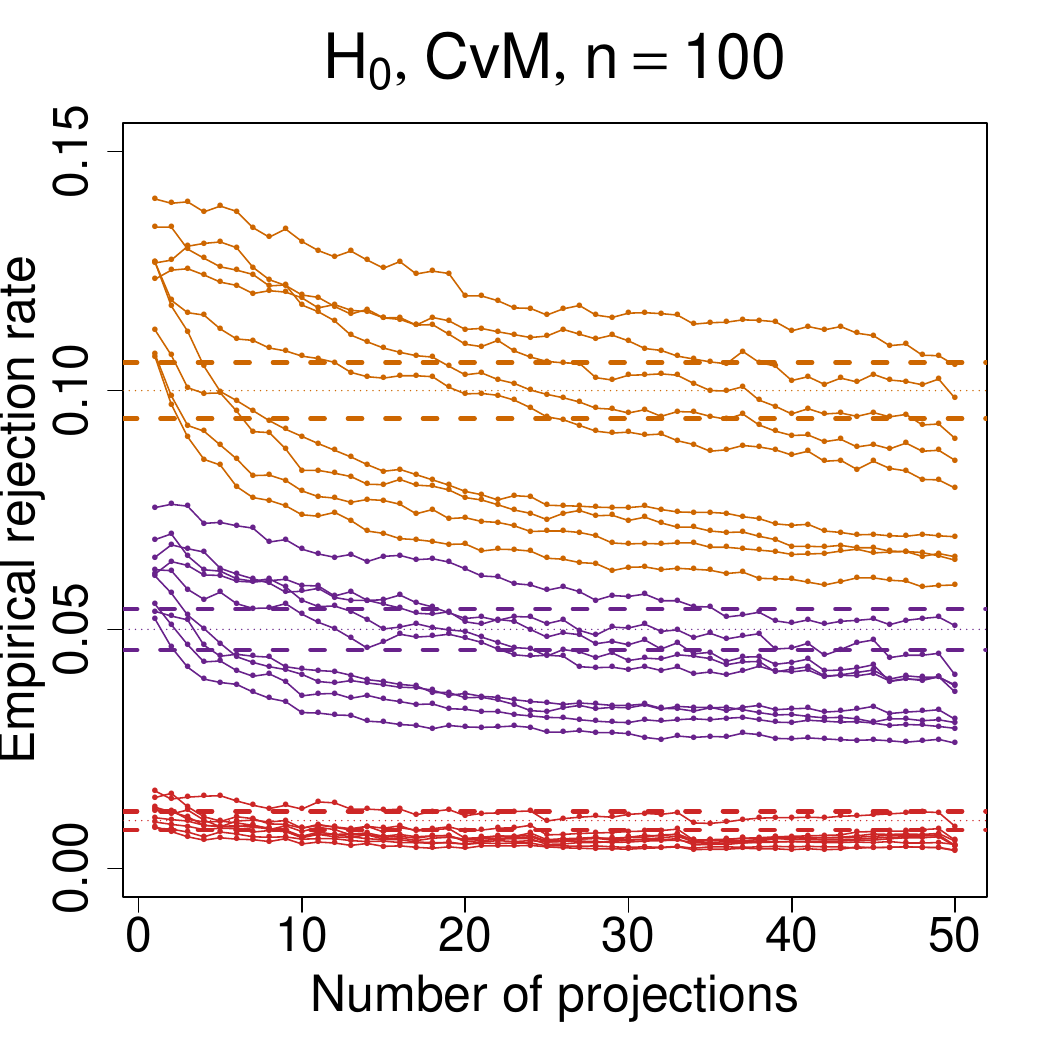}
\includegraphics[width=0.315\textwidth]{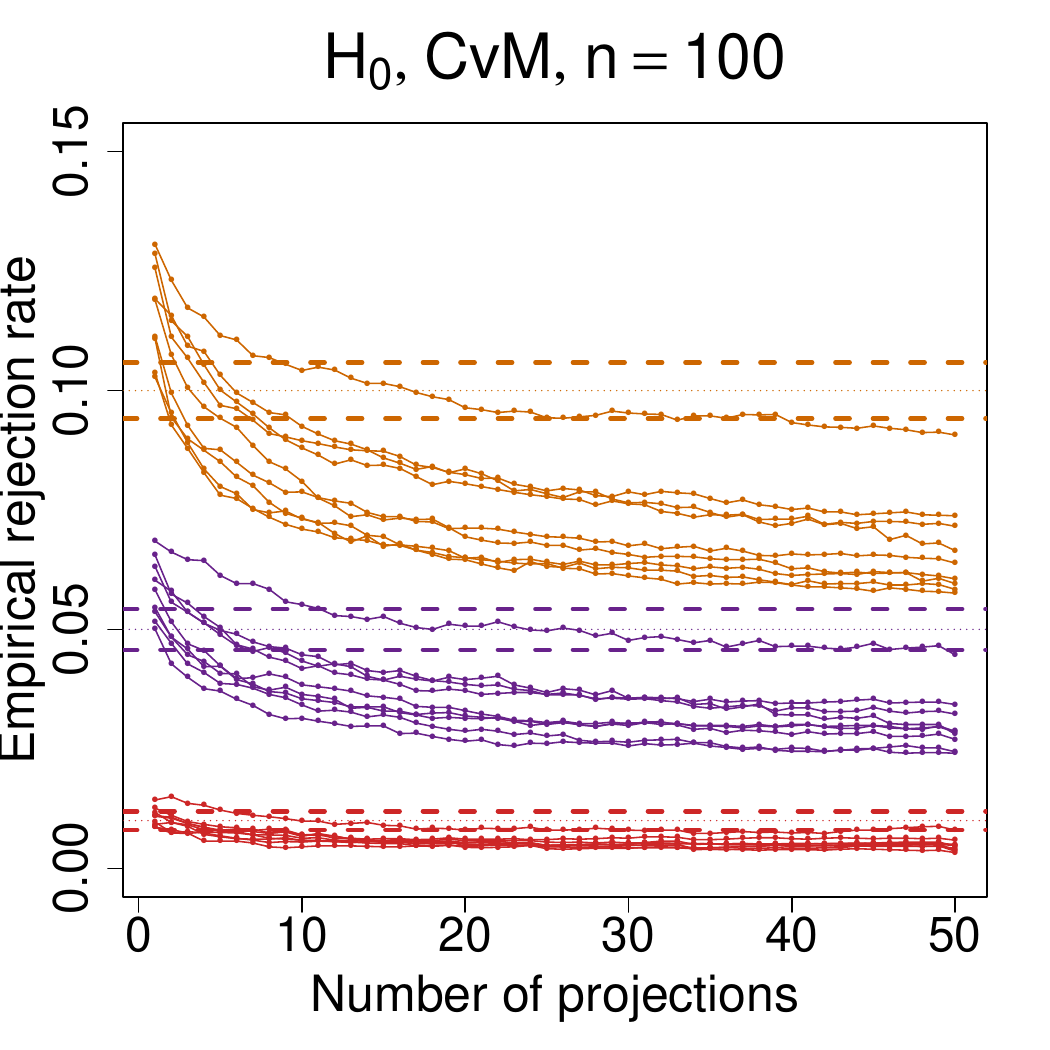}\\
\includegraphics[width=0.315\textwidth]{sizes_ks_100_sd0.pdf}
\includegraphics[width=0.315\textwidth]{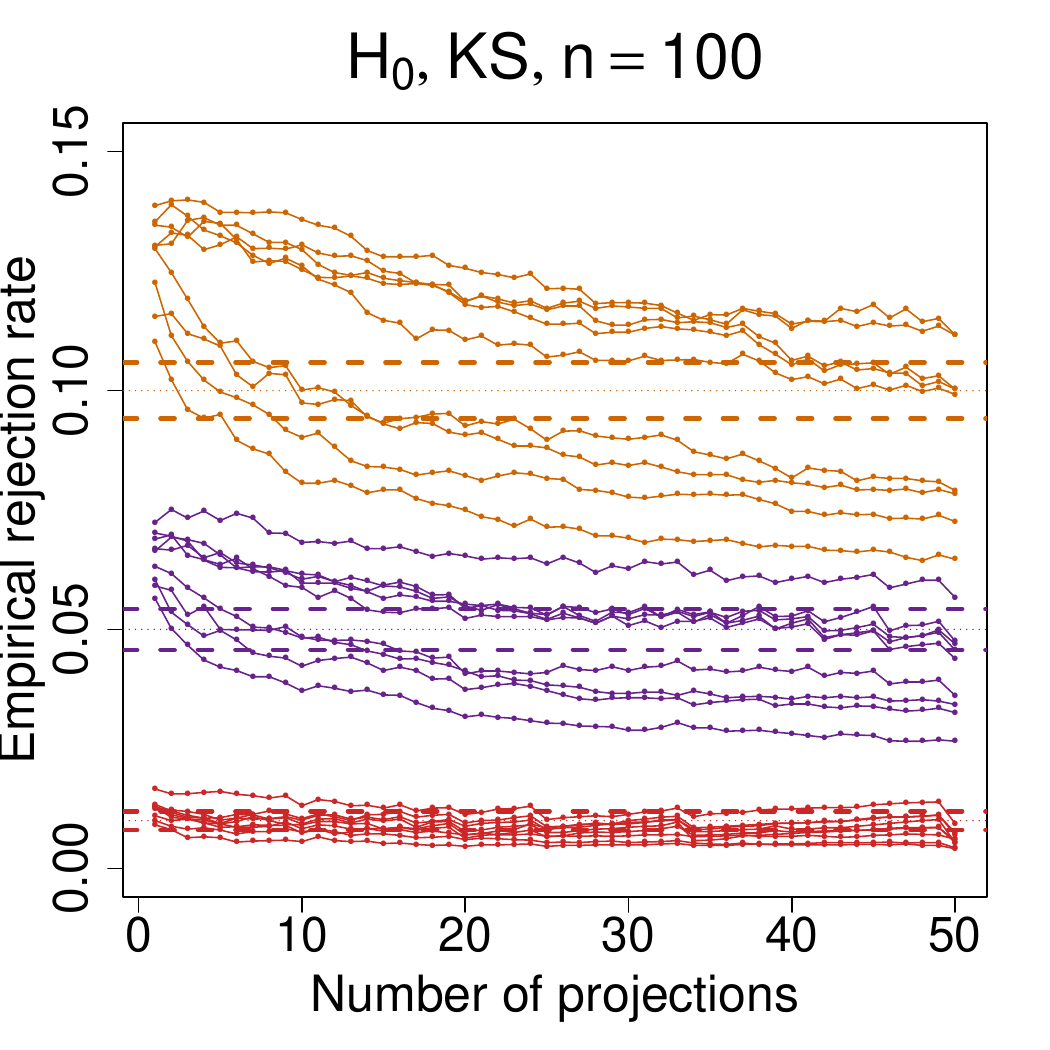}
\includegraphics[width=0.315\textwidth]{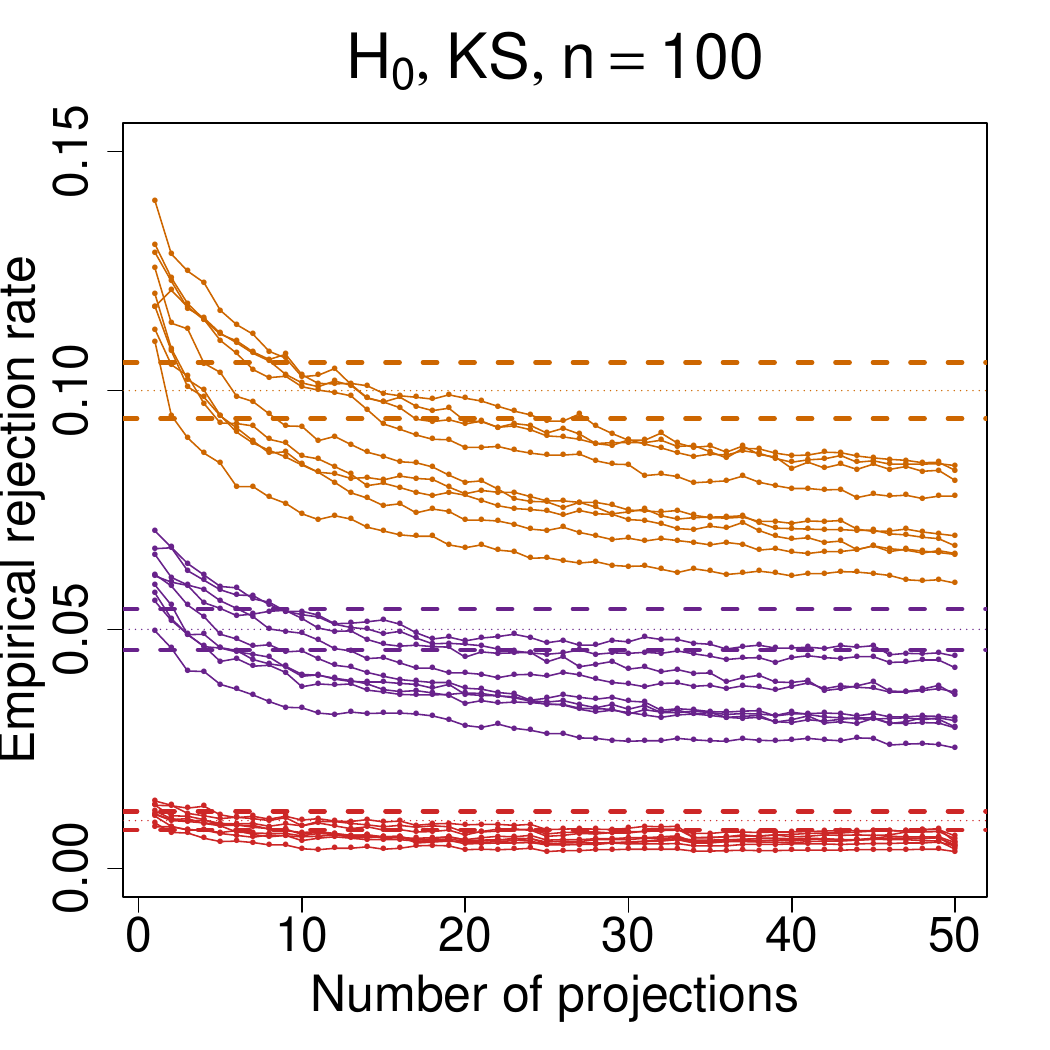}
\caption{\small Empirical sizes of the CvM (upper row) and KS (lower row) tests for the scenario S$k$, $k=1,\ldots,9$, depending on the number of projections $K=1,\ldots,50$. From left to right, columns represent the data generating processes (\textit{i}), (\textit{ii}), and (\textit{iii}). The sample size is $n=100$. The empirical sizes associated with the significance levels $\alpha=0.01,0.05,0.10$ are coded in red, purple, and orange, respectively. Dashed thick lines represent the asymptotic $95 \%$ confidence interval for the proportion $\alpha$ obtained from $M$ replicates.\label{fig:sizeproj:projs}}
\end{figure}

Figures \ref{fig:sizeproj:projs} and \ref{fig:powpro:projsj} show the empirical levels and powers of the tests based on processes (\textit{i}), (\textit{ii}), and (\textit{iii}) for $n=100$. Relatively minor changes can be observed between (\textit{i}) and (\textit{iii}), with the main features described in Section \ref{sec:simu} being consistent: L-shaped patterns in the size curves, mild decrements for the power curves, occasional bumps yielding power gains, and domination of CvM over KS. The results for both processes show no main changes, and both indicate that $K\in\{1,\ldots,5\}$ is a reasonable choice with respect to size and power. The big picture for (\textit{ii}) is similar, albeit with\nopagebreak[4] more spread and variable level curves, and power curves dominated by those of (\textit{i}) and (\textit{iii}). \\

The presented empirical results indicate that less variable random directions seem to yield better behaviour for the tests and that the data-driving process given in Section \ref{sec:testing} is a sensible alternative. However, more thorough research into the selection of the projecting process -- beyond the scope of this paper -- is required.

\begin{figure}[h!]
\centering
\includegraphics[width=0.315\textwidth]{powers_cvm_100_1_sd0.pdf}
\includegraphics[width=0.315\textwidth]{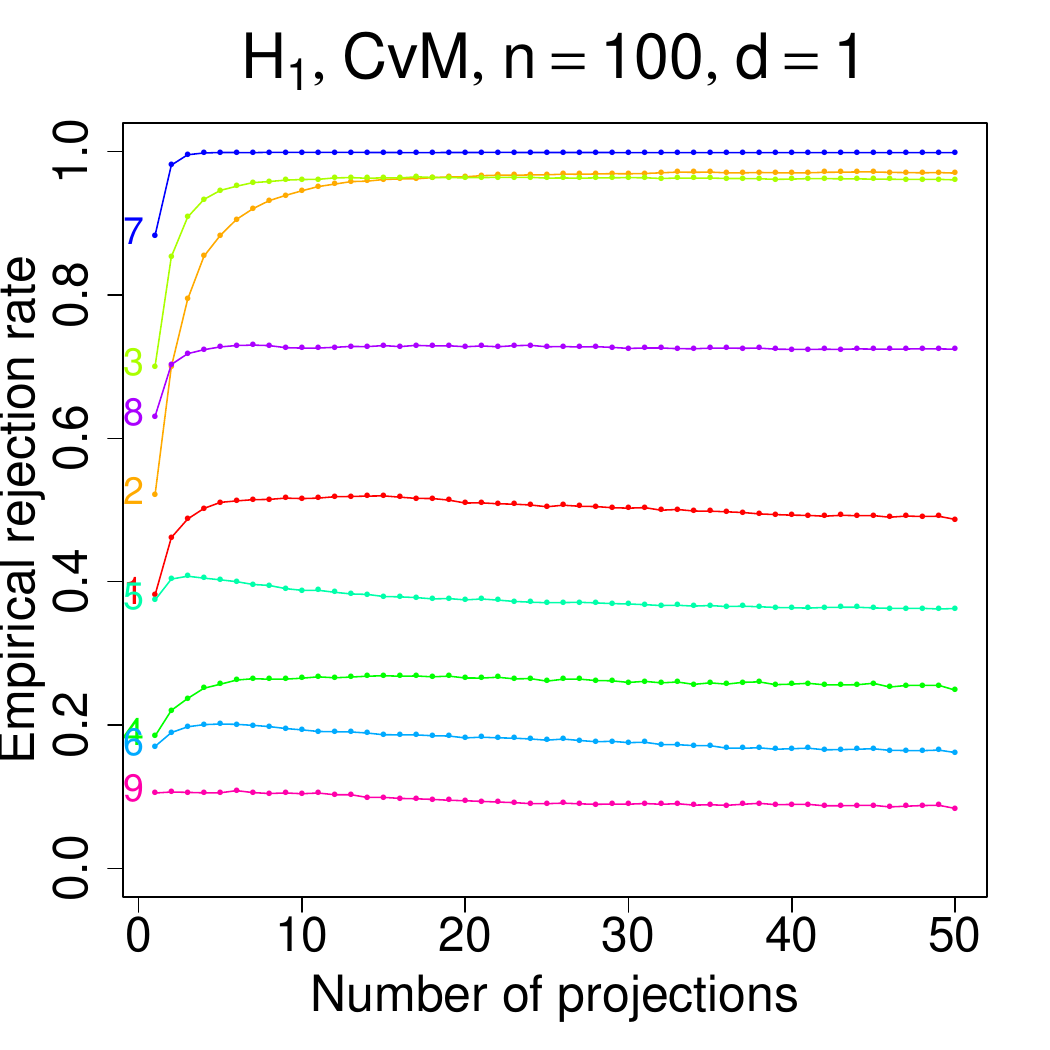}
\includegraphics[width=0.315\textwidth]{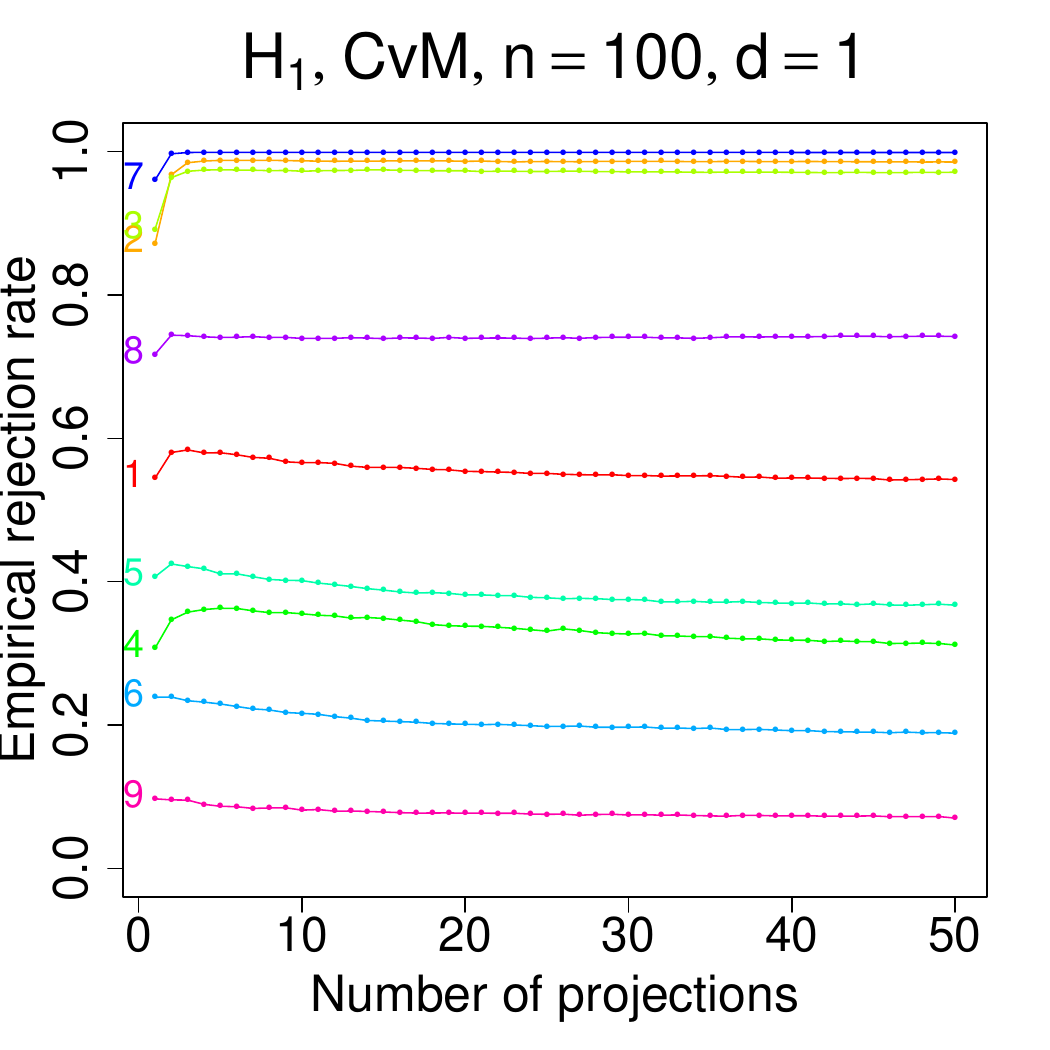}\\
\includegraphics[width=0.315\textwidth]{powers_ks_100_1_sd0.pdf}
\includegraphics[width=0.315\textwidth]{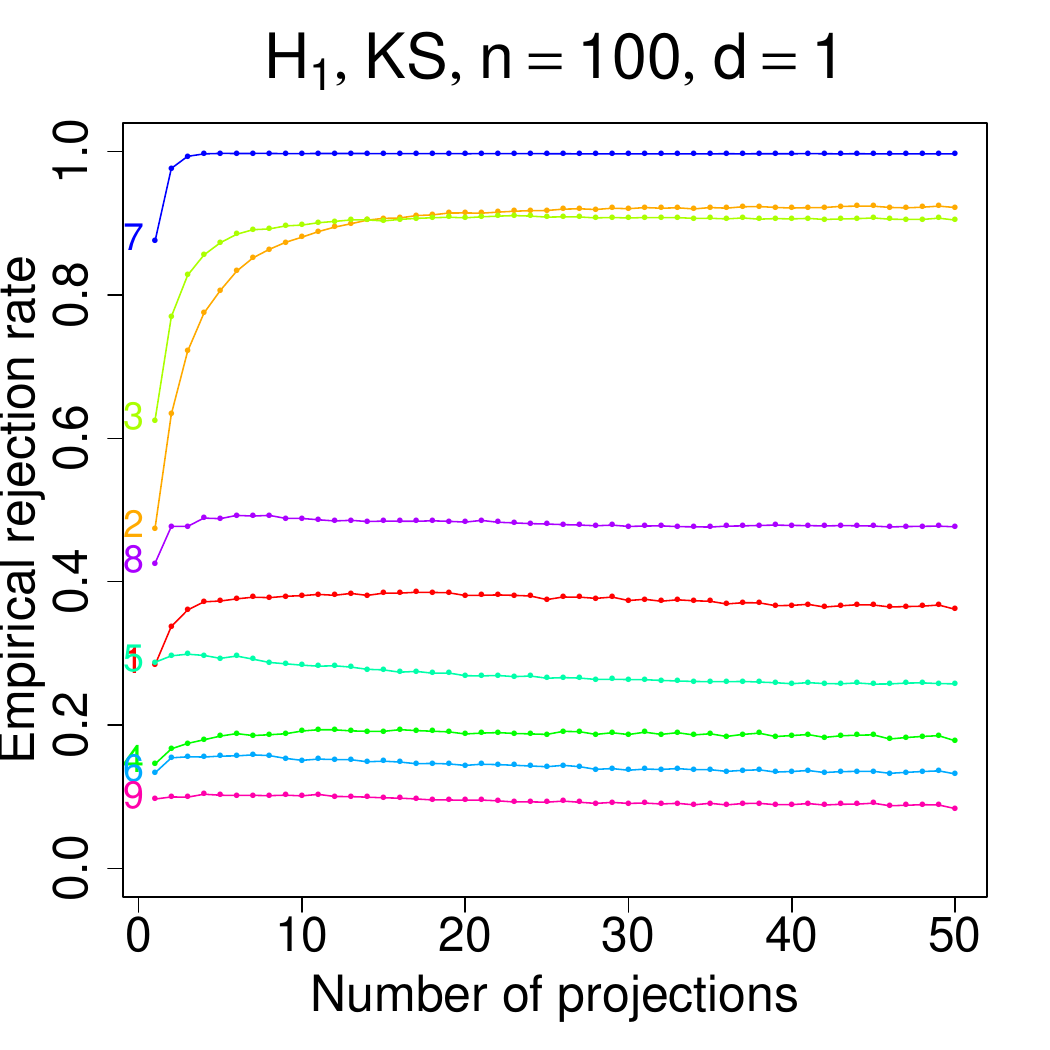}
\includegraphics[width=0.315\textwidth]{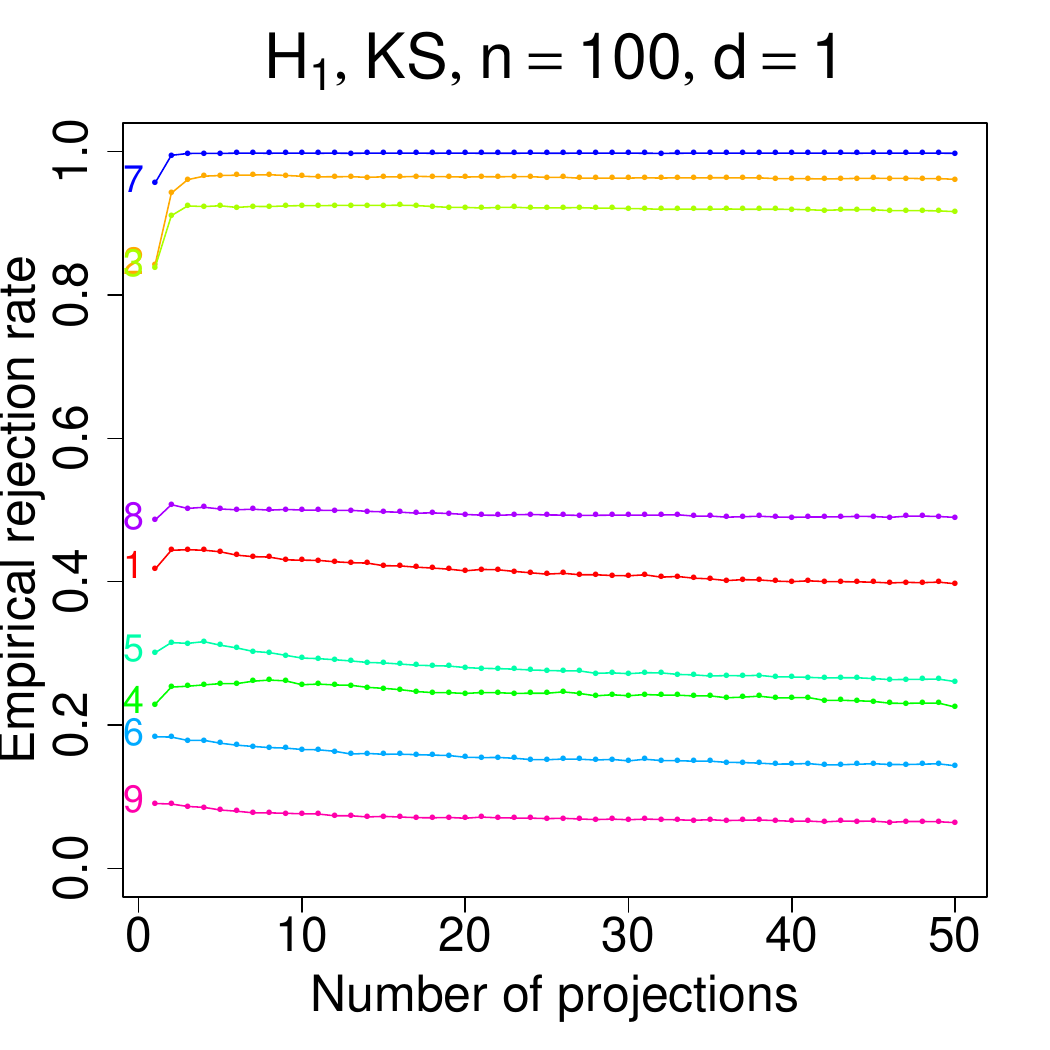}
\caption{\small Empirical powers of the CvM (first row) and KS (second two) tests for the scenario S$k$, $k=1,\ldots,9$, depending on the number of projections $K=1,\ldots,50$. From left to right, columns represent the data generating processes (\textit{i}), (\textit{ii}), and (\textit{iii}). The significance level is $\alpha=0.05$, the sample size is $n=100$, and the deviation index is $d=1$. \label{fig:powpro:projsj}}
\end{figure}

\subsection{\texorpdfstring{Remarks on discrete $p$-values and FDR correction}{Remarks on discrete p-values and FDR correction}}
\label{ap:fursim:fdr}

The FDR correction of a set of $K$ \textit{continuous} $p$-values $p_1,\ldots,p_K$, arising from $K$ hypothesis tests, results in the FDR $p$-value $p_{\mathrm{FDR},K}:=\min_{i=1,\ldots,K}\frac{K}{i}p_{(i)}$. Under the null hypotheses for all the tests, the level of the test that rejects the null hypothesis if $p_{\mathrm{FDR},K}<\alpha$ is $\alpha$ at most \citep{Benjamini2001}. When using a resampling strategy for approximating the $p$-values $p_1,\ldots,p_K$, for example, by considering $B$ bootstrap replicates, we end up with a collection of \textit{discrete} $p$-values $\hat p_1,\ldots,\hat p_K$. This has a notable influence on $p_{\mathrm{FDR},K}$, resulting in an increment of the type I error that is magnified for moderate and large $K$. \\

Under the null, $\hat p_i$ is approximately distributed as a $\mathcal{U}(\{0,\frac{1}{B},\ldots,\frac{B}{B}\})$, $i=1,\ldots,K$. If independence between $\hat p_1,\ldots,\hat p_K$ is assumed, then the rejection rate of $p_{\mathrm{FDR},K}<\alpha$ is at least $\mathbb{P}[\hat p_{(1)}=0]\approx 1-\big(\tfrac{B}{B+1}\big)^K$ \textit{no matter what significance level $\alpha$ is chosen}. For example, if $K=25$ and $B=500$, under the null hypothesis, $p_{\mathrm{FDR},K}<\alpha$ will reject the null at least $4.87\%$ of the time for any $\alpha$. For $K=5$ and $B=1000$, the lower bound for the rejection percentage drops to $0.499\%$. This simple argument illustrates the more demanding precision (larger $B$'s) required in the approximated $p$-values when\nolinebreak[4] $K$ grows. \\

In order to gain more insights into the problematic dependence of $K$ and $B$, we have conducted the following experiment, aimed at reproducing a comparable situation to our testing in practice.
\begin{enumerate}[label=\arabic{*}.]
\item Simulate $K$ discrete $p$-values independently: $\hat p_i\sim\mathcal{U}(\{0,\frac{1}{B},\ldots,\frac{B}{B}\})$ and compute $p_{\mathrm{FDR},K}$.
\item Repeat the previous step $M=10000$ times and obtain the empirical rejection rates of $p_{\mathrm{FDR},K}<\alpha$ for $\alpha=0.10,0.05,0.01$. Plot the rejection curves as a function of $K=1,\ldots,50$. Repeat this five times to account for variability.
\item Repeat the above steps for different $B$'s.
\end{enumerate}

The results of the experiment, namely the empirical rejection curves as a function of $K$ for different $\alpha$'s and $B$'s, are shown in the left panel of Figure \ref{fig:fdrpvalvsdiscretization}. A sawtooth pattern of over-rejection appears for curves with $B=500, 1000$ (yellow and light green curves) and values of $K$ larger than $7-10$, resulting in significant violations of the confidence intervals for the proportions $\alpha$ for $K$'s in the range $[10, 50]$. When $B$ is larger (dark green and blue curves), the rejection rates remain more stable and inside the confidence intervals for $K$ up to $50$. This highlights that, given the effect that both $K$ and $B$ have on the computation proficiency of the test, a reasonable compromise on the choice of $K$ and $B$ that respects the type I error is a low value for $K$, say $K\in\{1,\ldots,5\}$, and a relatively large value of $B$, such as $B\geq 1000$. The right panel of Figure \ref{fig:fdrpvalvsdiscretization} shows the resulting levels if the positive correction $\frac{\hat{p}B+1}{B+1}$ is applied for avoiding null $p$-values. The same conclusions can be extracted, the main change being under-rejections instead of over-rejections.

\begin{figure}[h!]
\centering
\includegraphics[width=0.475\textwidth]{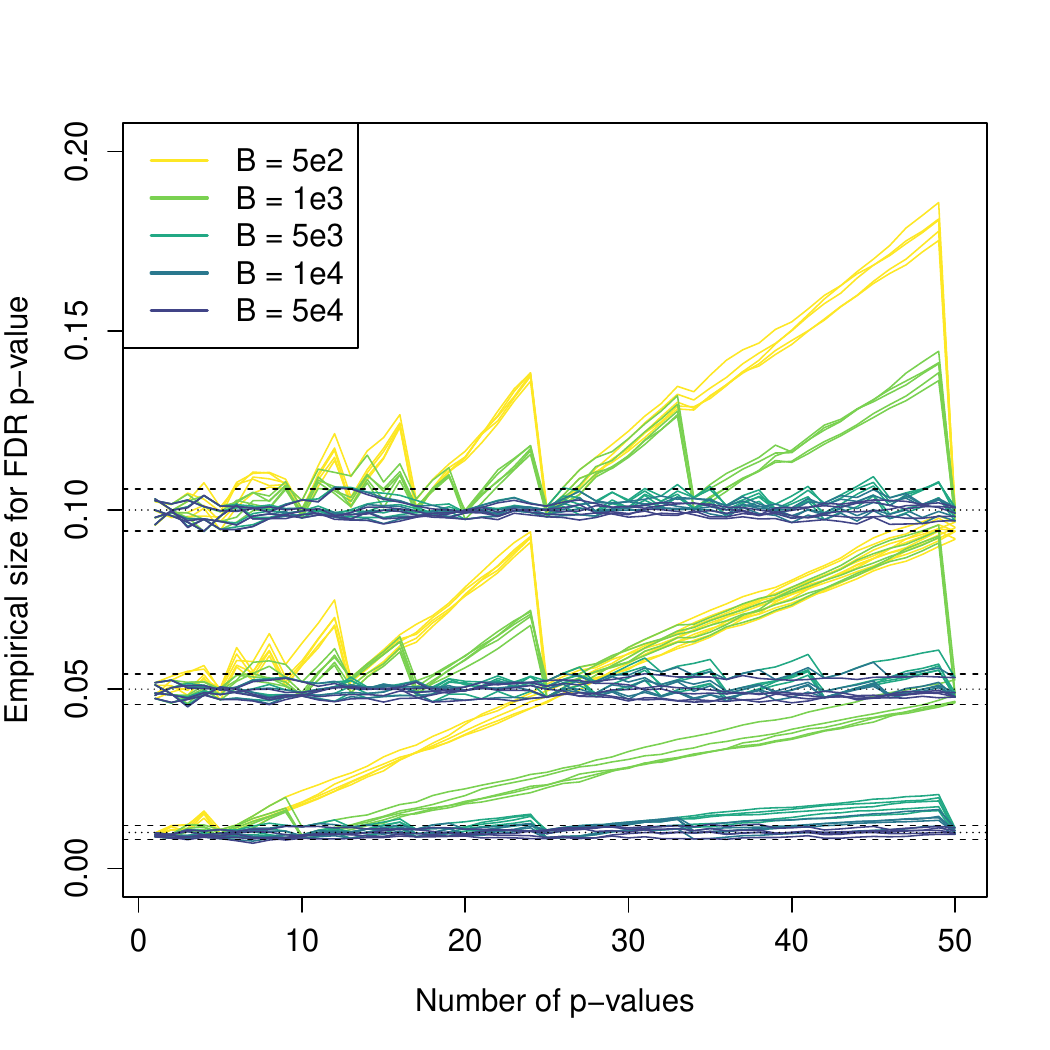}\includegraphics[width=0.475\textwidth]{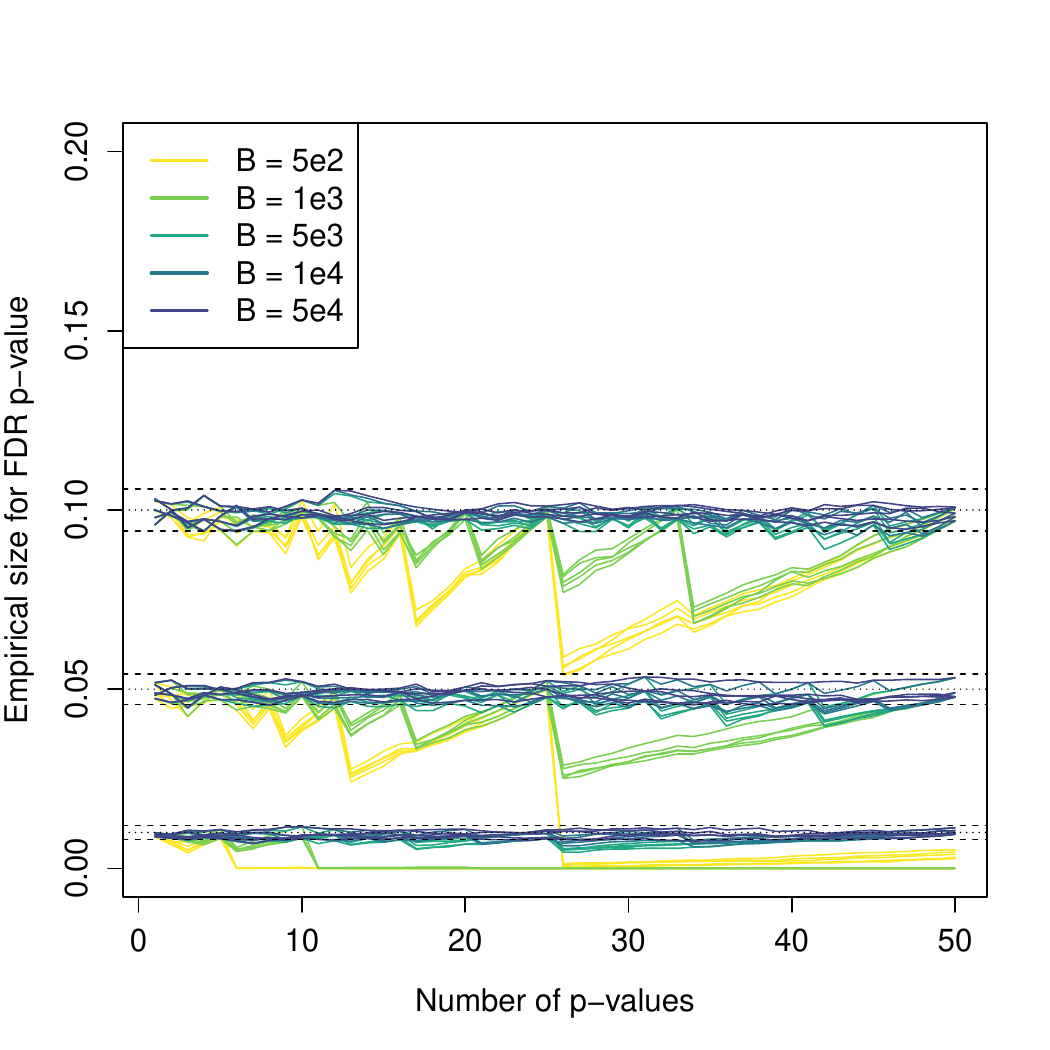}
\caption{\small Left: empirical levels of the test with rejection rule $p_{\mathrm{FDR},K}<\alpha$, as a function of $K$. Right: empirical levels employing a positive correction for the $p$-values.}
\label{fig:fdrpvalvsdiscretization}
\end{figure}


\fi

\end{document}